\documentclass[11pt]{article}

\usepackage{fullpage,amsfonts,amsmath,amsthm,amssymb,mathtools, mathrsfs,graphicx,upgreek}
\usepackage[margin=1in,marginparwidth=0.75in]{geometry}

\usepackage{aliascnt}
\usepackage{hyperref}
\usepackage{cleveref}
\crefname{subsection}{subsection}{subsections}

\let\oldtheorem\newtheorem
\RenewDocumentCommand{\newtheorem}{s m o m O{}}{%
\IfBooleanTF{#1}%
{\oldtheorem{#2}{#4}}%
{\IfNoValueTF{#3}{\oldtheorem{#2}{#4}[#5]}%
{\newaliascnt{#2}{#3}%
\oldtheorem{#2}[#2]{#4}%
\aliascntresetthe{#2}}}}

\usepackage{epsfig}
\usepackage{bm}
\usepackage{comment}

\usepackage{tablefootnote}

\usepackage{enumitem}

\numberwithin{equation}{section}

\usepackage{algorithm}
\usepackage{mathtools}
\usepackage[noend]{algpseudocode}

\newtheorem{theorem}{Theorem}[section]
\newtheorem{proposition}[theorem]{Proposition}
\newtheorem{corollary}[theorem]{Corollary}

\newtheorem{lemma}[theorem]{Lemma}

\theoremstyle{definition}
\newtheorem{definition}{Definition}

\newcommand{\Pois}{\mathrm{Pois}}

\newtheorem{remark}[theorem]{Remark}

\usepackage{bbm}

\usepackage{algpseudocode}
\usepackage{datetime}

\usepackage[english]{babel}
\usepackage[autostyle, english = american]{csquotes}
\MakeOuterQuote{"}

\newcommand{\scr}{\mathcal}
\newcommand{\mb}{\mathbb}
\newcommand{\til}{\widetilde}

\newcommand{\eps}{\varepsilon}

\newcommand{\erew}{r}
\newcommand{\srew}{\text{val}}
\newcommand{\ac}{a}

\newcommand{\OPT}{\textsc{OPT}}

\newcommand{\LPOPT}{\textsc{LP}}

\newcommand{\MLPOPT}{\textsc{LPOPT}}

\newcommand{\poly}{\text{poly}}

\newcommand{\Ber}{\textup{Ber}}

\newcommand{\citet}{\cite}
\newcommand{\citep}{\cite}

\newcommand{\safe}{\mathsf{safe}}

\makeatletter
\def\keywords{\xdef\@thefnmark{}\@footnotetext}
\makeatother

\usepackage[usenames,dvipsnames]{xcolor}

\title{Approximation Algorithms for Action-Reward Query-Commit Matching }

\author{
Mahsa Derakhshan
\thanks{Khoury College of Computer Sciences, Northeastern University, \texttt{m.derakhshan@northeastern.edu}
}
\and
Andisheh Ghasemi
\thanks{Khoury College of Computer Sciences, Northeastern University, \texttt{ghasemi.e@northeastern.edu}}
\and
Calum MacRury
\thanks{School of Industrial and Systems Engineering (ISyE), Georgia Tech,
\texttt{calum.macrury@isye.gatech.edu}}
}

\date{}

\begin{document}

\maketitle

\begin{abstract}
Matching problems under uncertainty arise in applications such as kidney exchange, hiring, and online marketplaces. A decision-maker must sequentially explore potential matches under local exploration constraints, while committing irrevocably to successful matches as they are revealed. The \emph{query-commit matching problem} captures these challenges by modeling edges that succeed independently with known probabilities and must be accepted upon success, subject to vertex patience (time-out) constraints limiting the number of incident queries.

In this work, we introduce the \emph{action-reward query-commit matching problem}, a strict generalization of query-commit matching in which each query selects an action from a known action space, determining both the success probability and the reward of the queried edge. If an edge is queried using a chosen action and succeeds, it is irrevocably added to the matching, and the corresponding reward is obtained; otherwise, the edge is permanently discarded. We study the design of approximation algorithms for this problem on bipartite graphs.

This model captures a broad class of stochastic matching problems, including the \emph{sequential pricing problem}  introduced by Pollner, Roghani, Saberi, and Wajc (EC~2022). For this problem, computing the optimal policy is NP-hard even when one side of the bipartite graph consists of a single vertex. On the positive side, Pollner et al.\ designed a polynomial-time approximation algorithm achieving a ratio of $0.426$ in the one-sided patience setting, which degrades to $0.395$ when both sides have bounded patience.  These guarantees rely on a linear program that cannot be rounded beyond a factor of $0.544$ in the worst case.

In this work, we design computationally efficient algorithms for the action-reward query-commit in one-sided and two-sided patience settings, achieving
approximation ratios of $1-1/e \approx 0.63$ and $\frac{1}{27}\!\left(19-67/e^3\right) \approx 0.58$ respectively. These results improve the state of the art for the sequential pricing problem, surpassing the previous guarantees of $0.426$ and $0.395$. A key ingredient in our approach is a new configuration linear program that enables more effective rounding.

\end{abstract}

\newpage

\section{Introduction}
Matching problems under uncertainty appear in diverse applications, from kidney exchange to hiring and online marketplaces, where a set of feasible matches is known in advance, but the rewards for each match are uncertain. In such settings, a decision-maker may adaptively query information about potential matches while facing natural constraints on how many options can be explored per participant. 
Moreover, each query may require an immediate commitment to accept or reject a match based on its revealed outcome. These constraints give rise to a challenging algorithmic problem: selecting which matches to explore, subject to local query budgets, while making irrevocable decisions in order to achieve high overall reward.

Motivated by this challenge, \citet{Chen} and \citet{BansalGLMNR12} initiated the study of \textit{query-commit matching problem with patience constraints}, where the input is a graph $G=(V,E)$ with edge rewards
and probabilities $(r_e)_{e \in E}$ and $(q_e)_{e \in E}$, respectively. 
The policy must sequentially 
\textit{query} the edges in an order of its choosing, where a query to $e$ reveals whether it succeeds (i.e., can be included in the matching). In this problem, each edge $e$ is guaranteed to \textit{succeed} 
 independently with probability $q_e$.
Moreover, the policy must be \textit{committal}: if it queries $e$ and learns that it succeeds, then it must add $e$ to its current matching, in which case it gains the reward of $r_e$. Finally, each vertex $v \in V$ has an integer $\ell_v \ge 1$ \textit{patience} or \textit{time-out} constraint, indicating the maximum number of
queries which can be performed on its incidence edges.   The goal of the policy is maximize the sum of the rewards of the edges matched in expectation.  This problem can be formulated as an exponential-sized Markov Decision Process (MDP), and so the optimal policy can be described by a 
dynamic program (DP). Due to the impracticality of computing such a DP, much of the literature has focused on attaining approximation algorithms of the optimal policy (see \cite{Chen,Adamczyk11, BansalGLMNR12, Adamczyk15,BavejaBCNSX18,brubach2024offline}, and \Cref{ssec:extended_related} for an extended overview of the literature).\footnote{We note that an alternative benchmark is the expected reward of the optimal offline matching; however, due to the patience constraints, it is not possible to design a policy within a constant factor approximation of it.}

In this work, we introduce a generalization of the query-commit matching problem with patience constraints  in which the policy has a set of \textit{actions} $\scr{A}$ available when querying an edge.
The input is again a graph $G = (V,E)$; however, for each edge $e$ and each action $a \in \scr{A}$, we are given an action-dependent success probability $q_e(a)$ and reward $r_e(a)$.
The policy sequentially queries edges in an order of its choosing. When querying an edge $e$, it additionally selects an action $a \in \scr{A}$.
Querying $e$ with action $a$ \textit{succeeds} independently with probability $q_e(a)$.
If $e$ succeeds via $a$, the edge $e$ is irrevocably added to the current matching and the policy gains reward $r_e(a)$.
If it fails, the edge $e$ is not added to the matching, and it cannot be queried again in future steps.
As before, each vertex $v \in V$ has a patience constraint $\ell_v$, limiting the number of incident queries, and the objective is to maximize the expected reward of the resulting matching.
We refer to this problem as the \textit{action-reward query-commit matching problem}.

A special case of our model is the \emph{sequential pricing problem}, introduced by \citet{DBLP:journals/corr/abs-2205-08667} and motivated by challenges arising in labor markets. The input is a bipartite graph, with vertices on the two sides representing jobs and workers, respectively. Each job is associated with a known reward. The goal of the \emph{platform} is to have workers complete jobs by offering take-it-or-leave-it prices. This interaction is modeled stochastically: if the platform offers a price $\tau \ge 0$ to worker $v$ for completing job $u$, then $v$ accepts with a known probability $p_{u,v}(\tau)$, in which case $u$ and $v$ are matched and leave the system. As in the query-commit model, workers have known patience (time-out) constraints, specifying the maximum number of job offers they are willing to consider before exiting. The platform’s objective is to maximize either its expected \emph{revenue} or its expected \emph{social welfare}. By interpreting prices as actions in our framework (i.e., $\scr{A} = [0,\infty)$) and defining the edge rewards and success probabilities appropriately, our model captures both objectives.

 We are also motivated by a natural extension of the query-commit matching problem where the rewards of the edges are described by arbitrary
real-valued distributions, as  opposed to weighted Bernoulli random variables. This is closely related to the \textit{free-order prophet matching} problem \cite{fu_2021, brubach2024offline, DBLP:journals/corr/abs-2205-08667, macrury_contention_2023, macruryinduction2023}, with the distinctions being that there are patience constraints, and that we benchmark against
the optimal policy, as opposed to the prophet/hindsight optimum. Once again, our action-reward query-commit matching problem subsumes this problem, whether the objective is social welfare or revenue. We provide the details of both reductions in \Cref{sec:reductions}.

For the sequential pricing problem, \citet{DBLP:journals/corr/abs-2205-08667} design a polynomial-time algorithm achieving an approximation ratio of $0.426$ in the \emph{one-sided patience} setting, where only workers have patience constraints. They also consider the \emph{two-sided patience} setting, in which jobs have patience constraints as well; in this case, the approximation ratio drops to $0.395$. Their algorithm and analysis rely on rounding a linear program (LP) inspired by the work of \citet{BansalGLMNR12}. However, due to a classic result of \citet{karp1981maximum}, it is not possible
to design a policy with an expected reward greater than $0.544$-fraction of the LP's value, leaving open the question of whether improved approximation ratios are achievable via alternative LP formulations and rounding techniques.

In this work, we answer this question affirmatively by establishing improved approximation guarantees for the action-reward query-commit matching problem on bipartite graphs. Since our model subsumes the sequential pricing problem of \citet{DBLP:journals/corr/abs-2205-08667}, our results improve the best known approximation ratios to $1-1/e$ and $\frac{1}{27}\!\left(19-\frac{67}{e^3}\right)\approx 0.5801$ for the one-sided and two-sided patience settings, respectively.

Here, the approximation ratio of a policy is defined as the ratio between its expected reward and the expected reward of an optimal policy, i.e., the best policy without computational constraints, whose value on
$G$ we denote by $\OPT(G)$. Our main result is formally stated as follows.

\begin{theorem}[Main Theorem]\label{thm:main_theorem}
Given a bipartite graph $G=(U,V,E)$ with action set $\scr{A}$, let $\beta = 1-1/e$ if $\ell_u \in \{1, \infty\}$ 
for all $u \in U$, and $\beta = \frac{1}{27} (19 - \frac{67}{e^3})$, otherwise.
For any arbitrarily small $\epsilon \in (0,  1)$, there exists a policy with expected reward of
    $$\beta \cdot (1 -\epsilon) \OPT(G).$$
Moreover, the policy can be computed in time $ f(1/\epsilon)\cdot \poly(|G|, |\scr{A}|)$ where $f$ is some function and $\poly$ is a polynomial in $|G|$ and $|\scr{A}|$.
\end{theorem}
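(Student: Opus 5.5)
We will prove \Cref{thm:main_theorem} through a configuration LP, which we then round vertex by vertex. For each $v \in V$, a \emph{configuration} is an ordered sequence of at most $\ell_v$ pairs $(u_1,a_1),\dots,(u_k,a_k)$ with distinct $u_i \in N(v)$ and actions $a_i \in \scr{A}$. We think of it as $v$ running its queries in that order until the first success. Let $x_{v,c}$ be the probability that $v$ "uses" configuration $c$, and let $p_{u,v}(c)$ be the probability that $c$ matches $v$ to $u$; if $u=u_i$, then $p_{u,v}(c) = q_{u_iv}(a_i)\prod_{j<i}(1-q_{u_jv}(a_j))$. The LP maximizes $\sum_{v,c} x_{v,c}\sum_i p_{u_i,v}(c) r_{u_iv}(a_i)$ subject to $\sum_c x_{v,c}\le 1$ for each $v$, the matching constraints $\sum_{v,c} x_{v,c}\,p_{u,v}(c) \le 1$ for each $u\in U$, and, in the two-sided case, $\sum_{v,c} x_{v,c}\,\Pr[c\text{ queries }u] \le \ell_u$.

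\textbf{Relaxation.} First we show $\LPOPT \ge \OPT(G)$. Fix an optimal policy and, for each $v$, look at the random sequence of queries it makes at $v$, truncated at the first success. This induces a distribution over configurations. The key point is that the probability of a success at position $i$, conditioned on the earlier failures, equals $q$ of that edge-action pair, because outcomes are independent of the history. Hence the expected reward and the marginal match and query probabilities agree with the LP quantities. The constraints hold because each $u$ is matched at most once and queried at most $\ell_u$ times.

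\textbf{Solving the LP.} Second, we solve the LP approximately in time $f(1/\epsilon)\poly(|G|,|\scr{A}|)$. The dual has a separation problem per $v$: given prices $\alpha_u,\beta_u$, find the configuration maximizing $\sum_i p_{u_i,v}(c)\,(r_{u_iv}(a_i)-\alpha_u) - \sum_i \Pr[\text{reach } i]\,\beta_u$. This is a single-vertex sequential problem with patience $\ell_v$. It is the single-vertex special case that the paper notes is NP-hard, so we can only solve it approximately. We plan to use a PTAS: guess the $O(1/\epsilon)$ highest-"weight" items and their order, and handle the rest by a DP on discretized survival probabilities, losing a factor $(1-\epsilon)$. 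Plugging this approximate separation oracle into the ellipsoid method gives a $(1-\epsilon)$-approximate LP solution with polynomial support. We expect this step to be the main technical obstacle. In particular, ordering items by a suitable index (for example the ratio $q\cdot(r-\alpha)$ against $\beta$ cost) is not exact once the action choice and patience interact, and this is where the $f(1/\epsilon)$ comes from.

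\textbf{Rounding.} Each $v$ independently samples $c$ with probability $x_{v,c}$, and the vertices $v$ are processed in a uniformly random order (equivalently, with random arrival times in $[0,1]$). Vertex $v$ executes $c$, skipping any $u$ already matched or, in the two-sided case, any $u$ out of patience. In the one-sided case ($\ell_u\in\{1,\infty\}$; the case $\ell_u=1$ folds into the matching constraint, since queries then equal matches), the loads on $u$ come from independent vertices with total mass at most $1$. A standard random-order contention-resolution argument then shows that the query to $u$ at position $i$ of $c$ proceeds with probability at least $\int_0^1 e^{-t}\,dt$-type bounds. We would also need a monotonicity argument: skipping earlier items only increases the probability that $v$ reaches the later items. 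Together these give each edge-action a $1-1/e$ fraction of its LP contribution. In the two-sided case, $u$ must additionally not have exhausted $\ell_u$ queries. We would bound this probability by a Poisson-type tail combined with the matching constraint, optimizing over the random time. We expect this to yield $\frac{1}{27}(19-67/e^3)$, with the worst case at $\ell_u=2$ and the cube in $e^{-3}$ arising from integrating $t^2$-weighted terms. Combining the three steps with the $(1-\epsilon)$ losses proves the theorem.
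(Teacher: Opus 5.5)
Your LP, the relaxation argument, and the reduction of dual separation to the single-$v$ star problem all match the paper. There is, however, a genuine gap in the rounding step.

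As you specify it, $v$ runs its sampled configuration \emph{greedily}, skipping $u$ only if it is already matched or out of patience. Greedy cannot reach $1-1/e$, let alone $\frac{1}{27}(19-67/e^3)$. The bound $\Pr[u\text{ free at time }t]\ge e^{-t}$ holds only in the Poisson regime. Under greedy this probability is $\prod_{v'}(1-t\,q_{u,v'}\tilde z_{u,v'})$, which is $\approx 1-t$ when one competitor carries almost all of $u$'s mass.

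Concretely, let $U=\{u\}$, $\ell_u=\infty$, and $V=\{v,v'\}$, each with a single edge, where $q_{u,v'}=1-\delta$, $r_{u,v'}=1$, $q_{u,v}=\delta$, $r_{u,v}=R/\delta$. The LP optimum and $\OPT$ are both $\approx R+1$. Yet random-order greedy queries $(u,v)$ only with probability $(1+\delta)/2$, so its ratio tends to $1/2$. This is exactly the paper's remark that greedy achieves $1/2$ for $\ell_u\in\{1,\infty\}$ and $(4-e)/e\approx 0.4715$ otherwise.

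Two ideas are missing.
\begin{itemize}
\item Each $u$ must run a contention resolution scheme that sometimes \emph{declines} an available query.
  \begin{itemize}
  \item For $\ell_u=\infty$: Lee--Singla's attenuated RCRS on the masses $q\tilde z$.
  \item For $\ell_u=1$: the same RCRS on the masses $\tilde z$. A failed query also exhausts $u$, so here the patience constraint governs, not the matching constraint; queries do not equal matches.
  \item For $2\le\ell_u<\infty$: the paper's P-RCRS with attenuation $b(p_ix_i)$.
  \end{itemize}
\item Declines must not distort the suggestion process. The paper has $v$ flip a simulated $\tilde Q\sim\Ber(q_{u_j,v}(a_j))$ whenever $u_j$'s scheme declines, and stop if it equals $1$. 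This keeps the suggestions reaching every $u$ independent across $v$, with marginals exactly $\tilde z_{u,v}(a)$.
\end{itemize}
Your monotonicity remark covers only one side of this. Letting $v$ continue deterministically after a skip raises its chance of reaching later edges. It equally raises the load $v$ places on those later $u$'s, in a way correlated with other offline vertices' states. That load can exceed the LP's bound of $1$, and it breaks the independent-arrival model any CRS analysis relies on.

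Even with these repairs, the two-sided bound is not a routine Poisson-tail calculation. The comparison $\Pr[L_0\le\ell-1]\ge\Pr[\Pois(\mu)<\ell]$ of Baveja et al.\ requires $\mu<\ell-1$. The mean $y\,x(N_0)$ exceeds $\ell-1$ for $y$ near $1$ whenever the $p=1$ mass at $u$ is below $1$. The paper handles this with:
\begin{itemize}
\item the reduction to $p_j\in\{0,1\}$;
\item a splitting argument driven by properties of $b$;
\item an exchange argument for $\ell=2$;
\item a closed-form analysis for $3\le\ell<120$;
\item Bennett's inequality for $\ell\ge 120$.
\end{itemize}
Note also that $\frac{1}{27}(19-67/e^3)=\int_0^1 e^{-y}\Pr[\Pois(2y)<3]\,dy$ comes from $\ell=3$, not $\ell=2$. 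The analogous $\ell=2$ value is $\frac34-\frac{5}{4e^2}\approx 0.5808$.

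On solving the LP, guessing the identities of $O(1/\eps)$ items costs $|U|^{O(1/\eps)}$. That is a PTAS, not the $f(1/\eps)\cdot\poly(|G|,|\scr{A}|)$ runtime the theorem claims. The paper instead guesses only discretized future-value levels for $O(1/\eps)$ buckets of the optimal sequence. It then assigns items to buckets with a Santa Claus EPTAS under a global budget, and recovers actions by DP.

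Finally, in the relaxation, the configuration for $v$ must be the sequence it would query if all its queries failed, not the realized truncated sequence. Otherwise the factors $\prod_{j<i}(1-q_{e_j}(a_j))$ are applied twice in $\sum_c x_{v,c}\,p_{u,v}(c)$.
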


Most closely related to our work are \citep{borodin2025online} and \citep{brubach2024offline}, who gave polynomial-time $(1 - 1/e)$-approximations for the (standard) query-commit matching problem on bipartite graphs with one-sided patience, and \textit{unit patience values} (e.g., $\ell_u =1$ for all $u \in U$), respectively. 
Our $(1-1/e)$-approximation ratio can thus be seen as a generalization of both of these of results to our action-reward framework.

Moving to two-sided patience constraints,
even in the (standard) query-commit matching problem, the previous best known approximation ratio follows from the aforementioned $0.426$-approximation
due to \citep{DBLP:journals/corr/abs-2205-08667}. Thus, our $\frac{1}{27} (19 - \frac{67}{e^3})$-approximation where $\frac{1}{27} (19 - \frac{67}{e^3}) \approx 0.5801$, simultaneously improves on the approximation
ratios of past works, while also applying to a more general setting.

\subsection{Technical Overview} \label{sec:technical_overview}
Our approach to proving \Cref{thm:main_theorem} involves three high level steps. First, we design a new configuration LP
to relax/upper bound the optimal policy's expected reward. Second, we design a variant of an online contention resolution scheme for patience constraints,
and show how to use it to round our LP into a policy. Finally, we compute an approximately optimal solution to our LP
via a reduction to approximately solving the ``star graph'' case of our problem, where one partite set has a single vertex. Each of these steps is self-contained,
and taken together quickly imply \Cref{thm:main_theorem}, as we explain in \Cref{sec:final_proof}. Below, we provide an overview of the technical challenges faced
in each step, and how these relate to the past approaches in the literature.

\paragraph{Our Configuration LP (\Cref{sec:config_LP}).}
The majority of the past works in the literature use an LP which accounts solely for the marginal edge probabilities for which the optimal policy queries the various edges. While this LP was originally introduced by \citet{BansalGLMNR12} for the (standard) query-commit matching problem, \citet{DBLP:journals/corr/abs-2205-08667} used an analogous LP for their sequential pricing problem. It extends to our action-reward problem, and so we present it in this setting.
For each $e \in E$ and $\ac \in \scr{A}$, the LP has a decision variable $z_{e}(\ac)$ which is the probability the benchmark queries $e$ via $a$. 
This leads to the following LP, where we recall that $q_{e}(\ac)$ denotes the probability that $e$ succeeds via $a$,
$r_{e}(\ac)$ denotes the reward attained when this occurs, and $\partial(s)$ denotes the edges incident to vertex $s \in U \cup V$:
\begin{align}\label{LP:bansal}
	\tag{LP-M}
	&\text{maximize} &  \sum_{e \in E, \ac \in \scr{A}} r_{e}(\ac)  q_{e}(\ac) z_{e}(\ac)\\
	&\text{subject to} & \sum_{e \in \partial(s), \ac \in \scr{A}} q_{e}(\ac) z_{e}(\ac)  \leq 1 && \forall s \in U \cup V  \label{eqn:bansal_matching}\\
        && \sum_{e \in \partial(s), \ac \in \scr{A}} z_{e}(\ac) \leq \ell_s && \forall s \in U\cup V \label{eqn:bansal_patience}\\
        && \sum_{ \ac \in \scr{A}} z_{e}(\ac) \le 1 && \forall e \in E, \label{eqn:bansal_query} \\
	&&z_{e}(\ac) \ge 0 && \forall e \in E, \ac \in \scr{A}
\end{align}

After computing an optimal solution $(z_{e}(\ac))_{e \in E, \ac \in \scr{A}}$ to \ref{LP:bansal}, one still needs to efficiently round it into a policy. While earlier
works \citep{BansalGLMNR12,Adamczyk15,BrubachSSX20} used the  GKSP algorithm of \citep{GandhiGKSP06}, the last number of works \citep{BavejaBCNSX18,brubach2024offline,DBLP:journals/corr/abs-2205-08667}
 -- including the previous best state-of-the-art \citep{DBLP:journals/corr/abs-2205-08667} --
have all used independent sampling combined with a \textit{uniform at random} (u.a.r)
processing order of the edges. The simplest form
of this algorithmic template is due to \citet{BavejaBCNSX18},
where each time $e \in E$ is processed, if $u$ and $v$ are unmatched and have remaining patience, then $e$ is queried via action $\ac$ independently with probability (w.p.) $z_{e}(\ac)$
(this is well-defined, due to \eqref{eqn:bansal_query}). \citet{BavejaBCNSX18} showed
that any edge $e$ and action $\ac$ is queried w.p. $0.310 z_{e}(\ac)$, which implies their approximation ratio of $0.310$. The works of
\cite{brubach2024offline,DBLP:journals/corr/abs-2205-08667} further tuned this algorithm by introducing additional independent random bits $(B_e)_{e \in E}$, with the requirement that $e$ is queried only if $B_e =1$. These bits help to increase the \textit{minimum} probability that any edge $e$ is queried via any action $a$, and so they are able to get an improved guarantee of $\alpha z_{e}(\ac)$ for $\alpha = 0.382$ in \cite{brubach2024offline} and $\alpha = 0.395$ in \cite{DBLP:journals/corr/abs-2205-08667} \footnote{The policies of \cite{brubach2024offline, DBLP:journals/corr/abs-2205-08667} additionally require the input to have minimum patience $2$ for their approximation ratios to hold. For instance, the policy of
\cite{brubach2024offline} attains at most $1/3$ for an input with vertices with patience $1$.} . In addition
to the aforementioned $0.544$ hardness result against comparisons to \eqref{LP:bansal},
there is also a hardness result against algorithms which process the edges in a u.a.r order.
The $1/2$-impossibility result of \cite{macrury_contention_2023} implies that no such
algorithm can attain an approximation ratio better than $1/2$ against \ref{LP:bansal}.
Thus, there is limited room for attaining improved approximation ratios in this way.

While a tightening of \ref{LP:bansal} was introduced by \cite{costello2012matching, Gamlath2019} that is more easily rounded, its
definition is restricted to inputs with unlimited patience, and the (standard) query-commit matching problem (i.e., $|\scr{A}| =1$). In this restricted setting, \cite{Gamlath2019} added additional LP constraints
to \ref{LP:bansal}, and derived a $(1-1/e)$-approximation. Unfortunately, it's unclear how to extend this LP
to arbitrary patience constraints, let alone an action space with $|\scr{A}| \ge 2$. An alternative approach to handling the patience constraints was taken by \cite{borodin2025online,brubach2025star}, who presented a \textit{configuration} LP 
for the case when $|\scr{A}| =1$, where if $\bm{e}=(e_1, \ldots ,e_k)$ is a sequence of distinct edges incident to $v \in V$ with $k \le \ell_v$, then the LP has a decision variable $z_{v}(\bm{e})$ for the probability
the benchmark queries the edges of $\bm{e}$ in order. 

Our approach is to first generalize the configuration LP of \cite{borodin2025online,brubach2025star} (formally stated in \ref{LP:social_welfare} of \Cref{sec:config_LP})
to our action-reward problem. We introduce a decision variable for each policy associated with each vertex $v \in V$. That is,
if $\bm{e}=(e_1, \ldots ,e_k)$ is a sequence of distinct edges incident to $v \in V$ with $k \le \ell_v$,
and $\bm{\ac} = (\ac_1, \ldots , \ac_k) \in \scr{A}^k$ is a sequence of actions, then $z_{v}(\bm{e}, \bm{\ac})$ is the probability the benchmark
queries $e_i$ via $a_i$ for $i=1, \ldots ,k$ in order,
outputting the first $e_i$ which succeeds via $a_i$ (if any).
Setting $\srew(\bm{e}, \bm{\ac})$ as the expected reward
for $v$ in this case,
our LP's objective is 
\begin{equation} \label{eqn:overview_objective}
\sum_{v \in V} \sum_{(\bm{e},\bm{\ac}) \in \scr{C}_v} \srew(\bm{e}, \bm{\ac}) \cdot z_{v}(\bm{e}, \bm{\ac}),
\end{equation}
where the second sum is over $\scr{C}_v$, the set of all policies of length at most $\ell_v$ for $v$.
Next, for each $v \in V$, we add the constraint 

\begin{equation} \label{eqn:overview_distribution}
    \sum_{(\bm{e}, \bm{\ac})} z_v(\bm{e},\bm{\ac}) \le 1,
\end{equation}
which ensures each $v$ selects at most one policy $(\bm{e}, \bm{\ac})$. 
In order to motivate the remaining constraints, 
imagine drawing $(\bm{e}, \bm{a}) \in \scr{C}_v$ w.p. $z_{v}(\bm{e},\bm{a})$, and then
querying $e_i$ via $a_i$ in order, stopping if $e_i$ succeeds via $a_i$.
In this case,  for any $e \in E$ incident to $v$ and $a \in \scr{A}$, we can write out the probability
that $e$ is queried via $a$ as follows:
\begin{equation} \label{eqn:edge_variable_overview}
\til{z}_{e}(\ac):= \sum_{\substack{i \in [\ell_v], (\bm{e},\bm{\ac}) \in \scr{C}_v: \\ e_i = e, \ac_i =\ac}} \prod_{j < i} (1 - q_{e_j}(\ac_j))\cdot z_{v}( \bm{e},\bm{\ac}).
\end{equation}
By applying \eqref{eqn:edge_variable_overview} over all the vertices of $V$, we add two constraints for each vertex $u \in U$, which ensure that \textit{in expectation},
$u$ is matched at most once, and has at most $\ell_u$ incident edges queried: 
\begin{align}
\text{$\sum_{e \in \partial(u)} \sum_{a \in \scr{A}} q_{e}(a) \til{z}_{e}(a)  \leq 1,$ and  
    $\sum_{e \in \partial(u)} \sum_{a \in \scr{A}} \til{z}_{e}(a)  \leq \ell_u$. \label{eqn:overview_queried}}
\end{align}
We will expand upon the exact use of \eqref{eqn:overview_queried}
shortly, but we mention for now that they serve an analogous purpose as \eqref{eqn:bansal_matching} and \eqref{eqn:bansal_patience} do for \ref{LP:bansal}, and the algorithms which round it.

In \Cref{thm:LP_relaxation} of \Cref{sec:config_LP}, we prove that \ref{LP:social_welfare} relaxes our problem's benchmark. While \ref{LP:social_welfare} generalizes the configuration LP's of \cite{borodin2025online,brubach2025star}, it has two key differences.
First, it incorporates an arbitrary action space into its objective \eqref{eqn:overview_objective}. Second, it allows for arbitrary patience constraints on the vertices of $U$.
In contrast, the LP's of \cite{borodin2025online,brubach2025star} are defined only for the standard query-commit problem (i.e., $|\scr{A}| =1$),
and assume that the vertices of $U$ have unlimited patience (i.e., $\ell_u = \infty$ for all $u \in U$). 

\paragraph{Rounding our LP (\Cref{sec:LP_round}).}
We next explain how to round a solution to \ref{LP:social_welfare}
into a policy.
Assume for now that we have an optimal solution
to \ref{LP:social_welfare}, and first consider a natural \textit{greedy}
policy, which is well-defined due to \eqref{eqn:overview_distribution}:
\begin{enumerate}
    \item Draw a u.a.r. arrival order of $V$.
    \item For each arriving $v \in V$, draw a policy $(\bm{e}, \bm{\ac})$ for $v$ independently w.p. $z_{v}(\bm{e}, \bm{\ac})$
    \begin{itemize}
        \item Then, when processing edge $e_i = (u_i,v_i)$, suppose $u_i$ has remaining patience and is currently unmatched.
        In this case, query $e_i$ via action $\ac_i$ and match $e_i$ if $e_i$ via $a_i$ succeeds.
    \end{itemize} 
\end{enumerate}
This policy queries each edge $(u,v) \in E$ and $a \in \scr{A}$ w.p. $\beta_0 \cdot \til{z}_{u,v}(a)$,
where $\beta_0 = 1/2$ if $\ell_u \in \{1, \infty\}$, and $\beta_0 = (4 -e)/e \approx 0.4715$ otherwise \footnote{To get the guarantee of $\beta_0$, technically the greedy policy must use simulated bits drawn from $(q_{e}(a))_{a \in \scr{A}, e \in E}$ to make ``fake queries'' when necessary, however we defer this discussion to \Cref{sec:LP_round}.}. 
Thus, since we can equivalently write our LP's objective using \eqref{eqn:edge_variable_overview} as
$$
    \sum_{v \in V} \sum_{(\bm{e},\bm{\ac}) \in \scr{C}_v} \srew(\bm{e}, \bm{\ac}) \cdot z_{v}(\bm{e}, \bm{\ac}) = \sum_{e \in E}  \sum_{\ac \in \scr{A}} \erew_{e}(\ac) q_{e}(\ac) \til{z}_{e}(\ac),
$$
the greedy policy gets an approximation of $\beta_0$, and so it improves upon
\cite{DBLP:journals/corr/abs-2205-08667}. However, to get the guarantee of \Cref{thm:main_theorem}, we must further modify our policy. 
Our idea is to recognize that as the greedy policy executes, from the perspective
of a fixed  vertex $u$, the constraints \eqref{eqn:overview_queried} ensure that \textit{in
expectation}, at most $\ell_u$ edges of $u$ are queried, and at most one of the queried edges will succeed via some action. When $\ell_u \in \{1, \infty\}$, one of these constraints is subsumed, and we can execute a \textit{(single-item) random order
contention resolution (RCRS)} to decide which edge/action pair of $u$ to query (opposed to just being greedy).
Due to a known result of \cite{Lee2018,brubach2024offline}, this quickly implies the $1-1/e$ guarantee
of \Cref{thm:main_theorem}. On the other hand, if $2 \le \ell_u < \infty$ for some $u \in U$, then
we must design our own randomized rounding tool to determine which edge/action pair of $u$ to query. This is formalized in the
\textit{patience random-order contention resolution (P-RCRS) problem} of \Cref{sec:LP_round},
and in \Cref{thm:trs}, we state the rounding guarantee which we use to derive 
the $\frac{1}{27} (19 - \frac{67}{e^3}) \approx 0.5801$ result of \Cref{thm:main_theorem}.

We note that our P-RCRS problem is effectively the same rounding problem considered
in \cite{brubach2024offline, DBLP:journals/corr/abs-2205-08667}, except that we only have
to consider it for the special case of a fixed vertex $u \in U$ and its incident edges (i.e., a star graph). In contrast,  \cite{brubach2024offline, DBLP:journals/corr/abs-2205-08667} 
study this problem for an arbitrary graph whose edges are processed in random order. This explains
why we improve on these past works.
That being said, while our P-RCRS for this problem is based on the ideas in \cite{brubach2024offline, DBLP:journals/corr/abs-2205-08667},
there is a key technical step that was overlooked in \cite{brubach2024offline} involving the worst-case input of their rounding algorithm \footnote{Since \citet{DBLP:journals/corr/abs-2205-08667} builds upon the algorithm and analysis of \citep{brubach2024offline}, they were also not aware of this missing technical step.}. We discuss this is more detail in \Cref{sec:main_section_srcrs}, and mention that the updated arXiv version of \cite{brubach2024offline} now cites the ``exchange argument'' of \Cref{lem:exchange_argument} from \Cref{sec:patience_2_exchange} to correct their algorithm's analysis.

\paragraph{Solving our LP (\Cref{sec:solve_LP}).}

Finally, in order to solve our LP efficiently, we first take its dual.
While designing a separation oracle for an LP's dual is the standard way to solve an LP
with exponentially many variables, and was in fact the approach taken in \cite{borodin2025online, brubach2025star}, this is not possible for our LP's dual. This is because the separation problem reduces to  designing an optimal policy for the special case of our action-reward problem when $V$ contains a single vertex $v$. We call
this the \textit{star graph action-reward problem}, and note
that it is NP-hard if $|\scr{A}| \ge 2$, even when $\ell_v = \infty$. This is because we can set the edge rewards/probabilities so as to encode the (single-item) free-order prophet problem, of which computing an optimal policy is known to be NP-hard for random variables
with support at least $3$ \cite{agrawal2020}. Fortunately, there exists an efficient polynomial time approximation scheme (EPTAS) for this problem \cite{segev2021}. In \Cref{sec:santa_claus_approximation}, we leverage the techniques of \cite{segev2021} to design an EPTAS for our star graph problem. That is, for any $\eps > 0$,
we design a policy with runtime $f(\eps) \cdot \poly(|U|, |\scr{A}|)$  whose expected reward is at least $(1-\eps)$-fraction of the optimal policy's expected reward (here $f(\eps)$ is some arbitrary function of $\eps$,
and $\poly(|U|, |\scr{A}|)$ is a polynomial in $|U|$ and $|\scr{A}|$). As we explain
in \Cref{sec:solve_LP}, this allows us to get a $(1-\eps)$-approximate solution to \ref{LP:social_welfare},
which together with the aforementioned rounding techniques of \Cref{sec:LP_round}, implies \Cref{thm:main_theorem}. More, since our setting encodes the free-order prophet problem, as a corollary we get a generalization of the EPTAS of \cite{segev2021}. This generalization works
for either the social welfare or revenue objective, and also handles the extension of the problem
when the policy can only make a limited number of queries.

\subsection{Further Related Work} \label{ssec:extended_related}
For the query-commit matching problem with patience constraints and uniform edge rewards (i.e., unweighted graphs), \citet{Chen} showed that the greedy algorithm attains an approximation ratio of $1/4$
and its analysis was later improved to $1/2$ by \citet{Adamczyk11}. Afterwards, \citet{BansalGLMNR12} introduced the version
of this problem with edge rewards, and attained an approximation ratio of $1/4$ via a linear programming (LP) based policy. A sequence of
papers later improved upon this result, \cite{Adamczyk15,BavejaBCNSX18,brubach2024offline}, with the state of the art of $0.395$ being due to \citet{DBLP:journals/corr/abs-2205-08667}.

We also mention some related results for the (standard) query-commit matching problem \textit{without} patience values,
in which case the benchmark taken is the optimal offline matching.
For bipartite graphs, \citet{Gamlath2019} gave an algorithm with a guarantee of $1 - 1/e$. This was slightly improved by ~\citet{DBLP:conf/soda/Derakhshan023}, further refined by \citet{chen_2024} to $0.641$, and most recently improved to $0.659$ by \citet{huang2025edgeweightedmatchingdark}. For general graphs, \citet{fu_2021} derived a guarantee of $8/15 \approx 0.533$, and this was slightly improved
by \citet{macruryinduction2023}. We note that \citep{chen_2024, fu_2021, macruryinduction2023} allow the edges to have rewards drawn independently from
arbitrary distributions (i.e., the prophet matching problem), and \citep{huang2025edgeweightedmatchingdark} applies when the edge probabilities are correlated (i.e., the oblivious bipartite matching problem).

\subsection{Preliminaries}
We formalize notation for our problem, recapping some concepts from the introduction.

Let $G=(U,V,E)$ be a bipartite graph.  An edge $e=(u,v)$ is said to be \textit{incident} to vertices $u$ and $v$, and $v$ is said to be a \textit{neighbour} of $u$ (and vice versa).
For any vertex $s\in U \cup V$, let $N(s)\subseteq V$ denote its neighbours, and let $\partial(s)\subseteq E$ denote the set of edges incident to $s$.
A \textit{matching} $M$ of $G$ is a subset of edges no two of which are incident to the same vertex, i.e.\ satisfying $|M\cap\partial(s)|\le 1$ for all $s\in U \cup V$. We say that $s \in U \cup V$ is \textit{matched} by $M$, provided $|M \cap \partial(s)| =1$.

Suppose that associated with $G$ we have an action \textit{action space} $\scr{A}$, \textit{edge rewards} $\bm{r} =(r_{e}(a))_{e \in E, a \in \scr{A}}$ and \textit{edge probabilities} $\bm{q} = (q_{e}(a))_{e \in E, a \in \scr{A}}$. More, assume that each $s \in U \cup V$ has \textit{patience} $\ell_s \in \mb{N}_0$. A \textit{policy} for the \textit{action/reward matching problem} is given $G$, $\bm{r}, \bm{q}$ and $(\ell_s)_{s \in U \cup V}$ as its input, and its output is a matching  $\scr{M}$ of $G$. The policy must be \textit{committal}, which means its matching is constructed in the following way: In each step $t \ge 1$, it \textit{adaptively} chooses
 an edge $e_t \in E$, as well as action $a_t \in \scr{A}$. Then, it \textit{queries} $e_t$ via action $a_t \in \scr{A}$, at which point an independent Bernoulli $Q_{e_t}( a_t) \sim \Ber(q_{e_t}(a_t))$ is drawn. If $Q_{e_t}( a_t) =1$, then $(e_t,a_t)$ is said to \textit{succeed},
 and the policy must add $e_t$ to its current matching, in which
 case $e_t$ contributes a reward of $r_{e_t}(a_t)$. Otherwise, if $Q_{e_t}( a_t) = 0$, then $(e_t,a_t)$ is said to \textit{fail}, and $e_t$ cannot be added to the matching. In addition, the policy must follow the following additional rules:

\begin{enumerate}
    \item For each $e \in E$, $e$ is queried by at most one action.
    \item For each $s \in U \cup V$, there are at most $\ell_s$ edges in $\partial(s)$ which were queried via actions.
\end{enumerate}
If at step $t$, there have been less than $\ell_s$ queries to $\partial(s)$, then we say that $s$ has \textit{remaining patience} (at step $t$).

After at most $|E|$ steps, the matching $\scr{M}$ output by the policy is finalized,
and the \textit{reward} of the policy is
$
   \sum_{e \in E} \sum_{a \in \scr{A}} r_{e}(a) \cdot Q_{e}(a) \cdot \bm{1}_{\{\text{$e$ queried via $a$}\}}.
$

The goal of the problem is to design a policy whose \textit{expected} reward is as large
as possible, where the expectation is over $(Q_e(a))_{e \in E, a \in \scr{A}}$, as well as any randomized decisions made by the policy. We 
write as this
\begin{equation} \label{eqn:expected_reward_definition}
    \sum_{e \in E} \sum_{a \in \scr{A}} r_{e}(a) \mb{P}[\{Q_{e}(a) =1\} \cap \{\text{$e$ queried via $a$}\}] = \sum_{e \in E} \sum_{a \in \scr{A}} r_{e}(a) q_{e}(a) \mb{P}[\text{$e$ queried via $a$}],
\end{equation}
where the equality follows since the policy must decide whether to query $e$ via $a$, prior to observing $Q_{e}(a)$.
We denote the expected reward of the (computationally unconstrained) \textit{optimal policy} on $G$ by $\text{OPT}(G)$.
By extending $r_{e}(a) :=0$ and $q_{e}(a) :=0$ for each $e \in U \times V \setminus E$ and $a \in \scr{A}$, we hereby assume without loss that $E = U \times V$.  

Finally, we review some standard notation. For an arbitrary set $S$ and integer $k \ge 1$, let $S^{k}$ be the set of all vectors of length $k$ with entries in $S$, and $|\bm{s}| =k$ denote the length of $\bm{s} \in S^k$.  We also set $[k] := \{1, \ldots ,k\}$ for convenience.

\section{Our Configuration LP} \label{sec:config_LP}

In this section, we introduce our configuration LP for $G=(U,V,E)$
with actions $\scr{A}$, edge rewards $\bm{r} =(r_{e}(\ac))_{e \in E, \ac \in \scr{A}}$ and edge probabilities $\bm{q} =(q_{e}(\ac))_{e \in E, \ac \in \scr{A}}$. 
Our LP introduces a decision variable for each policy associated with each vertex $v \in V$.
Specifically, suppose that $\bm{e}=(e_1, \ldots ,e_k)$ is permutation of a subset of $\partial(v)$ of size $k \le \ell_v$,
and $\bm{\ac}= (\ac_1, \ldots , \ac_k) \in \scr{A}^{k}$.  
Let $\scr{C}_v$ be denote the set of all such pairs $(\bm{e}, \bm{\ac})$. Observe that $(\bm{e}, \bm{a}) \in \scr{C}_v$ corresponds to the policy which queries the edges $e_1, \ldots ,e_k$ incident to $v$ via actions $\ac_1, \ldots , \ac_k$, outputting the first edge/action pair $(e_i,a_i)$ which succeeds (if any).
We define $\srew_{\bm{r}, \bm{q}}(\bm{e}, \bm{\ac})$ to be the expected reward of this policy,
and observe that
\begin{equation} \label{eqn:reward_of_vertex_policy}
    \srew_{\bm{r}, \bm{q}}(\bm{e}, \bm{\ac}) = \sum_{i=1}^{k} \erew_{e_i}(\ac_i) \cdot q_{e_i}( \ac_i) \prod_{j <i} (1 - q_{e_j}(\ac_j)).
\end{equation}
We will sometimes drop the dependence on $\bm{r}$ and $\bm{q}$ in \eqref{eqn:reward_of_vertex_policy}
as the edge rewards and probabilities will be implicitly clear. With this notation, we now state our LP, where we introduce a decision variable $z_{v}(\bm{e}, \bm{\ac})$
for each $v \in V$ and $(\bm{e}, \bm{\ac}) \in \scr{C}_v$:

\begin{align}\label{LP:social_welfare}
	\tag{LP-C}
	&\text{maximize} &  \sum_{v \in V} \sum_{(\bm{e},\bm{\ac}) \in \scr{C}_v } \srew(\bm{e}, \bm{\ac}) \cdot z_{v}(\bm{e}, \bm{\ac}) \\
	&\text{subject to} & \sum_{v \in V} \sum_{\substack{i \in [\ell_v], (\bm{e}, \bm{\ac}) \in \scr{C}_v: \\ e_i = (u,v)}} 
	q_{(u,v)}(\ac_i) \cdot \prod_{j < i} (1 - q_{e_j}(\ac_j)) \cdot z_v( \bm{e}, \bm{\ac})  \leq 1 && \forall u \in U  \label{eqn:sw_matching}\\
        && \sum_{v \in V} \sum_{\substack{i \in [\ell_v], (\bm{e}, \bm{\ac}) \in \scr{C}_v: \\ e_i = (u,v)}} 
	 \prod_{j < i} (1 - q_{e_j}(\ac_j)) \cdot z_v( \bm{e}, \bm{\ac})  \leq \ell_u && \forall u \in U \label{eqn:sw_offline_patience}\\
	&& \sum_{ (\bm{e}, \bm{\ac}) \in \scr{C}_v} z_v(\bm{e},\bm{\ac}) \le 1 && \forall v \in V,  \label{eqn:sw_distribution} \\
	&&z_v( \bm{e}, \bm{\ac}) \ge 0 && \forall v \in V, (\bm{e}, \bm{\ac}) \in \scr{C}_v
\end{align}
Consider an optimal policy for $G$  (i.e, its expected reward is $\text{OPT}(G)$). If
we interpret $z_{v}(\bm{e},\bm{\ac})$ as the probability this policy queries $\bm{e} \in \scr{C}_v$ via actions $\bm{\ac}$, the objective provides an upper bound on the policy's expected reward, and \eqref{eqn:sw_distribution} ensures that each $v \in V$ chooses at most one policy for $\partial(v)$. Next, \eqref{eqn:sw_matching} corresponds to each $u \in U$ being matched at most once in expectation, and \eqref{eqn:sw_offline_patience} corresponds to each $u \in U$ receiving at most $\ell_u$
queries in expectation. 

\begin{theorem}[Proof in \Cref{pf:thm:LP_relaxation}]\label{thm:LP_relaxation}
Let $\LPOPT(G)$ denote the optimal value of \eqref{LP:social_welfare}. Then, $\OPT(G) \leq \LPOPT(G)$.
\end{theorem}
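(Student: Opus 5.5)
The plan is to take an optimal policy $\pi$ for $G$ and extract from it a feasible solution of \eqref{LP:social_welfare} whose objective value equals the expected reward of $\pi$, namely $\OPT(G)$. The difficulty is that, from the point of view of a single $v \in V$, $\pi$ does not follow a fixed configuration. Its queries on $\partial(v)$ are interleaved with queries elsewhere, depend adaptively on all outcomes observed so far, and stop as soon as some edge of $v$ succeeds. The variables $z_v(\bm{e},\bm{a})$ are meant to describe "intended sequences whose prefixes are revealed only while everything fails". So I will define them under a change of measure in which $v$'s own queries are forced to fail.

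\textbf{Step 1 (the measure $\mb{P}_v^0$).} Fix $v \in V$. Let $\mb{P}_v^0$ be the law of the execution of $\pi$ in which every query to an edge of $\partial(v)$ fails deterministically. All other Bernoullis and $\pi$'s internal randomness keep their original distribution. Under $\mb{P}_v^0$, the ordered list $S_v$ of edge/action pairs that $\pi$ queries on $\partial(v)$ consists of distinct edges, by rule 1, and has length at most $\ell_v$, by rule 2. Hence $S_v \in \scr{C}_v$ always, where I include the empty sequence. Define
\[
z_v(\bm{e},\bm{a}) := \mb{P}_v^0[S_v = (\bm{e},\bm{a})].
\]
Constraint \eqref{eqn:sw_distribution} then holds immediately.

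\textbf{Step 2 (change of measure, the key step).} Let $\tau=((e_1,a_1),\dots,(e_i,a_i))$, and let $A_\tau$ be the event that the first $i$ queries on $\partial(v)$ are exactly $\tau$. I will show
\[
\mb{P}[A_\tau] = \mb{P}_v^0[A_\tau]\prod_{j<i}(1-q_{e_j}(a_j)).
\]
The argument is as follows. Under the true law, $A_\tau$ occurs only if the first $i-1$ queries at $v$ fail. Conditioned on that, the run up to the $i$-th query on $\partial(v)$ is identical to the run under $\mb{P}_v^0$. This is because $\pi$'s decisions are measurable with respect to past outcomes, and each Bernoulli at $v$ is fresh and independent of everything revealed before it is queried. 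Formally, I would couple the two processes on a common probability space and sum over histories, so that the likelihood ratio of each history in $A_\tau$ is exactly $\prod_{j<i}(1-q_{e_j}(a_j))$. I expect this to be the main obstacle: the adaptivity and the interleaving with other vertices must be handled carefully, and the natural way to do so is an induction over the steps of $\pi$.

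\textbf{Step 3 (recovering the LP expressions).} Under $\mb{P}_v^0$, the event $A_\tau$ is the disjoint union of the events $\{S_v=(\bm{e},\bm{a})\}$ over all $(\bm{e},\bm{a})$ extending $\tau$. Combining this with Step 2 and summing over all $\tau$ whose last entry is $(e,a)$, I obtain
\[
\mb{P}[e \text{ queried via } a] = \til{z}_e(a)
\]
for every $e \in \partial(v)$, with $\til{z}_e(a)$ exactly as defined in \eqref{eqn:edge_variable_overview}.

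Substituting this into \eqref{eqn:expected_reward_definition} and regrouping by configuration gives
\[
\sum_{e,a} r_e(a)q_e(a)\til{z}_e(a) = \sum_{v}\sum_{(\bm{e},\bm{a})} \srew(\bm{e},\bm{a})\, z_v(\bm{e},\bm{a}).
\]
So the objective of \eqref{LP:social_welfare} equals $\OPT(G)$.

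Finally, fix $u \in U$. The left side of \eqref{eqn:sw_matching} equals $\sum_{e\in\partial(u),a} q_e(a)\mb{P}[e\text{ queried via }a]$, which is the expected number of edges of $u$ that are matched. This is at most $1$ because $\pi$ outputs a matching. Similarly, the left side of \eqref{eqn:sw_offline_patience} is the expected number of queries on $\partial(u)$, which is at most $\ell_u$ by rule 2. Thus the $z_v$ form a feasible solution with value $\OPT(G)$, and therefore $\OPT(G)\le \LPOPT(G)$.
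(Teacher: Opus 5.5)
Your proposal is correct, and it reaches the bound more directly than the paper does. The paper first relaxes the problem so that the matching and patience constraints at each $u\in U$ only need to hold in expectation. It takes an optimal relaxed policy and converts it into a \emph{vertex-iterative} policy: for each $v$, that policy simulates the optimal one with fake bits on edges outside $\partial(v)$ and follows the induced adaptive rule on $\partial(v)$. The relaxation is what makes this new policy admissible, since it may query $(u,v)$ after $u$ is already matched or out of patience in its own run. The paper then takes $z_v(\bm{e},\bm{a})$ to be the probability that the optimal policy ``uses'' $(\bm{e},\bm{a})$ at $v$, and asserts that the two policies have the same query marginals.

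You never construct a new policy. You read $z_v$ off $\pi$ itself as the law of its all-fail sequence at $v$, prove $\mb{P}[e \text{ queried via } a]=\til{z}_e(a)$ by a change of measure, and check the $U$-side constraints on $\pi$'s own run. Your Step 2 is exactly the justification the paper leaves implicit: ``uses'' must mean the sequence along the all-fail branch rather than the realized, truncated one, and this is what reproduces \eqref{eqn:edge_variable_overview}. So your version is more self-contained. The paper's detour buys the conceptual picture behind \Cref{alg:social_welfare_relaxed} and the formally stronger bound $\overline{\text{OPT}}(G)\le\LPOPT(G)$. Your argument gives that stronger bound as well, since it uses only the $V$-side rules pathwise and the $U$-side constraints in expectation.

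Step 2 is also easier than you anticipate, and needs no induction. Pre-draw every $Q_e(a)$ (equivalent, since each edge is queried at most once) and run the true and forced processes on the same randomness. Write $A^0_\tau$ for the event $A_\tau$ in the forced run, and let $E=\{Q_{e_j}(a_j)=0 \text{ for all } j<i\}$. On $E$, a run whose first $i-1$ queries at $v$ are $(e_j,a_j)_{j<i}$ sees identical outcomes in both worlds until its $i$-th query at $v$ is chosen. Hence $A_\tau\cap E=A^0_\tau\cap E$, and $A^0_\tau$ is independent of $E$.

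The one point you should state explicitly is $A_\tau\subseteq E$. This needs $\pi$ never to query an edge at an already-matched $v$. That holds without loss of generality: such a query must have $q_e(a)=0$, reveals nothing, and can be deleted without changing the reward or feasibility.
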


While the above decision variable interpretation is roughly correct, to prove \Cref{thm:LP_relaxation}, we in fact consider a \textit{relaxed} action-reward problem where the constraints on each $u \in U$ need only hold on average. That is,  \textit{in expectation}, (1) $u$ is matched at most once, and (2) $u$ has at most $\ell_u$ queries to its incident edges. We prove that for this relaxed problem, there exists an optimal \textit{vertex-iterative} policy,
which deals completely with all edges incident to a vertex before moving to the next. Policies of this
form are easily seen to be upper bounded by \ref{LP:social_welfare}, and so since the optimal reward of the relaxed problem is no smaller than our original problem (by definition), we establish \Cref{thm:LP_relaxation}.

\section{Rounding a Solution to \ref{LP:social_welfare}} \label{sec:LP_round}
Suppose that we are presented an arbitrary solution $(z_{v}(\bm{e},\bm{\ac}))_{v \in V, (\bm{e},\bm{\ac}) \in \scr{C}_v}$ to \ref{LP:social_welfare} for $G$. Our goal is to use  $(z_{v}(\bm{e},\bm{\ac}))_{v \in V, (\bm{e},\bm{\ac}) \in \scr{C}_v}$ to design a policy
whose expected reward is at least
\begin{equation} \label{eqn:main_approximation_goal}
\beta \cdot \sum_{v \in V} \sum_{(\bm{e},\bm{\ac}) \in \scr{C}_v } \srew(\bm{e}, \bm{\ac})  z_{v}(\bm{e}, \bm{\ac}),
\end{equation}
where $\beta := 1-1/e$ if $\ell_u \in \{1, \infty\}$ for all $u \in U$,
and $\beta := {27} (19 - \frac{67}{e^3})$ otherwise. 
Afterwards, in \Cref{sec:solve_LP},
we shall explain how to efficiently compute a $(1-\eps)$-approximation of an optimal solution
to \ref{LP:social_welfare} for any $\eps > 0$, which combined with \Cref{thm:LP_relaxation} will imply \Cref{thm:main_theorem}. 

In order to understand our approach to proving \eqref{eqn:main_approximation_goal}, it will be useful to again consider a \textit{relaxed policy}, as was done in the proof of \Cref{thm:LP_relaxation}.
A relaxed policy respects the patience constraints of the vertices of $V$,
and outputs a \textit{one-sided matching} $\scr{N}$ of $V$, i.e., $\scr{N} \subseteq E$ such that each $v \in V$ satisfies $|\partial(v) \cap \scr{N}| \le 1$, yet it can query $\partial(u)$ more than $\ell_u$ times or have
$|\scr{N} \cap \partial(u)| \ge 2$, for some $u \in U$. It still must follow the remaining rules of a (proper) policy (i.e.,
it is committal, and each edge is queried via at most one action). Consider now the following relaxed policy which is defined using the solution
$(z_{v}(\bm{e},\bm{\ac}))_{v \in V, (\bm{e},\bm{\ac}) \in \scr{C}_v}$:
\begin{algorithm}[H]
	\caption{Relaxed Rounding of \ref{LP:social_welfare}} 
	\label{alg:social_welfare_relaxed}
	\begin{algorithmic}[1]
		\Require a solution $(z_{v}(\bm{e},\bm{\ac}))_{v \in V, (\bm{e},\bm{\ac}) \in \scr{C}_v}$ to \ref{LP:social_welfare} for $G$.
		\Ensure a one-sided matching $\scr{N}$ of $G$.
                   \State For each $e \in E$ and $a \in \scr{A}$, set $Z_e(a) =0$.
                     \State Independently draw a u.a.r. order $\pi$ on $V$. 
		\For{$v \in V$ in increasing order of $\pi$} 
            \State Draw $(\bm{e}, \bm{\ac}) \in \scr{C}_v$ independently w.p. $z_{v}(\bm{e}, \bm{\ac})$,
            where $\bm{e} = ( e_1, \ldots , e_k)$ and $\bm{\ac} = (\ac_1, \ldots, \ac_k)$.
            \For{$j=1, \ldots, k$}
                    \State Set $Z_{e_j}(a_j) =1$.       \Comment{$e_j$ with $a_j$ is suggested.} \label{line:propose}
                        \State Query $e_j$ via $a_j$.
            \If{$Q_{e_j}(a_j) =1$}
            
            \State $\scr{N} \leftarrow \scr{N} \cup \{e_j\}$.
            \State Exit the internal for loop.
            \EndIf
            \EndFor
		\EndFor
		\State \Return $\scr{N}$.
	\end{algorithmic}
\end{algorithm}

Clearly, \Cref{alg:social_welfare_relaxed} respects the patience constraints of $V$, and each $v \in V$ is assigned at most one edge, so $\scr{N}$ is a one-sided matching.
Now, when line \ref{line:propose}. is reached, we say that edge $e_j$ with action $a_j$ is \textit{suggested}, and denote
the indicator random variable for this event by $Z_{e_j}(a_j)$. Of course $Z_{e_j}(a_j) =1$ if and only if $e_j$ is queried via $a_j$,
however this definition will be useful when we design a (proper) policy. Observe for now that for each $e=(u,v) \in E$
and $a \in \scr{A}$, 
\begin{equation}\label{eqn:edge_variable}
	\til{z}_{e}(\ac):= \mb{P}[Z_{e}(a) = 1] = \mb{P}[\text{$e$ is queried via $a$}]=  \sum_{\substack{i \in [\ell_v], (\bm{e},\bm{\ac}) \in \scr{C}_v: \\ e_i = e, \ac_i =\ac}} \prod_{j < i} (1 - q_{e_j}(\ac_j))\cdot z_{v}( \bm{e},\bm{\ac}). 
\end{equation}
Using the definition of $\til{z}_e(a)$ and \eqref{eqn:edge_variable},
notice that we can write the expected reward of \Cref{alg:social_welfare_relaxed} in two equivalent ways:
\begin{equation} \label{eqn:reward_equivalent}
    \sum_{e \in E}  \sum_{\ac \in \scr{A}} \erew_{e}(\ac) q_{e}(\ac) \til{z}_{e}(\ac) = \sum_{v \in V} \sum_{(\bm{e},\bm{\ac}) \in \scr{C}_v } \srew(\bm{e}, \bm{\ac})  z_{v}(\bm{e}, \bm{\ac}).
\end{equation}
On the other hand, due to the constraints of \eqref{LP:social_welfare}, $(\til{z}_e(a))_{e \in E, a \in \scr{A}}$ must also satisfy some linear constraints. The first
says that the relaxed policy queries any edge via at most one action. The second and third say 
that for any $u \in U$, \textit{in expectation}, the relaxed policy queries at most $\ell_u$ edges of $\partial(u)$,
and assigns at most one edge to $u$.
\begin{lemma}[Proof in \Cref{pf:lem:edge_variable}] \label{lem:edge_variable}
 For any solution $(z_{v}(\bm{e},\bm{\ac}))_{v \in V, (\bm{e},\bm{\ac}) \in \scr{C}_v}$ to \ref{LP:social_welfare} for $G$, it holds that
\begin{itemize}
        \item For each $(u,v) \in E$, $\sum_{a \in \scr{A}} \til{z}_{u,v}(a) \le 1$.
    \item For each $u \in U$, $\sum_{v \in V} \sum_{\ac \in \scr{A}}  \til{z}_{u,v}(\ac) \le \ell_u$, and $\sum_{v \in V} \sum_{\ac \in \scr{A}} q_{u,v}(\ac) \til{z}_{u,v}(\ac) \le 1$.
\end{itemize}
\end{lemma}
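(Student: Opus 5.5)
The plan is to verify each of the three inequalities directly from the definition \eqref{eqn:edge_variable} of $\til{z}_e(a)$ together with the constraints of \ref{LP:social_welfare}. The second and third inequalities will amount to rewriting constraints \eqref{eqn:sw_offline_patience} and \eqref{eqn:sw_matching} in terms of $\til{z}$. The first will use the distribution constraint \eqref{eqn:sw_distribution} together with the fact that, within a single configuration, each edge appears at most once.

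For the two constraints at $u \in U$, fix $u$ and sum \eqref{eqn:edge_variable} over $v \in V$ and $a \in \scr{A}$. The resulting triple sum ranges over $v$, over $(\bm{e},\bm{\ac}) \in \scr{C}_v$, and over indices $i$ with $e_i=(u,v)$ and $\ac_i = a$. Summing over $a$ simply removes the restriction $\ac_i=a$. This leaves exactly the left-hand side of \eqref{eqn:sw_offline_patience}, which is at most $\ell_u$. Doing the same with the extra weight $q_{u,v}(a)$ gives the factor $q_{(u,v)}(\ac_i)$ inside the sum, and this is exactly the left-hand side of \eqref{eqn:sw_matching}, which is at most $1$. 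Both steps only rearrange finite sums of nonnegative terms, so nothing more is needed.

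For the per-edge bound, fix $e=(u,v)$ and sum \eqref{eqn:edge_variable} over $a$. This gives
\begin{equation*}
\sum_{a \in \scr{A}} \til{z}_e(a) = \sum_{(\bm{e},\bm{\ac}) \in \scr{C}_v} z_v(\bm{e},\bm{\ac}) \sum_{i:\, e_i = e} \prod_{j<i}(1-q_{e_j}(\ac_j)).
\end{equation*}
Since $\bm{e}$ is a permutation of distinct edges of $\partial(v)$, the inner sum has at most one term. That term is a product of numbers in $[0,1]$, so the inner sum is at most $1$. The whole expression is then at most $\sum_{(\bm{e},\bm{\ac})} z_v(\bm{e},\bm{\ac}) \le 1$ by \eqref{eqn:sw_distribution}.

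Alternatively, all three bounds can be read off the probabilistic interpretation: $\til{z}_e(a)=\mb{P}[Z_e(a)=1]$ in \Cref{alg:social_welfare_relaxed}, and the events $\{Z_e(a)=1\}$ for different $a$ are disjoint because $e$ is suggested at most once. I expect no real obstacle. The only point needing care is the distinctness of edges in a configuration, which is what makes the per-edge bound hold.
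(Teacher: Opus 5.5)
Your proof is correct and matches the paper's: the two bounds at $u \in U$ come, exactly as you do it, from recognizing the summed $\til{z}$ expressions as the left-hand sides of \eqref{eqn:sw_offline_patience} and \eqref{eqn:sw_matching}. For the per-edge bound, the paper takes the probabilistic route you list as your alternative ($\sum_{a} Z_e(a) \le 1$ in \Cref{alg:social_welfare_relaxed}, then take expectations); your algebraic version just makes explicit the distinctness of edges within a configuration and the use of \eqref{eqn:sw_distribution}, both of which that interpretation relies on.
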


We shall now modify the decisions of the relaxed policy so that it respects patience constraints of $U$,
and outputs a (proper) matching of $G$, while following the decisions of the relaxed policy as best as possible.
Roughly speaking, we will query an edge $e$ via $a$ \textit{only if} $Z_e(a) =1$, but not all the time, due to the patience and matching constraints on $U$.
To formalize this, let us first consider an intermediate problem, where for
a fixed vertex $u_0 \in U$, we respect the patience constraints on $\partial(u_0)$, and ensure
$u_0$ is matched to at most one edge (for the vertices of $U \setminus \{u_0\}$, there are no hard restrictions).

Consider the execution of our relaxed policy from the perspective of $u_0$.  In each step
$t=1, \ldots, |V|$, the vertex $v \in V$ with $\pi(v) = t$ is processed. At this point, suppose that $(u_0,v)$ with $a$ is suggested (i.e., $Z_{u_0,v}(\ac) =1$) -- otherwise, we will never
match $u_0$ to $v$. In our intermediate problem, we can only query $(u_0,v)$ via $a$ if $u_0$ is currently unmatched and has remaining patience.
If we make this decision \textit{greedily}, i.e., query $(u_0,v)$ via $a$ whenever possible, one can prove 
that
\begin{equation} \label{eqn:threshold_selection_guarantee_informal}
    \Pr[\text{$(u_0,v)$ is queried via $a$} \mid Z_{u_0,v} =1] \ge \beta_0,
\end{equation}
for each $v \in V$ and $a \in \scr{A}$, where $\beta_0 := 1/2$ if $\ell_{u_0} \in \{1,\infty\}$, and $\beta_0 := (4 -e)/e \approx 0.4715$ otherwise. Thus, since $\mb{P}[Z_{u_0,v}(a) =1] = \til{z}_{u_0,v}(a)$ by \eqref{eqn:edge_variable}, the expected reward of the match for $u_0$ is
\begin{equation} \label{eqn:single_vertex_reward}
\beta_0 \sum_{v \in V} \sum_{\ac \in \scr{A}} \erew_{e}(\ac) q_{u_0,v}(\ac) \til{z}_{u_0,v}(\ac).
\end{equation}
If we could achieve the same bound for the remaining vertices $u \in U \setminus \{u_0\}$, 
the equivalence in \eqref{eqn:reward_equivalent} implies
that we'd get \eqref{eqn:main_approximation_goal} for the weaker bound of $\beta_0$. 
To achieve our goal of $\beta$, we must beat the greedy strategy,
and also extend this approach to work for all $u \in U$.
Since the former problem is the challenging step,
we focus on this. The latter problem will be handled via simulation in context of our formally defined policy (see \Cref{alg:social_welfare}).

In order to get the bound of $\beta$ claimed in \eqref{eqn:main_approximation_goal}, we must more carefully decide
when we should query $(u_0,v)$ when $Z_{u_0,v}(a) =1$, assuming $u_0$ has remaining patience and is currently unmatched. We make
this decision for $u_0$ using a \textit{patience random-order contention resolution scheme} (P-RCRS). The objective of the P-RCRS is to ensure that $\Pr[\text{$(u_0,v)$ is queried via $a$} \mid Z_{u_0,v}(a) =1] \ge \alpha$ for all $v \in V$ and $a \in \scr{A}$,
where $\alpha \in [0,1]$ is as large as possible. It operates from the aforementioned perspective of the fixed vertex $u_0$, where each time a vertex $v$
is processed with respect to $\pi$, it learns if $Z_{u_0,v}(a) =1$ for some $a \in \scr{A}$, 
and if so, decides whether to query $(u_0,v)$ via $a$. We now formalize an abstract version of this problem for an arbitrary collection of elements and random variables. The elements $[n] = \{1, \ldots ,n\}$ correspond to $\partial(u_0)$, and
$(P_{i}(a))_{i \in [n], a \in \scr{A}}$ (respectively, $(X_i(a))_{i \in [n], a \in \scr{A}}$) correspond
to $(Q_e(a))_{e \in \partial(u_0), a \in \scr{A}}$ (respectively, $(Z_e(a))_{e \in \partial(u_0), a \in \scr{A}}$).
The technical assumptions on the random variables are motivated by \Cref{lem:edge_variable}, and the independent
draws used in the execution of \Cref{alg:social_welfare_relaxed}.

\begin{definition} \label{def:patience_rcrs}
An input to the \textit{patience random-order contention resolution (P-RCRS) problem},
denoted $(\scr{A}, n, \ell, (p_i(a))_{i \in [n], a \in \scr{A}}, (x_i(a))_{i \in [n], a \in \scr{A}})$
is specified by an \textit{action set} $\scr{A}$, an integer $n \ge 1$, \textit{patience}  $1 \le \ell \le \infty$, and probability vectors
$(p_i(a))_{i \in [n], a \in \scr{A}}$ and $(x_i(a))_{i \in [n], a \in \scr{A}}$, where
\begin{equation} \label{eqn:threshold_constraints_formal}
    \text{$\sum_{i \in [n]} \sum_{a \in \scr{A}} x_{i}(a) \le \ell$, $\sum_{i \in [n]} \sum_{a \in \scr{A}} p_{i}(a) x_{i}(a) \le 1$ and $\sum_{a \in \scr{A}} x_{i}(a) \le 1$ for each $i \in [n]$.}
\end{equation}
We assume that each element $i \in [n]$ independently draws $P_{i}(a) \sim \Ber(p_i(a))$ for each $a \in \scr{A}$,
as well as $(X_i(a))_{a \in \scr{A}}$ where $X_i(a) \sim \Ber(x_i(a))$ and $\sum_{a \in \scr{A}} X_i(a) \le 1$
(which is well-defined due to the last assumption in \eqref{eqn:threshold_constraints_formal}).
We refer to $(P_i(a))_{i \in [n], a \in \scr{A}}$ as the \textit{states}, and $(X_i(a))_{i \in [n], a \in \scr{A}}$
as the \textit{suggestions}.

A \textit{patience random-order contention resolution} outputs at most one pair $(i,\ac)$ with $P_i(\ac) X_{i}(a) =1$. 
Suppose that $\pi$ is an independently drawn u.a.r. permutation of $[n]:= \{1, \ldots , n\}$.
Initially, the instantiations of $(P_i(a), X_i(a))_{i \in [n], a \in \scr{A}}$ are unknown,
and the elements of $[n]$ are presented in the increasing order specified by  $\pi$.
That is, in step $t\in [n]$, if $\pi(i) = t$ for $i \in [n]$, then
it is revealed $(X_i(a))_{a \in \scr{A}}$. If $\sum_{a \in \scr{A}} X_{i}(a) =0$, then it \textit{passes} on $i$ and moves to the next element. Otherwise, if $X_{i}(a) = 1$ for some $a \in \scr{A}$, then it either passes on $i$, or \textit{queries} $i$ via $a$, thereby revealing $P_{i}(a)$. If $P_i(a) =1$, then it must output $(i,a)$. Otherwise, it passes on $i$. It is also constrained in
that it may query at most $\ell$ elements of $[n]$ (if $\ell = \infty$, then it may query all the elements of $[n]$). Given $\alpha \in [0,1]$, we say that a P-RCRS is $\alpha$-\textit{selectable} on the input, provided for all $i \in [n]$ and $a \in \scr{A}$,
\begin{equation} \label{eqn:prcrs_selection_definition}
    \Pr[\text{$i$ is queried via $a$} \mid X_{i}(\ac) =1] \ge \alpha.
\end{equation}
Here \eqref{eqn:prcrs_selection_definition} is taken over the randomness in $(P_{i}(a), X_{i}(a))_{i \in [n], a \in \scr{A}}$, $\pi$, as well as any randomized decisions made by the P-RCRS. When \eqref{eqn:prcrs_selection_definition} holds,
we also say that the P-RCRS has a \textit{selection guarantee} of $\alpha$ on the input. If
for a fixed $\alpha \in [0,1]$, a P-RCRS satisfies \eqref{eqn:prcrs_selection_definition} for \textit{all} inputs with patience $\ell$, then we say it has a selection guarantee of $\alpha$ on inputs with patience $\ell$.
\end{definition}

Given this definition, we are now ready to describe our policy. 
We assume that our policy is given access to a P-RCRS $\psi_u$
for each $u \in U$. In order to use the selection guarantee of each $\psi_u$ (opposed to just a single vertex $u_0$ as in the previously discussed intermediate problem),  our policy defines its own suggestions $(\til{Z}_{e}(a))_{e \in E, a \in \scr{A}}$.
We ensure that these are distributed identically as $(Z_{e}(a))_{e \in E, a \in \scr{A}}$ from \Cref{alg:social_welfare_relaxed}, by using simulated bits $(\til{Q}_{e}(a))_{e \in E, a \in \scr{A}}$ to make ``fake'' queries when necessary.

\begin{algorithm}[H]
	\caption{Rounding \ref{LP:social_welfare} via P-RCRS} 
	\label{alg:social_welfare}
	\begin{algorithmic}[1]
		\Require a solution $(z_{v}(\bm{e},\bm{\ac}))_{v \in V, (\bm{e},\bm{\ac}) \in \scr{C}_v}$ to \ref{LP:social_welfare} for $G$.
		\Ensure a matching $\scr{M}$ of $G$.
		\State $\scr{M} \leftarrow \emptyset$.
            \State For each $e \in E$ and $a \in \scr{A}$, set $\til{Z}_e(a) =0$, and draw $\til{Q}_{e}(a) \sim \Ber(q_{e}(a))$ independently.
            \State Independently draw a u.a.r. order $\pi$ on $V$. 
		\For{$v \in V$ in increasing order of $\pi$} 
            \State Draw $(\bm{e}, \bm{\ac}) \in \scr{C}_v$ independently w.p. $z_{v}(\bm{e}, \bm{\ac})$,
            where $\bm{e} = ( e_1, \ldots , e_k)$ and $\bm{\ac} = (\ac_1, \ldots, \ac_k)$.

            \For{$j=1, \ldots, k$}
            \State Set $e_j = (u_j,v)$ and $\til{Z}_{u_j,v}(a_j) =1$.
            \State Execute $\psi_{u_j}$ on the P-RCRS input $(\scr{A}, \partial(u_j), (q_{u_j,v'}(a'))_{v' \in V, a' \in \scr{A}}, (\til{z}_{u_j,v'}(a'))_{v'\in V, a' \in \scr{A}})$ using the arrival order on $\partial(u_j)$
            induced by $\pi$, with random variables $(Q_{u_j,v'}(a))_{v' \in V, a' \in \scr{A}}$ as the states
            and $(\til{Z}_{u_j,v'}(a'))_{v' \in V, a' \in \scr{A}}$ as the suggestions. \label{eqn:coupling}
            \If{$\psi_{u_j}$ queries $(u_j,v)$ via $\ac_j$}
            
            \State $\scr{M} \leftarrow \scr{M} \cup \{(u_j,v)\}$ provided $Q_{u_j,v}(a_j) =1$.
            \State Exit the internal for loop.
            \ElsIf{$\til{Q}_{u_j,v}(a_j) =1$}                \Comment{simulate to maintain correct marginals}
            \State Exit the internal for loop.
            \EndIf
            \EndFor
		\EndFor
		\State \Return $\scr{M}$.
	\end{algorithmic}
\end{algorithm}
Now, for any $u \in U$, $\psi_u$ is passed the input $(\scr{A}, \partial(u), (q_{u,v}(a))_{v \in V, a \in \scr{A}}, (\til{z}_{u,v}(a))_{v\in V, a \in \scr{A}})$, which satisfies the required inequalities in \Cref{def:patience_rcrs}, due to \Cref{lem:edge_variable}.
However, we still must ensure that the random variables used in line \ref{eqn:coupling}. are drawn via
$(q_{u,v}(a))_{v \in V, a \in \scr{A}}$ and $(\til{z}_{u,v}(a))_{v\in V, a \in \scr{A}}$ in the required
way as specified in \Cref{def:patience_rcrs}. We verify this in \Cref{lem:fixed_query_vertex}.

\begin{lemma}[Proof in \Cref{pf:lem:fixed_query_vertex}] \label{lem:fixed_query_vertex}
For each $u \in U$, $(\scr{A}, \partial(u), (q_{u,v}(a))_{v \in V, a \in \scr{A}}, (\til{z}_{u,v}(a))_{v\in V, a \in \scr{A}})$ satisfies the required inequalities of a P-RCRS input. Moreover, the random variables $(Q_{u,v}(a))_{a \in \scr{A}}$
and $(\til{Z}_{u,v}(a))_{a \in \scr{A}, v \in V}$ used in line \ref{eqn:coupling}. are drawn via $(q_{u,v}(a), \til{z}_{u,v}(a))_{v\in V, a \in \scr{A}}$
in the way specified in \Cref{def:patience_rcrs}.
\end{lemma}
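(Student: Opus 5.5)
The first assertion is immediate from \Cref{lem:edge_variable}. Taking $n=|\partial(u)|$, $\ell=\ell_u$, $p_v(a)=q_{u,v}(a)$ and $x_v(a)=\til{z}_{u,v}(a)$, its three bullet inequalities are exactly the three constraints in \eqref{eqn:threshold_constraints_formal}.

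The substance is the distributional claim. It has three parts: (i) the suggestion vectors $(\til{Z}_{u,v}(a))_{a}$ have marginals $\til{z}_{u,v}(a)$ and place at most one action per $v$; (ii) these vectors are independent across $v \in V$ and independent of the states $(Q_{u,v}(a))$; (iii) the arrival order on $\partial(u)$ is u.a.r.\ and independent of everything else.

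\textbf{Coupling.} I will couple \Cref{alg:social_welfare} with \Cref{alg:social_welfare_relaxed}. For each $v$, define the ``effective bit'' at step $j$ as $Q_{e_j}(a_j)$ if $\psi_{u_j}$ queries $(e_j,a_j)$, and $\til{Q}_{e_j}(a_j)$ otherwise. The internal loop for $v$ continues exactly when this effective bit is $0$.

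\textbf{Effective bits are fresh.} The key step is to show, by induction over the steps of the whole algorithm, that each effective bit is a fresh $\Ber(q_{e_j}(a_j))$ variable, independent of all previously revealed information.
\begin{itemize}
\item Each pair $(e,a)$ is suggested at most once, so $\til{Q}_{e_j}(a_j)$ has never been inspected.
\item $Q_{e_j}(a_j)$ is revealed only if $\psi_{u_j}$ queries it at this very moment.
\item The decision of $\psi_{u_j}$ whether to query depends only on the past and on its own internal randomness, not on $Q_{e_j}(a_j)$ or $\til{Q}_{e_j}(a_j)$.
\end{itemize}
Hence the effective bit, conditioned on the history, is $\Ber(q_{e_j}(a_j))$. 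This gives a coupling in which, for each $v$, the draw of $(\bm{e},\bm{a})$ together with its effective bits has the same law as in \Cref{alg:social_welfare_relaxed}, independently across $v$. In particular $\mb{P}[\til{Z}_{u,v}(a)=1]=\til{z}_{u,v}(a)$ by \eqref{eqn:edge_variable}, and $\sum_a\til{Z}_{u,v}(a)\le 1$ because $\bm{e}$ visits each edge at most once.

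\textbf{Independence from the states, the main obstacle.} The subtle point is that $\til{Z}_{u,v}(\cdot)$ may depend on $Q_{u',v}(\cdot)$ for other $u'\in U$, through the effective bits of earlier edges in $v$'s sequence. It must nevertheless be independent of the states $(Q_{u,v'}(a))_{v',a}$ seen by $\psi_u$, and independent across $v$.

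Across $v$ this is clear, since each vertex uses disjoint randomness: its own draw of $(\bm{e},\bm{a})$ and its own effective bits, which are fresh. For the states, fix $u$. The event $\til{Z}_{u,v}(a)=1$ is determined by $v$'s configuration draw and by effective bits of edges $(u',v)$ with $u'\neq u$ preceding $(u,v)$ in $\bm{e}$. None of these involve $Q_{u,\cdot}$: the effective bits are fresh with respect to $(Q_{u,v'})$ in the sense above, and $Q_{u,v}(a)$ itself is revealed only after the suggestion is made.

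I would formalize this with a principle of deferred decisions. Sample each $Q_{u,v'}(a)$ only when $\psi_u$ queries $(u,v')$ via $a$. The suggestion process for $u$ never reads $Q_{u,\cdot}$ before it fixes the suggestion, so the joint law matches \Cref{def:patience_rcrs}. Finally, $\pi$ restricted to $\partial(u)$, identified with $V$, is u.a.r.\ and independent of all other randomness.
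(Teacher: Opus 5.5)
Your proposal is correct and follows essentially the paper's route. The first assertion is read off from \Cref{lem:edge_variable}, and the distributional claim rests on the same mechanism the paper uses: the choice between the real bit $Q_{e_j}(a_j)$ and the fake bit $\til{Q}_{e_j}(a_j)$ is made before either is revealed, and both are $\Ber(q_{e_j}(a_j))$, so $v$'s stopping process reproduces \eqref{eqn:edge_variable}. Your step-by-step ``fresh effective bit'' induction is a cleaner packaging than the paper's conditioning on the whole real/fake pattern $\bm{F}^{(v)}$, which is itself correlated with the earlier bits; it also makes explicit the independence of the suggestions from $u$'s own states and across $v$, which the paper only asserts.
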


Assuming \Cref{lem:fixed_query_vertex}, it is now easy to analyze the performance of \Cref{alg:social_welfare}:

\begin{theorem} \label{thm:reduction_to_trs}
Fix $\alpha \ge 0$. Suppose that each P-RCRS $\psi_u$ used by \Cref{alg:social_welfare} is $\alpha$-selectable. Then, expected
reward of \Cref{alg:social_welfare} is at least
$
\alpha \cdot \sum_{v \in V} \sum_{(\bm{e},\bm{\ac}) \in \scr{C}_v } \srew(\bm{e}, \bm{\ac})  z_{v}(\bm{e}, \bm{\ac}).
$
\end{theorem}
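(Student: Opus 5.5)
The plan is to compute the expected reward of \Cref{alg:social_welfare} edge by edge, using \eqref{eqn:expected_reward_definition}, and then compare with the right-hand side of \eqref{eqn:reward_equivalent}. The first step is to record which events produce reward. An edge $e=(u,v)$ contributes reward $r_e(a)$ only when three things happen: $\psi_u$ queries $(u,v)$ via $a$, the policy adds $e$ to $\scr{M}$, and $Q_{u,v}(a)=1$. We also check that \Cref{alg:social_welfare} is a proper policy. The vertex $v$ stops its internal loop after the first real query that succeeds. The vertex $u$ respects $\ell_u$ and is matched at most once, because $\psi_u$ performs at most $\ell_u$ queries and outputs at most one successful pair. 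Each edge is also queried via at most one action, since $\sum_a \til{Z}_{e}(a)\le 1$.

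The key point is that every real query to an edge incident to $u$ is made by $\psi_u$. So $\{e \text{ queried via } a\}$ coincides with $\{\psi_u \text{ queries } (u,v) \text{ via } a\}$, and that event can only occur when $\til{Z}_{u,v}(a)=1$. By \Cref{lem:fixed_query_vertex}, the random variables fed to $\psi_u$ are distributed exactly as required by \Cref{def:patience_rcrs} for the input built from $(q_{u,v}(a),\til{z}_{u,v}(a))$, and this input is valid. Hence the $\alpha$-selectability of $\psi_u$ gives
\[
\Pr[e \text{ queried via } a] = \Pr[\til{Z}_{u,v}(a)=1]\cdot\Pr[\psi_u \text{ queries } (u,v) \text{ via } a \mid \til{Z}_{u,v}(a)=1] \ge \alpha\,\til{z}_{u,v}(a).
\]
This uses $\Pr[\til{Z}_{u,v}(a)=1]=\til{z}_{u,v}(a)$, which is part of the distributional claim of \Cref{lem:fixed_query_vertex}. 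The arrival order on $\partial(u)$ induced by $\pi$ must be u.a.r., and it is, since $\pi$ is u.a.r. on $V$.

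Next, the decision to query $e$ via $a$ is made before $Q_e(a)$ is revealed, and $Q_e(a)$ is independent of everything observed earlier. Therefore \eqref{eqn:expected_reward_definition} applies, and the expected reward equals $\sum_{e,a} r_e(a)q_e(a)\Pr[e \text{ queried via } a]$. This is at least $\alpha\sum_{e,a} r_e(a)q_e(a)\til{z}_e(a)$, because the rewards are nonnegative. The identity \eqref{eqn:reward_equivalent} then converts this into the stated bound.

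The main obstacle is the justification that the conditional query probability inside the full algorithm equals the guarantee $\psi_u$ enjoys in isolation. There is a concern here: the states $Q_{u,v'}$ for $v'$ processed earlier can influence which edges of later vertices $v''$ are suggested to $u$. In the relaxed process, whether $v''$ reaches $(u,v'')$ depends only on that vertex's own success bits. To keep the suggestions independent of $\psi_u$'s states, the simulated bits $\til{Q}$ are used whenever $\psi_{u_j}$ declines. Since this is exactly the content of \Cref{lem:fixed_query_vertex}, which we may assume, the proof reduces to invoking it carefully.
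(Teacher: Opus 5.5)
Your proposal is correct and follows essentially the same route as the paper. Both arguments write the expected reward edge by edge via \eqref{eqn:expected_reward_definition}, condition on the suggestion $\til{Z}_{u,v}(a)=1$, use \Cref{lem:fixed_query_vertex} to justify applying the $\alpha$-selectability of $\psi_u$, and finish with \eqref{eqn:reward_equivalent}. The paper leaves implicit the extra points you check: that the algorithm is a proper policy, that rewards are nonnegative, and that the arrival order induced on $\partial(u)$ is uniform.
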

\begin{proof}[Proof of \Cref{thm:reduction_to_trs}]

Using \Cref{lem:fixed_query_vertex}, the expected reward of the policy is
\begin{align}
    &\sum_{u \in U} \sum_{v \in V, \ac \in \scr{A}} \erew_{(u,v)}(\ac) \cdot \mb{P}[ \til{Z}_{u,v}(a) =1] \cdot \mb{P}[Q_{u,v}(a) =1 \cap \text{\{$(u,v)$ is queried via $a$\}} \mid  \til{Z}_{u,v}(a) =1] \cdot  \notag\\
    &=\sum_{u \in U} \sum_{v \in V, \ac \in \scr{A}}  \erew_{u,v}(\ac) \cdot q_{u,v}(\ac) \cdot \til{z}_{u,v}(\ac) \cdot \mb{P}[\text{$(u,v)$ is queried via $a$} \mid  \til{Z}_{u,v}(a) =1], \notag
\end{align}
where the equality follows since the policy decides to query $(u,v)$ via $a$ prior to revealing $Q_{u,v}(a)$, and that $\mb{P}[\til{Z}_{u,v}(a)] = \til{z}_{u,v}(a)$ by \Cref{lem:fixed_query_vertex}.
On the other hand, since each $u \in U$ executes an $\alpha$-selectable P-RCRS $\psi_u$ on a well-defined input due to \Cref{lem:fixed_query_vertex}, the definition of $\alpha$-selectable
implies that $\mb{P}[\text{$(u,v)$ is queried via $a$} \mid  \til{Z}_{u,v}(a) =1] \ge \alpha$
for each $\ac \in \scr{A}$ and $v \in V$. The theorem thus follows by \eqref{eqn:reward_equivalent}.
\end{proof}

It remains to argue that there exists a $\beta$-selectable P-RCRS for the dependence
on $\beta$ as claimed in \Cref{thm:main_theorem}.

% and $\alpha_{\ell} := \int_{0}^{1} e^{-y}\mb{P}[ \Pois( (\ell - 1) y) < \ell] dy$.
\begin{theorem} \label{thm:trs}
For inputs with patience $2 \le \ell < \infty$, there exists a P-RCRS with a selection guarantee of $\frac{1}{27} (19 - \frac{67}{e^3}) \approx 0.5801$. For inputs with patience $\ell \in \{1, \infty\}$,
this selection guarantee improves to $1-1/e \approx 0.6321$.
\end{theorem}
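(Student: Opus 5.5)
We prove \Cref{thm:trs} by exhibiting explicit P-RCRSs, handling $\ell \in \{1,\infty\}$ and $2 \le \ell < \infty$ separately. Throughout, we replace the uniform order $\pi$ by i.i.d.\ arrival times $T_i \sim U[0,1]$, which give the same distribution over orders. Whenever an element $i$ arrives with $X_i(a)=1$, we decide whether to query it using an independent coin whose bias depends only on the arrival time $T_i = t$. This is the approach of \cite{Lee2018,brubach2024offline}.

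\textbf{The cases $\ell = \infty$ and $\ell = 1$.}

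For $\ell = \infty$, only the matching constraint $\sum_{i,a} p_i(a)x_i(a) \le 1$ matters. We treat each pair $(i,a)$ as an "active" element of weight $y_i := \sum_a p_i(a)x_i(a)$, meaning element $i$ is suggested and would succeed. Running the single-item RCRS of \cite{Lee2018} on these weights, we query $i$ via $a$ at time $t$ with probability $e^{-t y_i}$, provided nothing has been output yet. Equivalently, the survival probability is $e^{-t y_i}$ times the probability that no earlier element succeeded. The standard calculation gives $\Pr[\text{query} \mid X_i(a)=1] \ge \int_0^1 e^{-t}\,dt = 1-1/e$, using $\prod_j (1 - y_j t \cdot(\cdots)) \ge e^{-t\sum y}$ and $\sum_j y_j \le 1$.

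One point needs care. Querying and not succeeding does not consume anything when $\ell = \infty$. The per-element failure event at time $s$ therefore has probability exactly $y_j$ times the query probability, which is what makes the RCRS analysis go through.

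For $\ell = 1$, the binding constraint is $\sum x_i(a) \le 1$, since $p \le 1$ means the matching constraint is implied. Every query consumes the single unit of patience, so we run the same RCRS with weights $x_i := \sum_a x_i(a)$ in place of $y_i$. This again gives $1-1/e$.

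\textbf{The case $2 \le \ell < \infty$.}

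Here both constraints are active. Following \cite{brubach2024offline,DBLP:journals/corr/abs-2205-08667}, we query an arriving suggested $(i,a)$ at time $t$ with probability $c(t)$, for a tuned attenuation function $c$, whenever $u_0$ is unmatched and fewer than $\ell$ queries have been made. A candidate is $c(t) = e^{-t}$ or a piecewise version. The selection probability for $(i,a)$ at time $t$ is $c(t)\Pr[\text{unmatched and patience} < \ell \text{ used at } t]$. We lower-bound the state probability by a union bound. Being matched is controlled via the matching constraint, roughly $1-e^{-t}$-type, and exhausting patience is controlled via the patience constraint by the tail of a sum of independent Bernoullis with total mean at most $\ell t$. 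The two failure events are correlated because a query that succeeds does not also use further patience. Integrating $\int_0^1 c(t)\cdot(\text{state bound})\,dt$ and optimizing should yield $\frac{1}{27}(19 - 67/e^3)$. The $e^{-3}$ term and the $1/27$ suggest the worst case is $\ell = 2$, with optimum at mass split with three relevant pieces, so we reduce to $\ell=2$ by showing the bound is monotone in $\ell$.

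\textbf{Main obstacle.}

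The main obstacle is identifying the worst-case input for the joint matching/patience process. The probability that $u_0$ is still available depends on how the mass $(x_i, p_i)$ is distributed across elements, not only on the two aggregate constraints. The paper signals this with the exchange argument of \Cref{lem:exchange_argument}, which is the step overlooked in \cite{brubach2024offline}.

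Our first attempt will be to show that, for fixed aggregates, splitting elements into infinitesimal pieces and pushing $p_i$ to extremes (either $p_i = 1$ or $p_i$ tiny) only decreases the availability probability. This reduces the problem to a Poisson limit with two types of elements: "sure successes", which consume the matching constraint, and "pure failures", which consume patience. The availability probability becomes an explicit function, $\Pr[\Pois$ of failures $< \ell$ before the first success$]$. Evaluating it at $\ell = 2$ under the optimal split of mass gives the stated constant.
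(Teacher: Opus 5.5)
The case $\ell\in\{1,\infty\}$ is fine and matches the paper: you run the $(1-1/e)$-selectable RCRS of \cite{Lee2018,brubach2024offline} on the weights $\sum_a p_i(a)x_i(a)$, resp.\ $\sum_a x_i(a)$. The case $2\le\ell<\infty$ has a genuine gap.

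\textbf{The scheme cannot exceed $1/2$.} No scheme of the form you propose can beat $1/2$: an independent coin of bias $c(t)$, depending only on arrival time, used whenever the element is still available. Take $n=2$, $(x_1,p_1)=(\delta,1)$ and $(x_2,p_2)=(1-\delta,1)$. This input is feasible for every $\ell\ge2$, and patience never binds. Let $C(t):=\int_0^t c(s)\,ds\le1$. Element $2$ is output before time $t$ with probability $(1-\delta)C(t)$, so
\begin{equation*}
\Pr[\text{$1$ is queried}\mid X_1=1]=\int_0^1 c(t)\bigl(1-(1-\delta)C(t)\bigr)\,dt=C(1)-\tfrac{1-\delta}{2}\,C(1)^2\le\tfrac{1+\delta}{2}.
\end{equation*}
For $c(t)=e^{-t}$ this is about $0.43$.

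The same computation shows your splitting step runs backwards under time-only attenuation. Halving a sure-success element of mass $z$ raises its availability factor from $1-zC(t)$ to $(1-zC(t)/2)^2$. So one heavy element is the worst case, not the Poisson regime.

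The paper's attenuation instead depends on the element's own mass: $B_i\sim\Ber(b(p_ix_i))$ with $b(s)=\beta/\int_0^1e^{-y(1-s)}\Pr[\Pois(2y)<3]\,dy$, which is decreasing. This is exactly what makes \Cref{property:first_order}(2) hold; for $b\equiv1$ its two sides are $1-z/2$ and $1-z/2+z^2/12$. That in turn is what lets \Cref{lem:splitting_argument} move $N_1$ to the Poisson regime.

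A union bound is also too lossy. After \Cref{lem:p_values}, availability factors exactly as $\Pr[L_0\le\ell-1\mid Y_1=y]\prod_{j\in N_1}(1-yb(x_j)x_j)$. Replacing such a product $AB$ by $A+B-1$ loses the constant.

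\textbf{Pure-failure elements cannot be sent to the Poisson limit.} This fails even with the right scheme. \Cref{lem:poisson_lemma} requires $y\,x(N_0)<\ell-1$. When the matching constraint is slack, $x(N_0)>\ell-1$, and concentrated mass is worse than Poisson. For example, take $\ell=2$, $x_1\to0$, $N_1=\emptyset$ and two failure elements of mass $1$. Then $\Pr[L_0\le1\mid Y_1=y]=1-y^2$, which vanishes at $y=1$ and integrates to $2/3$. The Poisson value is $\int_0^1\Pr[\Pois(2y)\le1]\,dy\approx0.73$, which is larger.

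This is precisely the overlooked step you cite. The paper resolves it by first justifying $x(N_1)=1-x_1$: via the exchange argument \Cref{lem:exchange_argument} for $\ell=2$, and via the monotonicity of $F_\ell$ in \Cref{prop:worst_case_patience_mid} for $3\le\ell<120$. Then $x(N_0)=\ell-1$, and the Poisson bound applies on all of $[0,1)$. For $\ell\ge120$ it uses Bennett's inequality instead.

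\textbf{The extremal case is $\ell=3$, not $\ell=2$.} We have
\begin{equation*}
\tfrac{1}{27}\bigl(19-\tfrac{67}{e^3}\bigr)=\int_0^1e^{-3y}(1+2y+2y^2)\,dy=\int_0^1e^{-y}\Pr[\Pois(2y)<3]\,dy,
\end{equation*}
that is, success mass $1$ and failure mass $2$. The $\ell=2$ analogue is $\tfrac34-\tfrac{5}{4e^2}\approx0.5808$, and $\ell=4$ gives about $0.5826$. So the bound is not monotone in $\ell$, and evaluating at $\ell=2$ does not produce the stated constant. You need an argument that covers every $\ell$.
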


\subsection{Proving \Cref{thm:trs}} \label{sec:main_section_srcrs}
In order to design a P-RCRS with the guarantee claimed in \Cref{thm:trs},
we first prove a reduction to the case when the input has a \textit{trivial action set} (i.e., $|\scr{A}| =1$). 
When $|\scr{A}| =1$, we note that this is a special case of the definition in \cite{Adamczyk15} for single-item selection and uniform matroid (i.e., patience) query constraints. More, as we expand upon below, for $\ell \in \{1, \infty\}$ and $|\scr{A}| =1$, the P-RCRS problem is equivalent
to the \textit{(single-item) random order
contention resolution (RCRS) problem}, and so we will be able to use existing RCRS results to establish these cases of \Cref{thm:trs}.

We now state our reduction from the case of an arbitrary action space to
a trivial one. The main idea used to prove \Cref{lem:trs_to_prcrs} is as follows: Suppose we are given an arbitrary input to the P-RCRS problem with $|\scr{A}| \ge 2$, whose states and suggestions we denote by $(P_i(a))_{a \in \scr{A}, i \in [n]}$ and  $(X_i(a))_{a \in \scr{A}, i \in [n]}$, respectively.
We construct an input to the P-RCRS problem with a single action, and couple it so that its states $(X_i)_{i \in [n]}$ and suggestions $(P_i)_{i \in [n]}$ satisfy $X_i = \sum_{a \in \scr{A}} X_i(a)$ and $P_i X_i = \sum_{a \in \scr{A}} P_{i}(a) X_i(a)$ for all $i \in [n]$.
By executing an $\alpha$-selectable P-RCRS for a single action on $(P_i,X_i)_{i=1}^n$, we can then transfer its
selection guarantee to apply to the input with $|\scr{A}| \ge 2$.

\begin{lemma}[Proof in \Cref{pf:lem:trs_to_prcrs}]\label{lem:trs_to_prcrs}
Given an $\alpha$-selectable P-RCRS for inputs with patience $\ell$ and the trivial action space, there exists an $\alpha$-selectable P-RCRS
for inputs with patience $\ell$ and an arbitrary action space.
\end{lemma}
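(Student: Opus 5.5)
The plan is to reduce an arbitrary input $(\scr{A}, n, \ell, (p_i(a)), (x_i(a)))$ to a single-action input $(\{\ast\}, n, \ell, (p_i), (x_i))$ by aggregating over actions. Set
\[
x_i := \sum_{a \in \scr{A}} x_i(a), \qquad p_i := \frac{\sum_{a \in \scr{A}} p_i(a) x_i(a)}{x_i}
\]
(with $p_i := 0$ if $x_i = 0$). Then $p_i \in [0,1]$, since $p_i$ is a convex combination of the $p_i(a)$. The constraints \eqref{eqn:threshold_constraints_formal} transfer directly: $\sum_i x_i \le \ell$ and $\sum_i p_i x_i = \sum_{i,a} p_i(a)x_i(a) \le 1$, and $x_i \le 1$ holds by the last assumption. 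So this is a valid P-RCRS input with the trivial action space and patience $\ell$, and we may run the assumed $\alpha$-selectable scheme $\psi$ on it.

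Next comes the coupling. Given the original random variables, define $X_i := \sum_a X_i(a) \sim \Ber(x_i)$, and $P_i := \sum_a P_i(a)X_i(a)$ on the event $X_i = 1$. Conditioned on $X_i(a) = 1$, we have $P_i = P_i(a) \sim \Ber(p_i(a))$, so conditioned on $X_i = 1$ the variable $P_i$ is $\Ber(p_i)$. On $X_i = 0$, draw $P_i \sim \Ber(p_i)$ independently; it is never revealed anyway. Across $i$ the pairs $(X_i, P_i)$ are independent, and $P_i$ is independent of $X_i$ with the correct marginal. This gives exactly the single-action input law of \Cref{def:patience_rcrs}.

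The new scheme $\psi'$ runs as follows. It uses the same random order $\pi$. When element $i$ arrives with $X_i(a) = 1$, it reveals $X_i = 1$ to $\psi$. If $\psi$ queries $i$, then $\psi'$ queries $i$ via $a$ and observes $P_i(a) = P_i$, which it feeds back to $\psi$. If $P_i(a) = 1$, it outputs $(i,a)$, exactly when $\psi$ would stop. If $X_i = 0$, it passes and tells $\psi$ that $X_i = 0$. The number of queries made by $\psi'$ equals the number made by $\psi$, so the patience constraint holds. Moreover, $\psi'$ only needs information it is allowed to see at each step.

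It remains to transfer the selection guarantee. The key claim is
\[
\Pr[i \text{ queried via } a \mid X_i(a)=1] = \Pr[\psi \text{ queries } i \mid X_i = 1] \ge \alpha.
\]
The concern is that conditioning on the finer event $X_i(a) = 1$ could bias $\psi$'s decision. It does not. Whether $\psi$ queries $i$ depends only on $\pi$, on $(X_j, P_j)$ for elements $j$ arriving before $i$, on $X_i$, and on $\psi$'s internal randomness. All of these are independent of which action $a$ realized $X_i = 1$, because the other elements are independent of element $i$ and the query decision is made before $P_i$ is revealed. Hence the conditional probability given $X_i(a) = 1$ equals the one given $X_i = 1$. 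This independence step is the one point that needs care: the proof should state the measurability of $\psi$'s decision explicitly, and handle the degenerate case $x_i(a) = 0$, where the condition is vacuous.
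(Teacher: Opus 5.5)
Your proposal is correct and follows the paper's proof essentially step for step. It uses the same aggregation $x_i = \sum_a x_i(a)$, $p_i = \sum_a p_i(a)x_i(a)/x_i$, the same coupling with an independent fallback draw of $P_i$ on $\{X_i = 0\}$, and the same simulation of $\psi$ in which $\psi'$ queries via whichever action was suggested. Your justification that conditioning on $X_i(a)=1$ rather than $X_i=1$ leaves $\psi$'s query probability unchanged is slightly more careful than the paper's one-line remark. You note that the decision is determined by $\pi$, $\psi$'s internal coins, the other elements' data and $X_i$, all independent of which action realized $X_i=1$.
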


Due to \Cref{lem:trs_to_prcrs}, we now focus exclusively on the P-RCRS problem with the trivial action set for the remainder of the section. As previously suggested, we drop the dependence on $a \in \scr{A}$,
and write our input as  $\scr{I} =(n,  \ell, (p_i)_{i=1}^n, (x_i)_{i=1}^n)$, 
where now $\sum_{i=1}^n x_i \le \ell$ and $\sum_{i=1}^n p_i x_i \le 1$. 
Then, each $i \in [n]$ draws state $P_i \sim \Ber(p_i)$ and suggestion $X_i \sim \Ber(x_i)$ independently.
More, since there is now a single action, we simply say that the P-RCRS queries the elements of $[n]$ (opposed to specifying which action this is done by). Our goal is then to design a P-RCRS such that for all $i \in [n]$,
\begin{equation} \label{eqn:prcrs_guarantee_trivial}
    \Pr[\text{$i$ is queried} \mid X_{i} =1] \ge \beta,
\end{equation}
where $\beta = 1-1/e$ if $\ell \in \{2, \infty\}$, and $\beta = \frac{1}{27} (19 - \frac{67}{e^3})$ otherwise.
Now, if $\ell = \infty$, then the P-RCRS can query all the elements of $[n]$,
and so one can use an $\alpha$-selectable RCRS on the input $(n, (p_i x_i)_{i=1}^n)$ to get an $\alpha$-selectable P-RCRS. Similarly, if $\ell =1$, then $\sum_{i=1}^{n} x_i \le 1$, and so one can use an $\alpha$-selectable
RCRS on $(n,(x_i)_{i=1}^n)$ to get an $\alpha$-selectable P-RCRS. Thus, the cases of $\ell \in \{1, \infty\}$ in \Cref{thm:trs} are immediate due  to the $(1-1/e)$-selectable RCRS of \cite{Lee2018, brubach2024offline}.

We thus design an P-RCRS for $2 \le \ell < \infty$, in which case $\beta := \frac{1}{27} (19 - \frac{67}{e^3})$.
It will be convenient to define $x(S):= \sum_{i \in S} x_i$ for each $S \subseteq [n]$, and $\Pois(\mu)$ to be a Poisson random variable of mean $\mu \ge 0$.
We define the \textit{attenuation function} $b: [0,1] \rightarrow [0,1]$, where for $s \in [0,1]$, 
\begin{equation}
    b(s) = \frac{\beta}{\int_{0}^{1} e^{-y(1-s)}\mb{P}[ \Pois(2y) < 3] dy}
\end{equation}
It is easy to check that $\beta = \int_{0}^{1} e^{-y}\mb{P}[ \Pois( 2y) < 3] dy$,
and so $b(0) =1$. More, $b$ is decreasing function of $s$ on $[0,1]$.

Our P-RCRS draws an independent random bit $B_i \sim \Ber(b(p_i x_i))$ for each $i \in [n]$. Then, when $i$ arrives, it queries $i$ if $B_i X_i =1$, less than $\ell$ queries were previously made, and no element was previously output. We provide a formal
description below:
\begin{algorithm}[H]
\caption{P-RCRS} 
\label{alg:P-RCRS}
\begin{algorithmic}[1]
\Require $\scr{I} = (n, \ell, (x_i)_{i=1}^n, (p_i)_{i=1}^n)$
\Ensure at most one queried element $i \in [n]$ with $P_i X_i =1$.
\State Set $L \leftarrow 0$.
\For{arriving $i \in [n]$}

\State Draw $B_i \sim \Ber(b(p_i x_i))$ independently.
\If{$B_i X_{i} =1$, $L< \ell$, and no element was previously output}
\State Query $i$ and set $L \leftarrow L + 1$.
\If{$P_i =1$}
\State \Return $i$.
\EndIf
\EndIf
\EndFor

\end{algorithmic}
\end{algorithm}
In order to establish \eqref{eqn:prcrs_guarantee_trivial},  we reformulate the
arrival model, and define each $i \in [n]$ to have an independent and u.a.r. \textit{arrival time} $Y_i \in [0,1]$.
We assume that the elements of $[n]$ have distinct arrival times, and that they are presented to \Cref{alg:P-RCRS}
in increasing order of these values. 

For notational convenience, let us hereby focus on proving \eqref{eqn:prcrs_guarantee_trivial} for $i=1$. We say that $1$ is \textit{safe}
when executing \Cref{alg:P-RCRS} on $\scr{I}$, provided that before $1$ arrives:
\begin{enumerate}
    \item At most $\ell -1$ queries were made.
    \item No element of $[n] \setminus \{1\}$ was output.
\end{enumerate}
Observe then that
\begin{equation} \label{eqn:original_output_probability}
    \mb{P}[\text{$1$ is queried} \mid Y_1 = y, X_1 =1]   = b(p_1 x_1) \mb{P}[\text{$1$ is safe} \mid Y_1 =y]. 
\end{equation}
We first argue with respect to lower bounding \eqref{eqn:original_output_probability},
we can assume our input $\scr{I}$ is of a specific form. That is, $p_j \in \{0,1\}$
for all $j \in [n] \setminus \{1\}$, $p_1 =1$, and $\sum_{i \in [n]} x_i = \ell$. Note that the first part of this simplification
is proven in Lemma $2$ of \cite{brubach2024offline}, and our argument proceeds almost identically.
We construct a new input $\scr{\til{I}}$ which satisfies the required conditions,
and argue that for all $y \in [0,1]$, \eqref{eqn:original_output_probability} can only decrease when \Cref{alg:P-RCRS} is executed
on $\scr{\til{I}}$ instead of $\scr{I}$.

\begin{lemma}[Proof in \Cref{pf:lem:p_values}]\label{lem:p_values}
When lower bounding \eqref{eqn:original_output_probability} over all $y \in [0,1]$, we may assume that
the input $\scr{I}$ satisfies $p_1 =1$, $p_j \in \{0,1\}$ for all $j \in [n] \setminus \{1\}$,
and $\sum_{i \in [n]} x_i = \ell$.

\end{lemma}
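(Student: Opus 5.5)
The proof will be a sequence of input transformations, each of which can only decrease $b(p_1x_1)\,\mb{P}[\text{$1$ is safe} \mid Y_1 = y]$ for every $y$, while keeping the constraints $\sum_i x_i \le \ell$, $\sum_i p_ix_i \le 1$, $x_i \le 1$. The basic observation is that the event "$1$ is safe" with $Y_1 = y$ depends only on the elements $j \ne 1$ with $Y_j < y$, together with their bits $X_j, B_j, P_j$. So $p_1$ enters only through the factor $b(p_1x_1)$, and everything about the other elements enters only through the safety probability. I would handle the other elements first and then $p_1$, proceeding in three steps.

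\emph{Step 1 (splitting, following Lemma 2 of \cite{brubach2024offline}).} For each $j \ne 1$ with $0<p_j<1$, I would replace $j$ by two elements. The first, $j'$, has $p_{j'}=1$ and $x_{j'} = p_jx_j$. The second, $j''$, has $p_{j''}=0$ and $x_{j''}=(1-p_j)x_j$. This preserves $\sum x$ and $\sum px$. Since $b(0)=1 \ge b(p_jx_j)$ and $b(p_{j'}x_{j'}) = b(p_jx_j)$, the effective rate at which $j''$ is queried is only larger than before. To compare the two inputs, I would condition on the arrival times and bits of all other elements and argue that "$1$ is safe" is monotone. Concretely, define the state before time $y$ as the pair (number of queries used, whether something was output). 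Both "an extra query occurs" and "an output occurs" move this state toward unsafe, so it suffices to show that the new pair of elements stochastically dominates the old element in the state transitions it induces. One subtlety is that the old element occupies a single arrival time while the two new elements have independent arrival times. I would handle this with an induction on arrivals, using the fact that the safety probability is a monotone function of the remaining patience and of whether something has been output, so the order of the two replacement elements relative to $y$ only helps.

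\emph{Step 2 (padding).} If $\sum_i x_i < \ell$, I would add dummy elements with $p=0$ and total suggestion mass $\ell - \sum_i x_i$, each having $x \le 1$. This keeps $\sum px$ unchanged, and since $b(0)=1$ each dummy is queried whenever it is suggested and patience remains. A straightforward coupling shows that the extra elements can only consume patience before time $y$, so the probability that $1$ is safe can only decrease.

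\emph{Step 3 ($p_1$).} Raising $p_1$ to $1$ leaves the safety probability unchanged and decreases $b(p_1x_1)$ to $b(x_1)$, because $b$ is decreasing. The remaining issue is feasibility: we need $x_1 + \sum_{j \ne 1} p_jx_j \le 1$. I expect this to be the main obstacle. If the constraint is violated, I would restore it by converting some suggestion mass of elements $j'$ with $p_{j'}=1$ into $p=0$ mass. The concern is that a $p=0$ element never outputs, so the conversion might make $1$ \emph{safer} and therefore not be monotone in the right direction. My first attempt would be to check that with total mass $\sum x = \ell$, zero-elements still exhaust patience at least as fast. If that fails, I would instead perform the $p_1 \to 1$ change before Step 1 in a form that keeps $\sum px \le 1$, namely by splitting element $1$'s own mass and exploiting that only $b$ depends on $p_1$.
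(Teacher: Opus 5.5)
Your Step~1 is the same split the paper performs, but your justification does not go through. With this paper's attenuation function the claimed monotonicity is in fact false. Your point that $j''$ is queried at a larger rate overlooks that, for the original $j$, ``queried and outputs'' and ``queried without output'' are mutually exclusive. So, conditional on $j$ not outputting, a patience-consuming query from $j$ has probability $\frac{y\,b(s)(1-p_j)x_j}{1-y\,b(s)\,s}$ with $s=p_jx_j$. After the split, $j''$ fires independently of $j'$ with the smaller-than-needed probability $y(1-p_j)x_j$.

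Concretely, given $Y_1=y$, element $1$ is safe iff at most $\ell-1$ elements $k\neq 1$ have $Y_k<y$ and $B_kX_k=1$, and none of them has $P_k=1$. Let $A$ be the event that none of the untouched elements would output, and $S$ the number of their attempted queries. A direct computation gives
\[
\mb{P}_{\til{\scr{I}}}[\safe_1\mid Y_1=y]-\mb{P}_{\scr{I}}[\safe_1\mid Y_1=y]=\mb{P}[A,\,S=\ell-1]\cdot y(1-p_j)x_j\cdot\bigl(b(s)(1+ys)-1\bigr).
\]
So the split can only hurt element $1$ if $b(s)\le 1/(1+ys)$. The paper's $b$ decays much more slowly, e.g.\ $b(\tfrac12)\approx 0.815>\tfrac23$.

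For a concrete instance, take $\ell=2$ and three elements with $(p,x)$ equal to $(1,\tfrac12)$ (element $1$), $(\tfrac12,1)$ (element $j$) and $(0,\tfrac12)$. This input is feasible. At $y=1$, element $1$ is safe with probability $1-\tfrac34 b(\tfrac12)\approx 0.389$ before the split but $\tfrac34\bigl(1-\tfrac12 b(\tfrac12)\bigr)\approx 0.444$ after. Integrated over $y$, the difference is still $(11\,b(\tfrac12)-8)/96>0$.

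Hence no coupling or induction on arrivals can establish Step~1 for this $b$. The reduction to $p_j\in\{0,1\}$ needs a different idea, or an attenuation with $b(s)(1+s)\le 1$. Note that the paper does not prove this step itself but defers it to Lemma~2 of Brubach et al.; the computation above shows that this citation cannot be transferred verbatim to the present $b$ either.

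Separately, Step~3 is missing the simple device the paper uses for element $1$. Raising $p_1$ to $1$ with $x_1$ fixed can violate $\sum_i p_ix_i\le 1$. Converting $p=1$ mass of other elements into $p=0$ mass need not decrease the safety probability, as you suspect. The paper instead replaces $(p_1,x_1)$ by $(1,p_1x_1)$. We have $\mb{P}[\text{$1$ is queried}\mid Y_1=y,X_1=1]=b(p_1x_1)\,\mb{P}[\text{$1$ is safe}\mid Y_1=y]$, and the safety event does not involve element $1$'s parameters at all. So this replacement leaves the quantity exactly unchanged (there is no need to decrease $b$), preserves $\sum_i p_ix_i$, and only lowers $\sum_i x_i$; feasibility is automatic. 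Your last fallback gestures at this, but it has to be made precise in this form. Your padding step is fine and matches the paper, but it must be done last, after modifying element $1$, since that modification reduces $\sum_i x_i$.
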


Let us now define $N_{q}:= \{j \in [n] \setminus \{1\}: p_j = q\}$ for $q \in \{0,1\}$. Due
to \Cref{lem:p_values}, we hereby assume that $N_{0} \cup N_{1}$ partitions $[n] \setminus \{1\}$,
and that $p_1 =1$. 
Let us now define $L_0$ as the number of queries \Cref{alg:P-RCRS} makes to $N_0$ before $1$ arrives.
Observe then that
\begin{align*}
    \mb{P}[\text{$1$ is safe} \mid Y_1 =y] &= \mb{P}[L_0 \le \ell -1 \mid Y_1 = y] \cdot \mb{P}[\cap_{i \in N_1} \{ B_i X_i \bm{1}_{Y_i < y} =0 \}  \mid Y_1 =y]  \\
    &= \mb{P}[L_0 \le \ell -1 \mid Y_1 = y]  \prod_{j \in N_1} (1 - y b(x_j) x_j),
\end{align*}
where the second line uses the independence of $(B_i, X_i, \bm{1}_{Y_i <y})_{i \in N_1}$, conditional on $Y_1 = y$.
Thus, combined with \eqref{eqn:original_output_probability},
\begin{equation} \label{eqn:simplified_output_probability}
    \mb{P}[\text{$1$ is queried} \mid X_1 =1]  = b(x_1) \int_{0}^{1}\mb{P}[L_0 \le \ell -1 \mid Y_1 =y] \prod_{j \in N_1} (1 - y b(x_j) x_j) dy.
\end{equation}

We next argue that with respect to lower bounding \eqref{eqn:simplified_output_probability}, we may assume that $\max_{j \in N_1} x_j \rightarrow 0$. I.e., the worst-case configuration of $(x_j)_{j \in N_1}$ occurs in the \textit{Poisson regime}. Since $b(0) =1$, this allows us to replace $\prod_{j \in N_1} (1 - y b(x_j) x_j)$ by $e^{-y x(N_1)}$ in \eqref{eqn:simplified_output_probability}. Unlike
\Cref{lem:p_values}, this worst-case configuration only applies 
if we integrate over $y \in [0,1]$.
\begin{lemma}[Proof in \Cref{pf:lem:splitting_argument}]\label{lem:splitting_argument} 
% For each $0 \le x_1 \le 1$, $0 \le x(N_1) \le 1 - x_1$ and 
The probability $\mb{P}[\text{$1$ is queried} \mid X_1 =1]$ is lower
bounded by
\begin{equation} \label{eqn:splitting_argument}
b(x_1) \int_{0}^1{} e^{-y x(N_1)} \mb{P}[L_0 \le \ell -1 \mid Y_1 =y]  dy.
\end{equation}
\end{lemma}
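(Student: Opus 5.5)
The plan is to replace the factors $(1 - y\, b(x_j) x_j)$ for $j \in N_1$ by $e^{-y x_j}$ one element at a time, and to show that each replacement can only decrease the integral in \eqref{eqn:simplified_output_probability}. The approach is a single-crossing argument rather than a pointwise inequality. Pointwise the comparison goes the wrong way for large $y$: $b \le 1$ gives $1 - y b(x_j)x_j \ge 1 - yx_j$, but $1 - yx_j \le e^{-yx_j}$. So the attenuation $b(x_j) < 1$ is essential, and the bound only holds after integrating over $y \in [0,1]$.

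Fix $j \in N_1$ and set
\[
h(y) := \mb{P}[L_0 \le \ell - 1 \mid Y_1 = y] \prod_{k \in N_1 \setminus \{j\}} f_k(y),
\]
where each $f_k$ is either $(1 - y b(x_k) x_k)$ or $e^{-y x_k}$, depending on whether $k$ has already been replaced.

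\textbf{Step 1: $h$ is nonnegative and nonincreasing in $y$.} The conditional probability $\mb{P}[L_0 \le \ell-1 \mid Y_1 = y]$ is nonincreasing in $y$. This holds because the number of elements of $N_0$ arriving before time $y$ is stochastically increasing in $y$, and $L_0$ is the minimum of $\ell$ and the number of arrived elements $i \in N_0$ with $B_i X_i = 1$; recall that elements of $N_0$ have $p_i = 0$ and so never stop the process. Each factor $f_k$ is nonnegative and nonincreasing on $[0,1]$, since $b(x_k) x_k \le 1$. The product of nonnegative nonincreasing functions is nonincreasing, so $h$ is.

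\textbf{Step 2: the difference crosses zero once.} Let $d(y) := 1 - y\,b(x_j)x_j - e^{-y x_j}$. Then $d(0) = 0$, $d'(0) = x_j(1 - b(x_j)) \ge 0$, and $d$ is concave because $e^{-yx_j}$ is convex. Hence $d$ is nonnegative on some interval $[0, y^*]$ and nonpositive on $[y^*, 1]$. Since $h$ is nonincreasing, $h(y)\,d(y) \ge h(y^*)\,d(y)$ for every $y$: on $[0,y^*]$ we have $d \ge 0$ and $h \ge h(y^*)$, and on $[y^*,1]$ we have $d \le 0$ and $h \le h(y^*)$. Therefore
\[
\int_0^1 h(y)\, d(y)\, dy \;\ge\; h(y^*) \int_0^1 d(y)\, dy .
\]
It therefore suffices to show $\int_0^1 d \ge 0$, which is equivalent to
\[
b(x) \;\le\; \frac{2\,(x - 1 + e^{-x})}{x^2} \qquad \text{for all } x \in (0,1].
\]

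\textbf{Step 3: verify the scalar inequality.} This is where the specific form of $b$ enters, and it is the main obstacle. Both sides equal $1$ at $x = 0$. The right-hand side behaves like $1 - x/3 + O(x^2)$. From the definition,
\[
b(x) = \frac{\beta}{\int_0^1 e^{-y(1-x)}\, \mb{P}[\Pois(2y) < 3]\, dy},
\]
and $b'(0) = -\int_0^1 y e^{-y}\mb{P}[\Pois(2y)<3]\,dy \,/\, \beta$. I would first check that this derivative is at most $-1/3$; numerically the ratio is a weighted average of $y$ with a decreasing weight, and it is about $0.38$. I would then establish the inequality on all of $(0,1]$. One route is to note that $\log b$ is concave, so $b$ lies below its tangent line at $0$, and then compare that tangent line with the right-hand side. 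If this route is not clean, I would fall back on interval arithmetic for the explicit elementary functions involved.

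\textbf{Step 4: conclude.} Given Step 3, each replacement of $(1 - y b(x_j)x_j)$ by $e^{-y x_j}$ does not increase the integral. Iterating over all $j \in N_1$ turns the product $\prod_{j \in N_1}(1 - y b(x_j) x_j)$ into $e^{-y x(N_1)}$. Multiplying by $b(x_1)$, as in \eqref{eqn:simplified_output_probability}, gives exactly the bound \eqref{eqn:splitting_argument}.
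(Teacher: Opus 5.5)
Your argument is correct, and it takes a genuinely different and more direct route than the paper.

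\textbf{How the paper argues.} The paper uses a splitting argument. It halves one element of $N_1$ into two elements of value $x_n/2$. It then invokes Property 2 of \Cref{property:first_order}, namely $\int_0^1 (1-yzb(z))\,dy \ge \int_0^1 (1-yzb(z/2)/2)^2\,dy$. This is combined with a single-crossing property read off from the roots of a quadratic, and with the same monotone-multiplier trick you use. The halving is then iterated indefinitely, so $e^{-yx(N_1)}$ is reached only in the limit, via $b(0)=1$.

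\textbf{What your route buys.} You compare each factor directly with its Poisson limit $e^{-yx_j}$. Single crossing comes for free from concavity of $d$. Only $|N_1|$ replacements are needed, so there is no limit to justify. The scalar condition you need, $b(x)\le 2(x-1+e^{-x})/x^2$, is the limiting form of Property 2: running the paper's halving iteration on a single element shows that Properties 2 and 3 imply it. So you impose a weaker requirement on $b$. The paper, by contrast, checks its Property 2 only numerically.

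\textbf{One step needs repair.} In Step 3, log-concavity of $b$ gives $\log b(x)\le b'(0)x$, i.e.\ $b(x)\le e^{b'(0)x}$. It does not put $b$ below its linear tangent line. In fact $b$ is convex near $0$: for instance $b(1)\approx 0.651$ while $1+b'(0)\approx 0.609$. So comparing the linear tangent with the right-hand side would prove nothing.

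\textbf{The fix.} Stay in exponential form and use Jensen twice.
Set $I(x):=\int_0^1 e^{-y(1-x)}\mb{P}[\Pois(2y)<3]\,dy$, so that $b(x)=\beta/I(x)=I(0)/I(x)$.

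First, $I(x)/I(0)$ is the moment generating function of a distribution on $[0,1]$. Jensen therefore gives $b(x)\le e^{-cx}$ with $c=I'(0)/I(0)=(33-294e^{-3})/(57-201e^{-3})\approx 0.391$.

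Second, $2(x-1+e^{-x})/x^2=2\int_0^1(1-s)e^{-xs}\,ds$. Since the density $2(1-s)$ has mean $1/3$, Jensen gives $2(x-1+e^{-x})/x^2\ge e^{-x/3}$.

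Finally, $c\ge 1/3$ is equivalent to $e^3\ge 681/42$, which holds. Hence $b(x)\le e^{-cx}\le e^{-x/3}\le 2(x-1+e^{-x})/x^2$ on $[0,1]$. This settles Step 3 analytically, with no interval arithmetic needed.
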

We prove \Cref{lem:splitting_argument} using a \textit{splitting
argument}. Specifically, given $i \in N_1$, we construct a new input $\scr{\til{I}}$
with an additional element $i'$ added to $N_1$. Then, we set the new fractional value
of $i$ and $i'$ to each be $x_i/2$, and argue that \eqref{eqn:simplified_output_probability}
is lower bounded by the analogous term for $\scr{\til{I}}$. By iteratively applying this argument,
and using that $b(0)=1$, we conclude that the worst-case configuration of $(x_i)_{i \in N_1}$ occurs in the Poisson regime.
For this argument to work, we need the following analytic properties of the attenuation function $b$. 
\begin{proposition}[Proof in \Cref{pf:property_first_order}]\label{property:first_order}
 The attenuation function $b:[0,1] \rightarrow [0,1]$ satisfies the following properties:
 \begin{enumerate}
     \item $b(0) =1$ and $b$ is non-increasing on $[0,1]$.
     \item$\int_{0}^{1} (1 - y z b(z)) dy \ge \int_{0}^{1} (1 - y z b(z/2)/2)^2 dy$
     for each $z \in [0,1]$.
     \item For each $z \in [0,1]$, there exists $y_z \in [0,1]$, such that $ (1 - y z b(z)) - (1 - y z b(z/2)/2)^2 \ge 0$ for all $y \in [0, y_z]$, and $ (1 - y z b(z)) - (1 - y z b(z/2)/2)^2 \le 0 $
     for all $y \in (y_z, 1]$.
 \end{enumerate}
\end{proposition}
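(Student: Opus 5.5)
For the first property, write $D(s) := \int_{0}^{1} e^{-y(1-s)}\mb{P}[\Pois(2y) < 3]\,dy$, so that $b = \beta/D$. The paper already notes that $D(0)=\beta$, which gives $b(0)=1$. For every fixed $y$, the integrand $e^{-y(1-s)}\mb{P}[\Pois(2y)<3]$ is non-decreasing in $s$. Hence $D$ is non-decreasing, $b$ is non-increasing, and $b(s)\le b(0)=1$, so $b$ maps $[0,1]$ into $[0,1]$. Differentiating under the integral shows that $D$ is smooth and convex in $s$, with $D'(s) = \int_0^1 y e^{-y(1-s)}\mb{P}[\Pois(2y)<3]\,dy > 0$. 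We will use this below.

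For the third property, expand the difference directly:
\[
f_z(y) := (1 - yzb(z)) - (1 - yzb(z/2)/2)^2 = y\Bigl[z\bigl(b(z/2)-b(z)\bigr) - \tfrac{1}{4}y z^2 b(z/2)^2\Bigr].
\]
By the first property, $b(z/2)\ge b(z)$, so the bracket is an affine, non-increasing function of $y$ whose value at $y=0$ is non-negative. Therefore $f_z$ is $\ge 0$ up to the root $y^* = 4(b(z/2)-b(z))/(z\, b(z/2)^2)$ and $\le 0$ after it. Take $y_z := \min(1,y^*)$, and $y_z := 1$ when $z=0$, since then $f_0\equiv 0$.

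For the second property, integrate $f_z$ over $[0,1]$. The requirement $\int_0^1 f_z\,dy \ge 0$ becomes $\tfrac{z}{2}(b(z/2)-b(z)) \ge \tfrac{z^2}{12} b(z/2)^2$, that is, $b(z/2)-b(z) \ge \tfrac{z}{6}\,b(z/2)^2$. Substituting $b=\beta/D$ and clearing denominators, this is equivalent to
\[
6\,D(z/2)\bigl(D(z)-D(z/2)\bigr) \ge z\,\beta\, D(z).
\]
By convexity of $D$, $D(z)-D(z/2) \ge \tfrac{z}{2}D'(z/2)$. It therefore suffices to prove the one-variable inequality
\[
g(z) := 3\,D(z/2)\,D'(z/2) - \beta\, D(z) \ge 0 \qquad \text{for } z\in[0,1].
\]
All three of $D$, $D'$ and $\beta$ are explicit integrals of $e^{-cy}$ times the polynomial $1+2y+2y^2$, so each has a closed form in $e^{-c}$.

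The main obstacle is verifying $g\ge 0$, because a crude uniform bound fails. For example, $D'(0)\ge D(1)/3$ is false: numerically $D'(0)\approx 0.227$ while $D(1)\approx 0.89$. The $z$-dependence must therefore be retained. At $z=0$ we have $g(0) = \beta(3D'(0)-\beta)\approx 0.58\,(0.68-0.58)>0$. The terms $D(z/2)$ and $D'(z/2)$ increase in $z$, and so does $D(z)$. I would first try to show that $g$ is itself increasing, by comparing $\tfrac{3}{2}(D'D + DD'')(z/2)$ with $\beta D'(z)$ via the same explicit integrals. If that comparison is not clean, I would fall back on a finite grid $0=z_0<\dots<z_m=1$. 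On each cell, monotonicity gives
\[
g(z) \ge 3\,D(z_k/2)\,D'(z_k/2) - \beta\, D(z_{k+1}),
\]
and it remains to check finitely many explicit numerical inequalities, which have comfortable slack at both ends.
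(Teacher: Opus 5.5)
Your proposal is correct. Properties 1 and 3 follow the paper's argument. The paper likewise expands $f(y)=y\,z\bigl(b(z/2)-b(z)\bigr)-y^2z^2b(z/2)^2/4$ and reads off the sign change at the root $4(b(z/2)-b(z))/(z\,b(z/2)^2)$. Your choice $y_z=\min(1,y^*)$, with $y_z=1$ at $z=0$, is in fact stated more carefully than the paper's version.

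The genuine difference is in property 2. The paper integrates $f$ to get $\tfrac{z}{2}(b(z/2)-b(z))-\tfrac{z^2}{12}b(z/2)^2$ and simply asserts that this was checked numerically on $[0,1]$. You instead pass to $D=\beta/b$ and use convexity of $D$ (the tangent line at $z/2$) to reduce to $g(z)=3D(z/2)D'(z/2)-\beta D(z)\ge 0$. You then use monotonicity of $D$ and $D'$ to turn a finite grid into a rigorous certificate.

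The convexity step costs nothing to first order at the delicate end $z\to 0$: both conditions become $3D'(0)\ge\beta$, i.e.\ $0.680\ge 0.580$. The margins are comfortable. I get $g(0)\approx 0.058$, $g(1/2)\approx 0.092$ and $g(1)\approx 0.135$. The loss in your cell bound is at most $\beta D'(1)(z_{k+1}-z_k)\approx 0.24\,(z_{k+1}-z_k)$, so a coarse mesh suffices. Your route thus trades one easy extra lemma for a proof that is actually checkable, in place of the paper's bare numerical claim.

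There is one slip in your optional first attempt. Since $\tfrac{d}{dz}\bigl[D(z/2)D'(z/2)\bigr]=\tfrac12\bigl(D'(z/2)^2+D(z/2)D''(z/2)\bigr)$, the comparison term should be $\tfrac32\bigl((D')^2+DD''\bigr)(z/2)$, not $\tfrac32(D'D+DD'')(z/2)$. The grid fallback does not depend on this.
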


We next consider the term $\mb{P}[L_0 \le \ell -1 \mid Y_1 =y]$ of \eqref{eqn:splitting_argument}.
Observe that conditional on $Y_1 = y$, $L_0$ is a sum of $|N_0|$ independent Bernoulli random variables.
Moreover, $\mb{E}[L_0 \mid Y_i = y] = \sum_{j \in N_0} b(p_j x_j) x_j y = \sum_{j \in N_0} x_j y = y x(N_1)$,
as $b(0) = 1$, and $p_j =0$ for all $j \in N_0$. 
A partial characterization of the worst-case value of $\mb{P}[L_0 \le \ell -1 \mid Y_1 =y]$ is implied by Lemma $7$ of \cite{BavejaBCNSX18},
which we state below for convenience:
\begin{lemma}[Lemma 7 in \cite{BavejaBCNSX18}] \label{lem:poisson_lemma}
Fix an integer $t \ge 1$. Suppose $Z_1, \ldots , Z_k$ are independent Bernouli random variables, and $\mu :=\sum_{i=1}^k \mb{E}[Z_i] < t -1$. Then,
$$\mb{P}\left[\sum_{i=1}^k Z_i < t \right] \ge \mb{P}[\Pois(\mu) < t] = \sum_{i=0}^{t -1} \frac{ e^{-y \mu} \mu^{i}}{i!}.$$
\end{lemma}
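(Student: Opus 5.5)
The plan is a \emph{splitting argument}, the same device used for \Cref{lem:splitting_argument}. Starting from $Z_1,\dots,Z_k$, we repeatedly replace one Bernoulli $Z$ of mean $p$ by two independent Bernoullis $Z',Z''$, each of mean $p/2$. Each replacement keeps the total mean equal to $\mu$, and we will show it can only decrease $\mb{P}[\sum Z_i < t]$. After enough halving, every mean tends to $0$ while the total stays $\mu$. By Le Cam's inequality (the total variation distance between $\sum Z_i$ and $\Pois(\mu)$ is at most $\sum_i \mb{E}[Z_i]^2$), the sum then converges in distribution to $\Pois(\mu)$. Since every step was monotone, the original probability is at least the Poisson limit $\mb{P}[\Pois(\mu)<t]$, which is the claim.

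For a single step, let $R$ be the sum of the other variables, so $R$ is a Poisson-binomial variable with $\mb{E}[R] \le \mu < t-1$. Conditioning on $Z$ and on $(Z',Z'')$ gives
\[
\mb{P}[R+Z\ge t] = \mb{P}[R\ge t] + p\,\mb{P}[R=t-1]
\]
and
\[
\mb{P}[R+Z'+Z''\ge t] = \mb{P}[R\ge t] + \left(p-\tfrac{p^2}{4}\right)\mb{P}[R=t-1] + \tfrac{p^2}{4}\,\mb{P}[R=t-2].
\]
Subtracting, the tail changes by
\[
\tfrac{p^2}{4}\bigl(\mb{P}[R=t-2]-\mb{P}[R=t-1]\bigr).
\]
So the step lowers $\mb{P}[\text{sum}<t]$ exactly when $\mb{P}[R=t-2] \ge \mb{P}[R=t-1]$.

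The main obstacle is this mode inequality: a Poisson-binomial $R$ with mean strictly below $t-1$ has $\mb{P}[R=t-1] \le \mb{P}[R=t-2]$. The strict hypothesis $\mu<t-1$ in the statement exists precisely for this step. I would first try Darroch's theorem on the mode of a Poisson-binomial, combined with log-concavity of its pmf; log-concavity follows from Newton's inequalities, since the pmf values are elementary symmetric functions of the odds. Log-concavity reduces the claim to showing that the mode is at most $t-2$. Darroch's refined statement says the mode is $m$ when $m \le \mu < m + \tfrac{1}{m+2}$, and otherwise lies in $\{\lfloor\mu\rfloor,\lceil\mu\rceil\}$ with an explicit tie-breaking rule. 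The delicate case is $\mu\in(t-2,t-1)$ close to $t-1$, where the mode could a priori be $t-1$. There I would use the finer inequality $\mb{P}[R=j+1]/\mb{P}[R=j] \le \frac{\mu - j}{j+1}\cdot c$ that comes from Newton's inequalities, or more simply the standard fact that for Poisson-binomials $\mb{P}[R=j+1]\le\mb{P}[R=j]$ whenever $j\ge \mu$.

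If this direct route is messy, the fallback is to cite Hoeffding's 1956 extremal theorem. For $c \ge \mu+1$, the quantity $\mb{P}[S\le c]$ over Poisson-binomials of mean $\mu$ with $k$ terms is minimized by the binomial $\mathrm{Bin}(k,\mu/k)$. Applied with $c=t-1\ge \mu+1$, this gives $\mb{P}[\sum Z_i < t] \ge \mb{P}[\mathrm{Bin}(k,\mu/k) \le t-1]$. To finish, pad with zero-mean variables to let $k\to\infty$; the binomial CDF at $t-1$ is then bounded below by its Poisson limit, since Hoeffding's theorem applied with more terms shows the bound is monotone in $k$.
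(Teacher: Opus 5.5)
Your splitting route breaks at the step you flagged, and the problem is more than technical: the mode inequality $\mb{P}[R=t-1]\le\mb{P}[R=t-2]$ can fail for a Poisson-binomial with $\mb{E}[R]\in(t-2,t-1)$, and the hypothesis $\mu<t-1$ does not rule this out. For $R\sim\mathrm{Bin}(n,q)$ one has $\mb{P}[R=t-1]/\mb{P}[R=t-2]=\frac{(n-t+2)q}{(t-1)(1-q)}$, which exceeds $1$ once $(n+1)q>t-1$. So if you split a variable of mean $p$ next to such an $R$ with $nq+p<t-1<(n+1)q$, your own formula shows that $\mb{P}[\sum Z_i<t]$ goes up. Concretely, take $t=2$, $Z_1\sim\Ber(0.6)$, $Z_2\sim\Ber(0.3)$, so $\mu=0.9$. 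Splitting $Z_2$ raises $\mb{P}[Z_1+Z_2<2]=0.82$ to $0.8245$. Darroch's theorem confirms this rather than rescuing it, since it puts the mode at $\lceil\mb{E}[R]\rceil$ when $\mb{E}[R]$ is close enough to that integer from below. The fact $\mb{P}[R=j+1]\le\mb{P}[R=j]$ for $j\ge\mb{E}[R]$ only helps when $\mb{E}[R]\le t-2$. The proposal neither fixes which variable to split nor proves that some splitting rule keeps every step monotone, so the main argument does not go through.

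The Hoeffding fallback is the right route, but as written it has the same hole. You state the extremal theorem for $c\ge\mu+1$ and then claim $c=t-1\ge\mu+1$. That does not follow from $\mu<t-1$, so you only cover $\mu\le t-2$ and again miss $\mu\in(t-2,t-1)$. The ``$+1$'' is attached to the wrong quantity. Hoeffding's 1956 bound reads $\mb{P}[S\ge t]\le\mb{P}[\mathrm{Bin}(k,\mu/k)\ge t]$ for integers $t\ge\mu+1$. Equivalently, $\mb{P}[S\le c]\ge\mb{P}[\mathrm{Bin}(k,\mu/k)\le c]$ for integers $c\ge\mu$; it is the companion lower-tail bound that needs $b\le\mu-1$. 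The lemma's hypothesis $\mu<t-1$ is exactly $t>\mu+1$, so this applies with $c=t-1$. Your padding step then finishes the proof. Appending $m$ identically zero Bernoullis preserves both the mean and the condition $c\ge\mu$, so $\mb{P}[\sum_i Z_i\le t-1]\ge\mb{P}[\mathrm{Bin}(k+m,\mu/(k+m))\le t-1]\to\mb{P}[\Pois(\mu)\le t-1]$ as $m\to\infty$. Only this limit is needed, not monotonicity in $k$. The paper gives no proof of its own to compare against, since it imports the statement from \cite{BavejaBCNSX18}. Note also that the displayed Poisson sum should have $e^{-\mu}$ in place of $e^{-y\mu}$.
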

Applied to our setting, if $\mb{E}[L_0 \mid Y_i = y] = y x(N_1) < \ell -1$,
then 
\begin{equation} \label{eqn:poisson_regime}
    \mb{P}[L_0 \le \ell -1 \mid Y_1 =y] \ge \mb{P}[ \Pois( x(N_0) y) < \ell] = \sum_{k=0}^{\ell -1} \frac{ e^{-y x(N_0)} (y x(N_0))^{k}}{k!}.
\end{equation}
On the other hand, if we recall the feasibility constraints on $\scr{I}$ after the simplification of \Cref{lem:p_values},
\begin{align}
  \text{$x(N_1) + x_1 \le 1$, and $x(N_1) = \ell - x(N_0) - x_1$.} \label{eqn:feasibility_constraints_simplified}
\end{align}
Thus, if $x(N_1) + x_1 < 1$, then $y x(N_0) \ge \ell -1$ for $y \ge (\ell -1)/x(N_0)$,
and so \eqref{eqn:poisson_regime} does not apply on the interval $[ (\ell -1)/x(N_0), 1]$ \footnote{This detail was overlooked in \cite{brubach2024offline, DBLP:journals/corr/abs-2205-08667}, as they apply Lemma $7$ of \cite{BavejaBCNSX18} for all $y \in [0,1]$ (even when 
their analogously defined quantity has $x(N_1) + x_1 < 1$). This is not an issue in \cite{BavejaBCNSX18}, as they do not use attenuation in their policy, and so they're able to easily argue that they can assume $x(N_1) + x_1 = 1$.}. 
Moreover, this is not simply a limitation
of the characterization of \cite{BavejaBCNSX18}. For $y$ close to $1$, the worst-case
of behaviour of $\mb{P}[L_0 \le \ell -1 \mid Y_1 =y]$ changes to an input when $|N_0| = \ell$,
and $x_i = x(N_0)/\ell$ for all $i \in N_0$. For instance, in the extreme case of $y=1$
and $x(N_0) = \ell$, one can make $\mb{P}[L_0 \le \ell -1 \mid Y_1 =y] = 0$.
Determining the worst-case configuration of $(x_i)_{i \in N_0}$ as a function of $y$ seems difficult, as it is closely related proving a longstanding conjecture of Samuels from probability theory
(see \cite{Samuels1966,Samuels1969} and \cite{Feige2006,paulin2017conjecturessamuelsfeige}). More, it is also challenging to follow the approach of \Cref{lem:splitting_argument}, and determine the worst-case behaviour of
$\int_{0}^{1} \mb{P}[L_0 \le \ell -1 \mid Y_1 =y] dy$. For instance, if $x(N_0) = \ell$, then the worst-case configuration $(x_i)_{i \in N_0}$ has $|N_0| = \ell$
and $x_i = 1$ for all $i \in N_0$, which is the opposite of the Poisson regime.
Instead, we side-step this problem, where our approach differs depending on if $\ell =2$, $3 \le \ell < 120$
or $\ell \ge 120$. 

For $\ell =2$, we apply an \textit{exchange argument} (formally stated in \Cref{lem:exchange_argument} of
\Cref{sec:patience_2_exchange}).
Specifically, we construct an input $\scr{\til{I}}$ which adds a new element $i_1$
to $N_1$ of fractional value at most $1 - x(N_1) - x_1$, and reduces the fractional value of an element of $i_0$ of $N_0$ by the same amount, thus ensuring that $\scr{\til{I}}$ is feasible. We then argue that \eqref{eqn:splitting_argument} for
$\scr{I}$ is lower bounded by the analogously defined term for $\scr{\til{I}}$.
This implies that for $\ell =2$, when lower bounding \eqref{eqn:splitting_argument},
we may assume that $x(N_1) = 1- x_1$.

For $3 \le \ell < 120$, while it is conceivable that an exchange argument could also apply, the comparisons become
more complicated, and so we abandon this approach. Instead, we apply \eqref{eqn:poisson_regime} to $\mb{P}[L_0 \le \ell -1 \mid Y_1 =y]$ when $y \in [0, (\ell -1)/x(N_0))$, and the trivial lower
bound of $0$ when $y \in [(\ell -1)/x(N_0)), 1]$. We then integrate over $y \in [0,1]$
to get a lower bound on \eqref{eqn:splitting_argument} of $b(x_1) \cdot F_{\ell}(x_1, x(N_1))$. The function $F_{\ell}$ has a closed-form expression,
and we prove that $F_{\ell}(x_1, x(N_1)) \ge F_{\ell}(x_1, 1- x_1)$ (see \Cref{prop:worst_case_patience_mid} of \Cref{sec:patience_mid}).
Thus, we are once again able to assume that $x(N_1) = 1- x_1$ when lower bounding
\eqref{eqn:splitting_argument}.

In either case (i.e., $\ell =2$ or $3 \le \ell < 120$), we may assume
that $x(N_1) = 1 -x_1$, and so $x(N_0) = \ell -1$. We can thus apply \eqref{eqn:poisson_regime}  to \eqref{eqn:splitting_argument} for all $y \in [0,1)$. Combining \Cref{lem:splitting_argument} with the definition of
$b$, this implies that
\begin{align}
    \mb{P}[\text{$1$ is queried}  \mid X_1 =1] &\ge b(x_1) \int_{0}^{1} \mb{P}[\Pois(y (\ell -1)) < \ell] e^{-y(1-x_1)} dy  \notag \\
                                               &= \frac{\beta}{\int_{0}^{1} \mb{P}[\Pois(2y) < 3] e^{-y(1-x_1) } dy}  \int_{0}^{1} \mb{P}[\Pois(y(\ell-1)) \le \ell] e^{-y(1-x_1)} dy. \label{eqn:final_opt_lower_bound}
\end{align}
For each $3 \le \ell < 120$, \eqref{eqn:final_opt_lower_bound} is a function of $0 \le x_1 \le 1$ with a closed form expression, and we verify that it is attains its minimum when $x_1 = 0$ and $\ell =3$, in which case
it is exactly $\beta$. This proves the guarantee of \Cref{thm:trs} assuming $2 \le \ell < 120$.

Finally, for the case when $\ell \ge 120$, since $\ell$ is sufficiently large, it suffices to apply Bennett's inequality to $\mb{P}[L_0 \le \ell -1 \mid Y_1 =y]$ for each $y \in [0,1]$. In \Cref{sec:patience_big},
we show that this leads to a bound of
\begin{align}
    \mb{P}[\text{$1$ is queried}  \mid X_1 =1] &\ge b(x_1) \int_{0}^{1} \left( 1- e^{-120 (y + \log\left(\frac{1}{y}\right) - 1)} \right) e^{-y(1-x_1)} dy. \label{eqn:final_opt_lower_bound_large} 
    % \\
    % &= b(x_1) 
\end{align}
Now, \eqref{eqn:final_opt_lower_bound_large} is a function of $0 \le x_1 \le 1$ with a closed-form expression. We verify that it attains its minimum occurs when $x_1 = 1$, in which case we get 
$$
 \frac{\beta}{\int_{0}^{1} \mb{P}[\Pois(2y) < 3] dy} \cdot \int_{0}^{1} \left( 1- e^{-120 (y + \log\left(\frac{1}{y}\right) - 1)} \right) dy \ge 0.5803,
$$ which is strictly large than $\beta$. This proves the
guarantee of \Cref{thm:trs} for $\ell \ge 120$.

We now provide the details of the $\ell =2$, $3 \le \ell < 120$ and $\ell \ge 120$ regimes, in Subsections \ref{sec:patience_2_exchange}, \ref{sec:patience_mid}, and
\ref{sec:patience_big}, respectively.

\subsection{Details of Patience $\ell=2$} \label{sec:patience_2_exchange}
Given the input $\scr{I} = (n,2, (x_i)_{i=}^n, (p_i)_{i=1}^n)$, we construct a new input 
$\scr{\til{I}} =(n +1, 2, (\til{x}_i)_{i=1}^{n+1}, (\til{p}_i)_{i=1}^{n+1})$, with an additional
element $i_1 := n +1$, and which modifies the fractional value of an arbitrary $i_0 \in N_0$. Specifically,
\begin{enumerate}
    \item $\til{p}_{i_1} = 1$ and $\til{x}_{i_1} := \min\{1 - x(N_1) - x_1, x_{i_0}\}$,
    \item $\til{p}_{i_0} = 0$ and $\til{x}_{i_0} = x_{i_0} - \min\{1 - x(N_1) - x_1, x_{i_0}\}$,
    \item $(\til{x}_i, \til{p}_i) = (x_i,p_i)$ for all $i \in [n] \setminus \{i_0\}$.
\end{enumerate}
Clearly, $\scr{\til{I}}$ is a feasible input,
which satisfies the conditions of \Cref{lem:p_values}. I.e., $\til{p}_j \in \{0,1\}$ for all $j \in [n+1]$, $\til{p_1} = 1$ and $\sum_{j \in [n+1]} \til{x}_j = \ell$. Define $\til{N}_{0} = N_0 \setminus \{i_0\}$ and $\til{N}_1 = N_1 \cup \{i_1\}$, and $\til{L}_{0}$ as the number of queries \Cref{alg:P-RCRS} makes to
$\til{N}_0 \cup \{i_0\} = N_0$ when executing on $\scr{\til{I}}$. Then, using $X_1$ and $Y_1$ for $1$ during the execution of \Cref{alg:P-RCRS} on $\scr{\til{I}}$, \Cref{lem:splitting_argument} implies that
$$
\mb{P}[\text{$1$ from $\scr{\til{I}}$ is queried} \mid X_1 =1] \ge b(x_1) \int_{0}^{1} \mb{P}[ \til{L}_0 \le 1 \mid Y_1 =y] e^{-y (x(N_1) + \til{x}_{i_1})} dy,
$$
where we've used that $\sum_{i \in \til{N}_1} \til{x}_i = x(N_1) + \til{x}_{i_1}$,
and $\til{x}_1 = x_1$. We are now ready to formalize the statement of our exchange
argument:
\begin{lemma}\label{lem:exchange_argument}
The following inequality holds:
$$\int_{0}^{1} \mb{P}[ L_0 \le 1 \mid Y_1 =y] e^{-y x(N_1)}  dy \ge \int_{0}^{1} \mb{P}[ \til{L}_0 \le 1 \mid Y_1 =y] e^{-y (x(N_1) + \til{x}_{i_1})}  dy.$$  
 \end{lemma}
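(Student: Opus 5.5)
The plan is to make both sides explicit in terms of the law of the queries to $N_0\setminus\{i_0\}$, and then compare the two integrands $y$ by $y$, using integration only where a pointwise comparison fails.

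\textbf{Step 1: rewrite $\mb{P}[L_0\le 1\mid Y_1=y]$.} Every $j\in N_0$ has $p_j=0$, so $b(p_jx_j)=b(0)=1$. An element $j\in N_0$ is therefore queried before $1$ arrives exactly when $X_j=1$ and $Y_j<y$; the event $L_0\le 1$ is not affected by the patience cap. Write $R$ for the number of $j\in N_0\setminus\{i_0\}$ with $X_j\bm{1}_{Y_j<y}=1$, and set
$$a_0(y):=\mb{P}[R=0\mid Y_1=y],\qquad a_1(y):=\mb{P}[R=1\mid Y_1=y].$$
Write $x:=x_{i_0}$ and $\delta:=\til{x}_{i_1}$. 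Element $i_0$ has value $x$ in $\scr{I}$ and value $x-\delta$ in $\scr{\til I}$, and $R$ has the same law under both inputs. Hence
$$\mb{P}[L_0\le 1\mid Y_1=y]=a_0+a_1(1-xy),\qquad \mb{P}[\til L_0\le 1\mid Y_1=y]=a_0+a_1(1-(x-\delta)y).$$
Since both sides of the lemma share the factor $e^{-yx(N_1)}$, it suffices to show $\int_0^1 e^{-yx(N_1)}h(y)\,dy\ge 0$, where
$$h(y):=(1-e^{-\delta y})\bigl(a_0+a_1(1-xy)\bigr)-e^{-\delta y}\,\delta y\,a_1 .$$

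\textbf{Step 2: pointwise bound.} Apply $1-e^{-t}\ge te^{-t}$ with $t=\delta y$. This gives
$$h(y)\ge \delta y\, e^{-\delta y}\bigl(a_0(y)-x y\, a_1(y)\bigr),$$
so $h\ge 0$ wherever $a_0\ge xy\,a_1$. Note that $a_1/a_0=\sum_{j\in N_0\setminus\{i_0\}}x_jy/(1-x_jy)$ is increasing in $y$. Consequently $xy\,a_1/a_0$ is increasing, and the lower bound changes sign at most once, from nonnegative to negative, at some $y^*$. Without the slack in Step 2, I expect $h$ itself to have the same single-crossing shape; this may need a direct check, e.g.\ by differentiating $h/(a_0 e^{-\delta y})$.

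\textbf{Step 3: single-crossing with the weight.} The weight $w(y)=e^{-yx(N_1)}$ is positive and non-increasing. If $h$ is $\ge 0$ on $[0,y^*]$ and $\le 0$ on $(y^*,1]$, then
$$\int_0^1 w h \;\ge\; w(y^*)\int_0^1 h .$$
This reduces the lemma to the unweighted inequality $\int_0^1 h(y)\,dy\ge 0$. In that integral $w\equiv 1$, so it compares $\int_0^1 \mb{P}[L_0\le 1]$ against $\int_0^1 e^{-\delta y}\mb{P}[\til L_0\le 1]$. This is the main obstacle, because $x(N_0)=2-x(N_1)-x_1\ge 1+\delta$ makes $a_0$ small near $y=1$.

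\textbf{Step 4: the unweighted inequality.} Here I would expand $a_0,a_1$ as polynomials in $y$ via the elementary symmetric functions of $(x_j)_{j\in N_0\setminus\{i_0\}}$ and integrate term by term. The aim is to show that the resulting expression is minimized either in the Poisson regime or at the extreme configuration, the latter by a further splitting/merging argument on $N_0\setminus\{i_0\}$. One then checks the two-variable inequality in $(\delta,x)$ using $\delta\le x$ and $\delta\le 1-x(N_1)-x_1$.

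If the single-crossing of $h$ in Step 2 fails, the fallback is to keep the weight $e^{-yx(N_1)}$ throughout. One then bounds $\int w h$ directly after this polynomial expansion, using $x(N_1)\le 1-x_1-\delta$ to control the weight.
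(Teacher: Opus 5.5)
Your Steps 1--3 are sound. The decomposition $\mb{P}[L_0\le 1\mid Y_1=y]=a_0+a_1(1-xy)$ is exactly the paper's starting point. The single crossing of $h$ that you leave open also holds. Write
$h=e^{-\delta y}a_0\,(e^{\delta y}-1)\bigl(1-\rho(y)\kappa(y)\bigr)$ with $\rho=a_1/a_0$ and $\kappa(y)=\frac{\delta y}{e^{\delta y}-1}-1+xy$. Both factors are nonnegative and nondecreasing: for $\kappa$, use $\kappa(0^+)=0$, $\frac{d}{dt}\frac{t}{e^t-1}\ge -\frac12$ and $\delta\le x$. So your reduction to the unweighted inequality $\int_0^1 h\,dy\ge 0$ is valid.

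The gap is Step 4. This is where the entire content of the lemma lives, and you state it only as an aim. Nothing supports the claim that the worst configuration of $N_0\setminus\{i_0\}$ is either the Poisson regime or the concentrated one, and the paper explicitly flags this worst-case question as hard.

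The difficulty is structural. Write $h=(1-e^{-\delta y})a_0+\gamma(y)a_1$ with $\gamma(y)=-\bigl(e^{-\delta y}(1-(x-\delta)y)-(1-xy)\bigr)\le 0$. The bad case needs $a_0$ small and $a_1$ large at the same time, and no single configuration achieves both. For example, take $s:=x(N_0\setminus\{i_0\})=1.9$ and $y=1$. The concentrated $(1,0.9)$ minimizes $a_0$ but gives $a_1=0.1$, whereas $(1,0.45,0.45)$ gives $a_1\approx 0.30$.

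Merging or splitting does not settle it either. For a pair $j,k$ with $x_j+x_k$ fixed, $\int_0^1 h\,dy$ is affine in $x_jx_k$, and the sign of the coefficient depends on the remaining elements. So such moves reduce you at best to configurations $(1,\dots,1,t,\dots,t)$ with an arbitrary number of copies of $t$, not to the two endpoints you name. Your fallback is equally unspecified.

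The missing idea is the paper's decoupling. Since the coefficient of $a_0$ is $\ge 0$ and $\gamma\le 0$, you can bound the two terms separately by configuration-free expressions in $s$:
\begin{itemize}
\item $a_0\ge(1-y)^{\lfloor s\rfloor}\bigl(1-y(s-\lfloor s\rfloor)\bigr)$, by Schur-concavity of $\prod_j(1-yx_j)$;
\item $a_1\le ys\,e^{-y(s-1)}$, using $1-z\le e^{-z}$ and $x_j\le 1$.
\end{itemize}
The weighted integrand then becomes an explicit function of $(x(N_1),x_1,x_{i_0})$, split into two cases according to $x_{i_0}\gtrless 1-x(N_1)-x_1$. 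Its integral is checked numerically. With this decoupling, your single-crossing and unweighted reduction are no longer needed.
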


\begin{proof}[Proof of \Cref{lem:exchange_argument}]
Our goal is to prove that
\begin{equation} \label{eqn:exchange_integral}
    \int_{0}^{1} e^{-y x(N_1)} \left( \mb{P}[ L_0 \le 1 \mid Y_1 =y] - e^{-y\til{x}_{i_1}}  \mb{P}[ \til{L}_0 \le 1 \mid Y_1 =y] dy  \right) \ge 0.
\end{equation}
First observe that $\mb{P}[ L_0 \le 1 \mid Y_1 =y]$ is equal to 
\begin{align} 
&& (1- y x_{i_0}) \prod_{j \in \til{N}_0}(1- y x_{j}) + y x_{i_0} \prod_{j \in \til{N}_0}(1 - y x_j) + \sum_{i \in \til{N}_0}y x_i (1 - y x_{i_0}) \prod_{j \in \til{N}_0 \setminus \{i\}}(1 - y x_j). \notag \\
&&= \prod_{j \in \til{N}_0}(1- y x_{j}) +  \sum_{i \in \til{N}_0}y x_i (1 - y x_{i_0}) \prod_{j \in \til{N}_0 \setminus \{i\}}(1 - y x_j).                                      \label{eqn:base_probability}
\end{align}
and similarly, $\mb{P}[ \til{L}_0 \le 1 \mid Y_1 =y]$ is equal to
\begin{equation} \label{eqn:exchange_probability}
\prod_{j \in \til{N}_0}(1- y x_{j})  + \sum_{i \in \til{N}_0}y x_i (1 - y \til{x}_{i_0})\prod_{j \in \til{N}_0 \setminus \{i\}}(1 - y x_j).
\end{equation}
Thus, applying \eqref{eqn:base_probability} and \eqref{eqn:exchange_probability}, we can rewrite the integrand of \eqref{eqn:exchange_integral} as
\begin{align} 
&&e^{-y x(N_1)}\left( (1 - e^{-y \til{x}_{i_1}}) \prod_{j \in \til{N}_0}(1- y x_{j}) + \sum_{i \in \til{N}_0} (1 - y x_{i_0} - e^{-y \til{x}_{i_1}}(1 - y \til{x}_{i_0})) y x_i \prod_{j \in \til{N}_0 \setminus \{i\}}(1 - y x_j)\right). \notag \\
&&= e^{-y x(N_1)}\left( (1 - e^{-y \til{x}_{i_1}}) \prod_{j \in \til{N}_0}(1- y x_{j}) - \sum_{i \in \til{N}_0} (e^{-y \til{x}_{i_1}}(1 - y \til{x}_{i_0}) - (1 - y x_{i_0})) y x_i \prod_{j \in \til{N}_0 \setminus \{i\}}(1 - y x_j)\right) \label{eqn:first_difference}.
\end{align}
Now, since $\til{x}_{i_1} \ge 0$, and $\til{x}_{i_1} + \til{x}_{i_0} = x_{i_0}$,
we know that for all $y \in [0,1]$,
\begin{equation} \label{eqn:leading_terms_sign}
\text{$1 - e^{-y \til{x}_{i_1}} \ge 0$ and $e^{-y \til{x}_{i_1}}(1 - y \til{x}_{i_0}) - (1 - y x_{i_0}) \ge (1 - y \til{x}_{i_1})(1 - y \til{x}_{i_0}) - (1 - y x_{i_0}) \ge 0$,}
\end{equation}
where we have applied the elementary inequality $1 - z \le e^{-z}$.
% where the second inequality follows from \ref{prop:exchange_ineq_1}. of \Cref{prop:exchange_ineq}.
Thus, we focus on lower and upper bounding 
$\prod_{j \in \til{N}_0}(1- y x_{j})$ and $\sum_{i \in \til{N}_0} y x_i \prod_{j \in \til{N}_0 \setminus \{i\}}(1 - y x_j)$, respectively. 
Starting with the first term, observe that
for any fixed $y \in [0,1]$ and $k := |\til{N}_0|$,
the function $\psi: [0,1]^{k} \rightarrow [0,1]$, $\psi(z_1, \ldots ,z_k) := \prod_{j =1}^k(1- y z_{j})$
is \textit{Schur-concave} (see \cite{Peari1992}). This implies
that its minimum over all $(z_1, \ldots ,z_k)  \in [0,1]^k$ with $\sum_{i=1}^k z_i = x(\til{N}_0)$
is attained when $z_i =1$ for $i=1, \ldots, \lfloor x(\til{N}_0)  \rfloor$,
and $z_{ \lfloor x(\til{N}_0)  \rfloor + 1} = x(\til{N}_0) - \lfloor x(\til{N}_0)  \rfloor$. We thus get that
\begin{equation} \label{eqn:lower_bound_product}
    \prod_{j \in \til{N}_0}(1- y x_{j}) \ge (1 - y)^{\lfloor x(\til{N}_0) \rfloor} (1 -y (x(\til{N}_0) - \lfloor x(\til{N}_0) \rfloor))
\end{equation}
On the other hand, using the elementary inequality $1 - z \le e^{-z}$, and that $x_i \le 1$
for all $i \in \til{N}_0$,
\begin{align} \label{eqn:upper_bound_sum_product}
    \sum_{i \in \til{N}_0} y x_i \prod_{j \in \til{N}_0 \setminus \{i\}}(1 - y x_j) \le \sum_{i \in \til{N}_0} y x_i e^{-y(x(\til{N}_0) - x_i)} \le  y x(\til{N}_0) e^{-y(x(\til{N}_0) - 1)}.
\end{align}
Thus, applying \eqref{eqn:leading_terms_sign}, \eqref{eqn:lower_bound_product} and \eqref{eqn:upper_bound_sum_product} to \eqref{eqn:first_difference}, we can lower bound
the integrand of \eqref{eqn:exchange_integral} by
\begin{align} 
    &e^{-y x(N_1)}(1 - e^{-y \til{x}_{i_1}}) (1 - y)^{\lfloor x(\til{N}_0) \rfloor} (1 -y (x(\til{N}_0) -\lfloor x(\til{N}_0) \rfloor)) \notag \\
    &- (e^{-y \til{x}_{i_1}}(1 - y \til{x}_{i_0}) - (1 - y x_{i_0})) y x(\til{N}_0) e^{-y(x(\til{N}_0) + x(N_1) - 1)}. \label{eqn:final_simplification}
\end{align}
Observe next that $x(\til{N}_0) =  x(N_0) - x_{i_0} = 2 - x(N_1) - x_1 - x_{i_0}$,
since $x(N_0) = 2 - x(N_1) - x_1$ by \eqref{eqn:feasibility_constraints_simplified}.
We simplify \eqref{eqn:final_simplification}, depending on whether $x_{i_0} > 1 - x(N_1) - x_1$, or $x_{i_0} \le 1- x(N_1) - x_1$. 

If $x_{i_0} > 1 - x(N_1) - x_1$, then $x(\til{N}_0) =  2 - x(N_1) - x_1 - x_{i_0} \le 1$,
so $\lfloor x(\til{N}_0) \rfloor =0$. Moreover, $\til{x}_{i_1} = 1 - x(N_1) - x_1$, and $\til{x}_{i_0} = x_{i_0} - (1 - x(N_1) - x_1)$. Thus, \eqref{eqn:final_simplification} simplifies to
\begin{align} \label{eqn:final_simplification_case_1}
    & (e^{-y x(N_1)} - e^{-y(1- x_1)})(1 - y(2 - x(N_1) - x_1 - x_{i_0})) \notag \\
    &- (e^{-y (1-x(N_1) - x_1)}(1 - y (x_{i_0} - (1- x(N_1) - x_1)) - (1 - y x_{i_0})) y(2 - x(N_1) - x_1 - x_{i_0}) e^{-y(1 - x_1 - x_{i_0})}.
\end{align}
% \begin{align} \label{eqn:final_simplification_case_1}
%     & e^{-y x(N_1)}\left( (1 - e^{-y(1- x(N_1) - x_1)})(1 - y(2 - x(N_1) - x_1 - x_{i_0}) \right) \notag \\
%     &- e^{-y x(N_1)}\left( e^{-y (1 - x(N_1) - x_1)}( 1 - y(x_{i_0} -(1 -  x(N_1) - x_1))) \right)
% \end{align}
% $g(\til{x}_{i_0}, \til{x}_{i_0},y):= e^{-y (1-x(N_1) - x_1)}(1 - y (x_{i_0} - (1- x(N_1) - x_1)) - (1 - y x_{i_0})$
% =
Otherwise, if $x_{i_0} \le 1- x(N_1) - x_1$, then $x(\til{N}_0) \ge 1$, so $\lfloor x(\til{N}_0) \rfloor =1$. More, $\til{x}_{i_0} = 0$ and $\til{x}_{i_1} = x_{i_0}$. Thus, \eqref{eqn:final_simplification} simplifies to
\begin{align} 
    &e^{-y x(N_1)}(1 - e^{-y x_{i_0}}) (1 - y) (1 -y (1 - x(N_1) - x_1 - x_{i_0})) \notag \\
    &- (e^{-y x_{i_0}} - (1 - y x_{i_0})) y (2 - x(N_1) - x_1 - x_{i_0}) e^{-y (1 - x_1 - x_{i_0})}. \label{eqn:final_simplification_case_2}
\end{align}
If we integrate \eqref{eqn:final_simplification_case_1} (respectively, \eqref{eqn:final_simplification_case_2})
over $y \in [0,1]$, then we are left with a function of non-negative variables $x(N_1), x_{i_0}$ and $x_1$, where $x(N_1) + x_1 \le 1$
and $x_{i_0} > 1 - x(N_1) - x_1$ (respectively, $x_{i_0}\le 1 - x(N_1) - x_1$). We numerically verify that restricted to the appropriate domain of values, either function is non-negative, and so \eqref{eqn:exchange_integral} holds.
The lemma is thus proven.
\end{proof}

% \subsection{Lemma $7$ from \cite{BavejaBCNSX18}} \label{sec:lemma_baveja}

\subsection{Details of Patience $3 \le \ell < 120$} \label{sec:patience_mid}
For $\ell \ge 3$, let us define $y_c := (\ell -1)/x(N_0)$ for convenience,
and recall that $x(N_0) = \ell - x_1 - x(N_1)$.
Then, for $y \in [0, y_c)$, we can apply \eqref{eqn:poisson_regime}
to ensure that
\begin{equation} \label{eqn:poisson_regime_restated}
    \mb{P}[L_0 \le \ell -1 \mid Y_1 =y] \ge  \sum_{k=0}^{\ell -1} \frac{ e^{-y (\ell - x_1 - x(N_1))} y^k (\ell - x(N_1) - x_1)^{k}}{k!}
\end{equation}
Thus, after simplification,
\eqref{eqn:splitting_argument} is lower bounded by
\begin{equation} \label{eqn:availability_lower_bound}
  b(x_1) \int_{0}^{y_c} \sum_{k=0}^{\ell -1} \frac{ e^{-y (\ell - x_1)} y^k (\ell - x(N_1) - x_1)^{k}}{k!} dy
\end{equation}
If $\Gamma(s,z):= \int_{z}^{\infty} \zeta^{s-1} e^{-\zeta} d\zeta$ for $s > 0$ and $z \in \mb{R}$ is the \textit{upper incomplete gamma function} (here $\Gamma(s,0)$ is the usual \textit{gamma function}), then the integral
in \eqref{eqn:availability_lower_bound} has the closed-form expression:

\begin{equation*}
F_{\ell}(x_1, x(N_1)):=\frac{\Gamma(\ell) - \Gamma(\ell, \ell -1) e^{\frac{-(l - 1) x(N_1)}{\ell - x(N_1) - x_1}}-\left(\frac{\ell - x_1 - x(N_1)}{\ell - x(N_1)} \right)^{\ell} \left(\Gamma(\ell) -
    \Gamma(\ell, \ell -1 + \frac{( \ell -1) x(N_1)}{\ell - x(N_1) - x_1}) \right)}{x(N_1) \Gamma(\ell)}.
\end{equation*}
If $\ell =3$, then this simplifies to
$$F_{3}(x_1, x(N_1)) =\frac{2 - 10 e^{\frac{-2 (3 -x(N_1))}{3 - x(N_1) - x_1}}-\left(\frac{3 - x_1 - x(N_1)}{3 - x(N_1)} \right)^{3} \left(2 -
    \Gamma(3, 2 + \frac{2 x(N_1)}{3 - x(N_1) - x_1}) \right)}{2x(N_1) },
$$
which we numerically verified is minimized at $x(N_1) = 1 - x_1$ for any $0 \le x_1 \le 1$.
Otherwise, if $4 \le \ell < 120$, $D_2 F_{\ell}(x_1, x(N_1))$ is non-positive for all $0 \le x_1 \le 1$
and $0 \le x(N_1) \le 1$. Thus, we conclude
that no matter the choice of $x_1$, $F_{\ell}$ is minimized when $x(N_1) = 1- x_1$ as desired.
\begin{proposition} \label{prop:worst_case_patience_mid}
For each $3 \le \ell < 120$, $x_1, x(N_1) \in [0,1]$ with $0 \le x_1 + x(N_1) \le 1$,
it holds that $F_{\ell}(x_1, x(N_1)) \ge F_{\ell}(x_1, 1 - x_1)$.    
\end{proposition}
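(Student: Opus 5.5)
The plan is to prove, for each fixed $\ell$ and $x_1$, that $s \mapsto F_{\ell}(x_1,s)$ is non-increasing on $[0,1-x_1]$. The claimed inequality then follows at once.

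\textbf{Reformulating $F_\ell$.} I will not work with the incomplete-gamma closed form. Instead I return to the integral that defines it in \eqref{eqn:availability_lower_bound}. Write $s := x(N_1)$, $\mu := \ell - x_1 - s$ and $S_m(t) := \sum_{k=0}^{m} t^k/k!$. Then
$$F_\ell(x_1,s)=\int_0^{y_c(s)} e^{-y(\ell-x_1)}\, S_{\ell-1}(y\mu)\,dy, \qquad y_c(s)=\frac{\ell-1}{\mu}.$$
Here $s$ enters in two places: the upper limit $y_c$, which increases in $s$, and the integrand, which decreases in $s$ because $\mu$ decreases. Differentiate with Leibniz's rule, using $\partial_\mu S_{\ell-1}(y\mu) = y\,S_{\ell-2}(y\mu)$ and $y_c'(s) = (\ell-1)/\mu^2$. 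Then substitute $t = y\mu$, so that $y_c\mu = \ell-1$ is constant and the range is always $t \in [0,\ell-1]$. Setting $a := (\ell-x_1)/\mu$, I expect
$$\mu^2\,\partial_s F_\ell(x_1,s) = (\ell-1)\,e^{-a(\ell-1)}S_{\ell-1}(\ell-1) \;-\; \int_0^{\ell-1} e^{-at}\, t\, S_{\ell-2}(t)\,dt =: H_\ell(a).$$
The two-variable statement thus reduces to the one-variable inequality $H_\ell(a)\le 0$. Since $\mu \in [\ell-1,\ell-x_1]$, the parameter $a$ ranges over $[1,\ell/(\ell-1)]$.

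\textbf{Bounding $H_\ell$.} Both terms of $H_\ell$ are explicit finite sums of elementary and incomplete-gamma terms, because $\int_0^{\ell-1} e^{-at} t^{k+1}\,dt = a^{-(k+2)}\gamma(k+2,a(\ell-1))$. So $H_\ell(a)$ is an explicit elementary function of one variable on a short interval. I would first check the endpoint $a=1$ and examine the sign of $H_\ell'(a)$. If $H_\ell$ is monotone in $a$, checking the two endpoints $a=1$ and $a=\ell/(\ell-1)$ suffices. Otherwise I would certify $H_\ell\le 0$ on a fine grid and control the gaps with the crude Lipschitz bound $|H_\ell'(a)|\le (\ell-1)^2 S_{\ell-1}(\ell-1)+\int_0^{\ell-1}t^2S_{\ell-2}(t)\,dt$. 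Running this for each $4\le \ell<120$ gives $D_2F_\ell\le 0$, which is the claim stated just before the proposition.

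\textbf{Main obstacle: $\ell = 3$.} This is where the boundary term $(\ell-1)e^{-a(\ell-1)}S_{\ell-1}(\ell-1)$ is relatively largest. I expect $H_3$ may fail to be non-positive on part of $[1,3/2]$, which would explain why the text only claims that $F_3$ is minimized at $s=1-x_1$ rather than that it is monotone. In that case I would fall back on a direct two-variable comparison. Define
$$D(x_1,s) := F_3(x_1,s)-F_3(x_1,1-x_1)$$
on the triangle $\{x_1,s\ge 0,\ x_1+s\le 1\}$, using the explicit $\ell=3$ closed form. I would treat the removable singularity at $s=0$ through the integral representation. I would then verify $D\ge 0$ by grid evaluation plus Lipschitz bounds, with the bounds obtained from the derivative formula above (which is bounded on the domain).

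Near the edge $s=1-x_1$ the difference $D$ vanishes, so a grid argument alone cannot certify it there. Instead I would use $H_3(a)\le 0$ at $a$ close to $1$, which holds if $H_3(1)<0$ strictly. That gives local monotonicity in a strip near the edge, and the grid argument handles the rest of the triangle.
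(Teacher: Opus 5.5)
Your differentiation is correct. Leibniz's rule and the substitution $t=y\mu$ do give $\mu^2\,\partial_s F_\ell(x_1,s)=H_\ell(a)$ with $a=(\ell-x_1)/(\ell-x_1-s)\in[1,\ell/(\ell-1)]$. For $4\le\ell<120$ this is the same sign condition $D_2F_\ell\le 0$ that the paper asserts from numerics, now reduced to a one-variable check per $\ell$, which is a genuine simplification. Working from the integral rather than the closed form is also the right call, because the displayed formulas for $F_\ell$ and $F_3$ interchange $x_1$ and $x(N_1)$ in places. Two small repairs are needed there. First, $H_\ell$ need not be monotone: $H_4'(1)<0$ although $H_4(4/3)>H_4(1)$, so you do need the grid. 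Second, your Lipschitz constant drops the exponential weights and is of order $e^{\ell}$. With $m=\ell-1$, using $a\ge 1$, $e^{-m}S_m(m)\le 1$ and $e^{-t}S_{m-1}(t)\le 1$ gives instead $|H_\ell'(a)|\le m^2+m^3/3$.

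The genuine gap is $\ell=3$, and no argument can close it, because the inequality is false there. You rightly suspected that $H_3$ changes sign, but the conclusion you hoped to keep (minimum still at $s=1-x_1$) fails. Your own formula gives
\[
H_3(1)=10e^{-2}-\int_0^2 e^{-t}(t+t^2)\,dt=23e^{-2}-3>0,
\]
and $H_3$ stays positive on $[1,a_0)$ with $a_0\approx 1.109$. Since $a$ ranges over $[1,(3-x_1)/2]$, for every $x_1>3-2a_0\approx 0.78$ the map $s\mapsto F_3(x_1,s)$ is strictly increasing on all of $[0,1-x_1]$, so its minimum is at $s=0$. Concretely:
\begin{itemize}
\item $F_3(0.9,0)=(3-9e^{-2})/2.1\approx 0.8486$;
\item $F_3(0.9,0.05)\approx 0.8497$;
\item $F_3(0.9,0.1)=\int_0^1 e^{-2.1y}(1+2y+2y^2)\,dy\approx 0.8506$.
\end{itemize}
Comparing the endpoints $F_3(x_1,0)=(3-9e^{-2})/(3-x_1)$ and $F_3(x_1,1-x_1)$ shows the inequality fails for all $x_1$ above roughly $0.54$. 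Your grid check of $D$ would therefore return negative values.

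Your edge-strip step also fails, for two separate reasons. Its premise $H_3(1)<0$ is false. And near $s=1-x_1$ the parameter is $a=(3-x_1)/2$, which is close to $1$ only when $x_1$ is.

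Since $b(x_1)=\beta/F_3(x_1,1-x_1)$ by the definition of $b$, this is exactly the inequality the $\ell=3$ case of \Cref{thm:trs} relies on. The paper supports it only by a numerical claim. That case needs a stronger bound than $F_3$, for instance:
\begin{itemize}
\item a nonzero lower bound on $\mathbb{P}[L_0\le 2\mid Y_1=y]$ for $y\in[y_c,1]$ in place of the trivial bound $0$; or
\item an exchange argument as for $\ell=2$.
\end{itemize}
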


\subsection{Details of Patience $\ell \ge 120$} \label{sec:patience_big}

For this regime of $\ell$, we make use of Bennett's inequality:

\begin{theorem}[Bennett's Inequality -- \cite{Bennett}] \label{thm:bennett's}
Suppose that $Z_1, \ldots , Z_k$ are random variables with finite variance,
and for which $|Z_i - \mb{E}[Z_i]| \le c$ for all $i \in [k]$.
If $Z = \sum_{i=1}^k Z_i$, $\mu = \sum_{i=1}^k \mb{E}[Z_i]$ and $\sigma^2 = \sum_{i=1}^{k} \mb{E}[(Z_i - \mb{E}[Z_i])^2]$,
then for each $t \ge 0$,
\begin{equation}
    \mb{P}[Z \ge \mu + t] \le \exp\left( - \frac{\sigma^2}{c^2} h\left(\frac{b t}{\sigma^2}\right)\right) 
\end{equation}
where $h(s) = (1 +s) \log(1 + s) - s$ for $s \ge 0$. In particular, if $c=1$ and $\sigma^2 \le \mu$,
then 
\begin{equation} \label{eqn:bennett_simplified}
    \mb{P}[Z \ge \mu + t] \le \exp\left( - (\mu +t) \log\left( \frac{\mu+t}{t}\right) +t\right).
\end{equation}
\end{theorem}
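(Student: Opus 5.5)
The general inequality is the classical result of \cite{Bennett}, and I would not reprove it from scratch. For completeness I would only recall its standard Chernoff-type derivation. For each $i$, with $W_i := Z_i - \mb{E}[Z_i]$, $|W_i| \le c$ and $\mb{E}[W_i]=0$, expanding the exponential and using $\mb{E}[W_i^m] \le c^{m-2}\mb{E}[W_i^2]$ for $m \ge 2$ gives
$$\mb{E}[e^{\lambda W_i}] \le \exp\left(\frac{\mb{E}[W_i^2]}{c^2}\left(e^{\lambda c} - 1 - \lambda c\right)\right)$$
for every $\lambda \ge 0$. Multiplying over the independent $W_i$ and applying Markov's inequality to $e^{\lambda(Z-\mu)}$ then bounds $\mb{P}[Z \ge \mu + t]$ by $\exp\left(\frac{\sigma^2}{c^2}(e^{\lambda c}-1-\lambda c) - \lambda t\right)$. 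Choosing $\lambda = \frac{1}{c}\log(1 + ct/\sigma^2)$ yields $\exp\left(-\frac{\sigma^2}{c^2} h(ct/\sigma^2)\right)$. Here I read the ``$b$'' in the displayed exponent as the constant $c$. I also note that independence of the $Z_i$ is implicitly required; this is how the result is used, since $L_0$ is a sum of independent Bernoullis conditional on $Y_1 = y$.

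The real content is the specialization \eqref{eqn:bennett_simplified}. With $c = 1$ the bound reads $\exp(-g(\sigma^2))$, where $g(v) := v\, h(t/v)$. I would show that $g$ is non-increasing in $v > 0$. Writing $s = t/v$ and using $h'(s) = \log(1+s)$, one computes
$$g'(v) = h(s) - s\,h'(s) = (1+s)\log(1+s) - s - s\log(1+s) = \log(1+s) - s \le 0.$$
Hence $\sigma^2 \le \mu$ gives $g(\sigma^2) \ge g(\mu) = \mu\, h(t/\mu)$. Expanding,
$$\mu\, h(t/\mu) = (\mu + t)\log\left(\frac{\mu+t}{\mu}\right) - t,$$
which gives the Poisson-tail form of the bound. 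The degenerate cases $\sigma^2 = 0$ (then $Z = \mu$ almost surely, so the claim is trivial for $t > 0$) and $t = 0$ are handled separately.

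The main obstacle I anticipate is reconciling this with the printed expression, which has $\log\left(\frac{\mu+t}{t}\right)$ rather than $\log\left(\frac{\mu+t}{\mu}\right)$. The derivation above naturally produces the denominator $\mu$. I would therefore check which form is actually invoked in \Cref{sec:patience_big}. That section takes $\mu = y(\ell-1)$ and $\mu + t = \ell$, and the stated bound $1 - e^{-120(y + \log(1/y) - 1)}$ is consistent with $(\mu+t)\log\frac{\mu+t}{\mu} - t \ge \ell(\log(1/y) + y - 1)$ after using $\ell - 1 \le \ell$. I would accordingly present the $\mu$-denominator version as the intended statement and treat the printed $t$ as a typo.

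Finally, for the application I would verify the hypotheses directly. Each summand of $L_0$ is a Bernoulli variable, so $|Z_i - \mb{E}[Z_i]| \le 1$, giving $c = 1$. Its variance satisfies $p(1-p) \le p$, so $\sigma^2 \le \mu$.
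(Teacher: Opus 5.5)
Your proposal is correct. The paper gives no proof of its own: it cites Bennett for the general bound and states the $c=1$, $\sigma^2\le\mu$ specialization without justification. Your argument therefore fills in what the paper omits rather than taking a different route. The Chernoff sketch is the standard derivation. The monotonicity of $g(v)=v\,h(t/v)$, with $g'(v)=\log(1+s)-s\le 0$, is exactly the step needed to replace $\sigma^2$ by $\mu$, and the paper leaves it implicit.

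Your corrections to the statement are also right. The $b$ should be $c$, independence of the $Z_i$ must be assumed, and the denominator inside the logarithm must be $\mu$, not $t$. The printed form is in fact false. A single $\Ber(0.9)$ variable with $t=0.1$ satisfies $c=1$ and $\sigma^2=0.09\le\mu=0.9$. Its printed right-hand side is $0.1\,e^{0.1}\approx 0.11$, yet $\mb{P}[Z\ge \mu+t]=0.9$. The paper's own application in the $\ell\ge 120$ subsection does use $\log\frac{\mu+t}{\mu}$.

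One minor misreading, which does not affect the statement. That subsection does not take $\mu=y(\ell-1)$. It takes $\mu=y\,x(N_0)$ with $x(N_0)\le\ell$ and $\mu+t=\ell$; the $t=\ell-y\,x(N_1)$ printed there is another typo. It then observes that the resulting lower bound on $\mb{P}[L_0<\ell\mid Y_1=y]$ is smallest at $x(N_0)=\ell$.
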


Recall that conditional on $Y_1 = y$ for $y \in [0,1]$, $L_{0}$ is a sum of $|N_{0}|$ independent Bernoulli random variables with means $(y x_i)_{i \in N_0}$
 
We can therefore bound its (conditional) variance by $\sum_{i \in N_0} x_i y x_i \le y x(N_0) = \mb{E}[L_0 \mid Y_1 = y]$. Thus, we can apply Bennett's inequality (stated as \eqref{eqn:bennett_simplified} in \Cref{thm:bennett's}) with
parameters $c =1$, $\mu = y x(N_0)$  and $t = \ell - y x(N_1)$, to conclude that,
\begin{align}
    \mb{P}[L_0 < \ell \mid Y_1 =y] &\ge 1- e^{-(\mu +t) \log\left( \frac{\mu +t}{\mu}\right) +t}  = 1- e^{-\ell \log\left(\frac{\ell}{y x(N_0)}\right) - yx(N_0) + \ell}. \label{eqn:bennett_inequality_applied}
\end{align}

Recall that $x(N_0) \le \ell$ by \eqref{eqn:feasibility_constraints_simplified} due to \Cref{lem:p_values}. For any fixed $y \in [0,1]$ and $\ell$,
the right-hand side of \eqref{eqn:bennett_inequality_applied} is minimized when $x(N_0)  =\ell$,
so we may conclude that
\begin{equation} \label{eqn:bennett_inequality_simplified}
\mb{P}[L_0 < \ell \mid Y_1 =y] \ge 1- e^{-\ell(y + \log\left(\frac{1}{y}\right) - 1)}
\end{equation}
Finally, for any fixed $y \in [0,1]$, the right-hand side of \eqref{eqn:bennett_inequality_simplified} is a decreasing function of $\ell$. Thus, since $\ell \ge 120$, 
$$
\mb{P}[L_0 < \ell \mid Y_1 =y] \ge 1- e^{-120(y + \log\left(\frac{1}{y}\right) - 1)},
$$
as claimed in \eqref{eqn:final_opt_lower_bound_large}.

\section{Approximately Solving \ref{LP:social_welfare}} \label{sec:solve_LP}
Given $\eps > 0$, we must efficiently compute a solution to \ref{LP:social_welfare} with value at least $(1- \eps) \LPOPT(G)$. In order to do so, we first take the dual of \ref{LP:social_welfare}.

\begin{align*}\label{LP:sw_dual}
	\tag{D-LP-c}
	&\text{minimize} 
	& \sum_{u \in U} \alpha_{u} + \sum_{u \in U} \ell_u \gamma_u + \sum_{v \in V} \beta_{v}  \\
	&\text{s.t.} 
	& \beta_{v} \ge \srew_{\bm{r},\bm{q}}(\bm{e}, \bm{\ac}) 
	- \sum_{i=1}^{|\bm{e}|} ( q_{e_i}(\ac_i) \alpha_{u_i} + \gamma_{u_i}) 
	\cdot \prod_{j < i} (1 - q_{e_j}(\ac_j))  && \forall v \in V, (\bm{e}, \bm{\ac}) \in \scr{C}_v:  \\
	&&&& \begin{aligned}
		% &\bm{e} = (e_1, \ldots , e_k), &
		% & \bm{a} = (a_1, \ldots , a_k), \\
		& \text{$e_i$} = (u_i,v) & & \forall i \in [|\bm{e}|]
	\end{aligned} \\
	&& \alpha_{u}, \gamma_u \ge 0 && \forall u \in U\\
	&& \beta_{v} \ge 0 && \forall v \in V
\end{align*}

Let us suppose that we are presented $((\alpha_u)_{u \in U}, (\gamma_u)_{u \in U}, (\beta_v)_{v \in V})$, an arbitrary (potentially infeasible) dual solution to \ref{LP:sw_dual} .
Observe that for $(\bm{e}, \bm{\ac}) \in \scr{C}_v$ with $e_{i}=(u_{i},v)$ for $i=1, \ldots ,|\bm{e}|$,
\begin{align}
\srew_{\bm{r},\bm{q}}(\bm{e}, \bm{\ac}) - \sum_{i=1}^{|\bm{e}|} (q_{e_i}(\ac_i) \cdot \alpha_{u_i} + \gamma_{u_i}) \prod_{j < i} (1 - q_{e_j}(\ac_j)) \notag \\
=\sum_{i=1}^{|\bm{e}|} (\erew_{e_i}(\ac_i) - \alpha_{u_i} - \frac{\gamma_{u_i}}{q_{e_i}(\ac_i)}) q_{e_i}(\ac_i) \prod_{j < i} (1 - q_{e_j}(\ac_j)) \label{eqn:objective_dual_rewrite}.
\end{align}
This motivates defining new edge rewards,
where for each $(u,v) \in E$ and $\ac \in \scr{A}$,
\begin{align}
\hat{r}_{u,v}(\ac) := \max\{ r_{u,v}(\ac) - \alpha_{u} - \gamma_{u}/q_{u,v}(\ac), 0\} \label{eqn:new_reward_function_v}
\end{align}
For each $v \in V$, consider the \textit{star graph} input
to the action-reward problem on $\partial(v)$ with patience $\ell_v$, action space $\scr{A}$, edge rewards $\hat{\bm{r}}^{(v)} = (\hat{r}_{u,v}(a))_{u \in U, a \in \scr{A}}$
and edge probabilities $\bm{q}^{(v)} = (q_{u,v}(a))_{u \in U, a \in \scr{A}}$. On such a restricted
type of input, the goal of this problem is to solve the following maximization problem: 
\begin{align} 
	&\text{maximize} \quad \srew_{\hat{\bm{r}}^{(v)}, \bm{q}^{(v)}}(\bm{e}, \bm{\ac})\label{eqn:demand_oracle}\\
	&\text{subject to} \quad  (\bm{e}, \bm{\ac}) \in \scr{C}_v  \nonumber
\end{align}
Notice that if we could efficiently solve \eqref{eqn:demand_oracle} for each $v \in V$,
then due to \eqref{eqn:objective_dual_rewrite}, this would immediately yield a \textit{separation oracle} for \ref{LP:sw_dual}.

It is well known that one can use ellipsoid method with a separation oracle for \ref{LP:sw_dual}
to then solve \ref{LP:social_welfare} efficiently (see Chapter $4$ of \cite{jens2012}).
Unfortunately, unless the action space satisfies $|\scr{A}| =1$ (see \cite{purohit19a,brubach2025star,borodin2025online}), we cannot expect to solve \eqref{eqn:demand_oracle} \textit{exactly}, due to the NP-hardness
result of \citet{agrawal2020}. 

Fortunately, there exists a generalization of this approach for covering and packing LP's provided in Theorem 5.1 of \citet{Jansen03}. In particular, to get a multiplicative approximation of \ref{LP:social_welfare}, it suffices to design
a \textit{(strong) approximate separation oracle} for \ref{LP:sw_dual}.
% This is implied by Theorem 5.1 of \cite{Jansen03}, and we include the details in \Cref{app:solve_LP}.
We argue in \Cref{app:LP_round} that if we can compute a $(1- \eps)$-approximation of \eqref{eqn:demand_oracle} for each $v \in V$, then we can design an approximate separation oracle,
and thus get a $(1- \eps)$-approximate solution to \ref{LP:social_welfare}.

\begin{proposition}[Proof in \Cref{pf:prop:reduction_to_dual_approximation}] \label{prop:reduction_to_dual_approximation}
Suppose that for every $v \in V$, we can compute a $(1-\eps)$-approximate solution
to the maximization problem in \eqref{eqn:demand_oracle} in time $O( f(1/\eps) \cdot \poly(|U|, |\scr{A}|))$,
where $f$ is an arbitrary function of $\eps$. Then, a solution to \ref{LP:social_welfare} with value at least $(1-\eps) \LPOPT(G)$ can be computed in time $O( f(1/\eps) \cdot \poly(|G|, |\scr{A}|))$.
\end{proposition}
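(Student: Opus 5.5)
We plan to run the ellipsoid method on \ref{LP:sw_dual}, using the approximation algorithm for \eqref{eqn:demand_oracle} as an approximate separation oracle, in the style of Theorem 5.1 of \citet{Jansen03}. Throughout, $\hat{r}$ is the reduced reward from \eqref{eqn:new_reward_function_v}.

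\textbf{Step 1: building the oracle.} Given a candidate dual point $(\alpha,\gamma,\beta)$, we form the rewards $\hat{\bm{r}}^{(v)}$ for each $v \in V$. We then run the assumed $(1-\eps)$-approximation to obtain $(\bm{e}^v,\bm{\ac}^v)\in\scr{C}_v$.

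Clipping the rewards at $0$ is harmless. Any edge with negative reduced reward can be deleted from a configuration: its reward term disappears and the survival probabilities of later edges can only grow. So the optimum of \eqref{eqn:demand_oracle} with the clipped rewards equals $\max_{(\bm{e},\bm{\ac})\in\scr{C}_v}$ of \eqref{eqn:objective_dual_rewrite}. Applying the same pruning to the returned configuration turns it into one whose true dual slack is at least its clipped value.

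The oracle then does one of two things:
\begin{itemize}
\item If $\srew_{\hat{\bm r}^{(v)},\bm q^{(v)}}(\bm e^v,\bm a^v) > \beta_v$ for some $v$, it returns the pruned configuration as a violated constraint of \ref{LP:sw_dual}.
\item Otherwise it certifies that $(\alpha,\gamma,\beta/(1-\eps))$ is feasible for \ref{LP:sw_dual}.
\end{itemize}

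\textbf{Step 2: the scaling issue.} The certificate in the second case scales only the $\beta$ coordinates, whereas the standard framework wants the whole dual point scaled. This is fine because $\alpha,\gamma \ge 0$, so scaling $\beta$ alone by $1/(1-\eps)$ raises the objective by at most a $1/(1-\eps)$ factor. Consequently the ellipsoid method, run with binary search on the objective value $t$, either
\begin{itemize}
\item finds a dual solution of value at most $t/(1-\eps)$, or
\item collects a polynomial family of violated constraints $\{(v,\bm e,\bm a)\}$ whose restricted dual (only these constraints) has value above $t$.
\end{itemize}

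\textbf{Step 3: recovering the primal.} By LP duality, the restricted primal (\ref{LP:social_welfare} with only the collected variables) then has value at least $t$. Let $t^*$ be the largest $t$ at which the method reports infeasibility. The dual solutions found nearby show $\LPOPT(G) \le t^*/(1-\eps)$ (up to the binary-search precision). We solve the polynomial-size restricted primal explicitly and obtain value at least $t^* \ge (1-\eps)\LPOPT(G)$, after rescaling $\eps$ by a constant.

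\textbf{Step 4: running time.} The number of oracle calls is polynomial in $|G|$ and in the bit complexity of the input. Each call costs $|V| \cdot f(1/\eps)\,\poly(|U|,|\scr{A}|)$, which gives the claimed bound.

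\textbf{Main obstacle.} The delicate part is carrying out Jansen's argument rigorously for this LP, which is not a pure packing LP:
\begin{itemize}
\item the primal has both matching and patience rows;
\item the objective coefficients $\srew$ are not directly the violation;
\item the division by $q_{u,v}(\ac)$ in $\hat{r}$ is undefined when $q_{u,v}(\ac)=0$. In that case we set $\hat r_{u,v}(\ac)=0$; this costs nothing, since such edges contribute nothing positive to \eqref{eqn:objective_dual_rewrite}.
\end{itemize}
I would first try to invoke Theorem 5.1 of \citet{Jansen03} directly, checking its hypotheses after the Step 2 observation. If that does not fit, I would fall back on the explicit binary-search argument of Step 3, bounding the ellipsoid's iteration count via the vertex size of the dual polytope.
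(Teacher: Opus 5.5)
Your proposal is correct and follows essentially the same route as the paper: the star-graph approximation gives an approximate strong separation oracle for \ref{LP:sw_dual} (return a violated constraint if the approximate solution beats $\beta_v$, otherwise certify approximate feasibility using $\alpha,\gamma \ge 0$). The ellipsoid method in the form of Theorem 5.1 of \cite{Jansen03} then yields a $(1-\eps)$-approximate primal solution, and your restricted-primal/binary-search argument just spells out that step. Your pruning of edges with negative reduced reward is a genuine gain in rigor. Because $\hat{r}$ is clipped at $0$, the paper's identity \eqref{eqn:new_reward_restatement} holds only as the inequality $\ge$. That inequality suffices for the feasibility case but not for claiming the returned constraint is violated, and pruning is exactly what closes that gap.
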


In \Cref{sec:santa_claus_approximation}, we approximately solve the star graph problem for an arbitrary input in the proposed runtime of \Cref{prop:reduction_to_dual_approximation}.  This suffices to get the required approximate solution to \ref{LP:social_welfare} in the runtime claimed in \Cref{thm:main_theorem}.

\subsection{Approximately Solving the Star Graph Problem} \label{sec:santa_claus_approximation}

In this section, we provide an approximation scheme
for the star graph problem. We do this by providing
a reduction to the Santa Claus problem (see \Cref{def:santa_claus}).
The techniques in this section are mostly adapted from Section 3.5 of \citet{segev2025efficient},
however we include all of the proofs for completeness.

We first restate the definition of the star graph problem in a slightly more convenient
notation and terminology than that which was used previously. An instance of this problem consists of: 

\begin{itemize}
    \item A star graph with $n$ edges, where each edge $e \in [n]$ has a \textit{reward} $r_e(a)$ for each action $a \in \mathcal{A}$ independently with probability $q_e(a)$ and otherwise $0$.

    \item A \textit{budget/patience} constraint $\ell \leq n$ limiting the number of edges that can be \textit{queried}.
   
\end{itemize}

We define a \textit{policy} to
be a pair $(\bm{e},\bm{a})$ where:

\begin{itemize}
\item $\bm{e} = (e_1, \ldots , e_{\ell})$ specifies the edges queried, and the order this is done.
\item $\bm{a} =(a_1, \ldots , a_{\ell})$ specifies the actions set for each edge of $\bm{e}$.
\end{itemize}

The \textit{expected reward} of policy $(\bm{e},\bm{a})$ is then 

\begin{align}
\label{reward}
\text{val}_{\bm{r,q}}(\bm{e}, \bm{a}) = \sum_{i=1}^{\ell} r_{{e_i}}(a_{e_i}) \cdot q_{{e_i}}(a_{e_i}) \cdot \prod_{j=1}^{i-1} \left(1 - q_{{e_j}}(a_{e_j})\right).
\end{align}
The objective is to find a policy $(\bm{e}^*, \bm{a}^*)$ that maximizes \eqref{reward}. We denote $\OPT$ 
as the value of an optimal policy, and approximately solve this problem:

\begin{theorem}\label{thm:eptas-threshold-policy}
For the star graph problem, there exists an efficient polynomial-time approximation scheme (EPTAS): For any $\varepsilon > 0$, there exists an algorithm that runs in time $O\left(\left(\frac{1}{\varepsilon^2}\right)^{O(1/\varepsilon)} \cdot \textit{poly}(n,|\mathcal{A}|)\right)$ and outputs a policy $(\bm{e},\bm{a})$ with $\text{val}_{\bm{r,q}}(\bm{e}, \bm{a}) \ge (1 - 7\varepsilon) \cdot \textit{OPT}$.
\end{theorem}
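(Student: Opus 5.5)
We reduce to the Santa Claus problem, following Section 3.5 of \citet{segev2025efficient}. First, the structure of an optimal policy. For a fixed set of edge/action pairs $(e_i,a_i)$ queried, exchanging adjacent positions shows that it is optimal to query in non-increasing order of $r_{e_i}(a_i)$. Indeed, swapping $i$ and $i+1$ changes the value by $q_iq_{i+1}(r_{i}-r_{i+1})$ times the common prefix survival product. So a policy is determined by the chosen set of at most $\ell$ pairs (distinct edges, one action each), sorted by reward. Also, deleting pairs with $r_e(a)\le \eps\,\OPT$ costs at most $\eps\,\OPT$, since at most one reward is collected. We may guess $\OPT$ up to a $(1+\eps)$ factor by trying the $O(\log(n|\scr{A}|)/\eps)$ candidates between $\max_{e,a} r_e(a)q_e(a)$ and $n\cdot\max_{e,a} r_e(a)$.

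Next, discretization. Round rewards down to powers of $(1+\eps)$ in $[\eps\OPT,\OPT/\eps]$, losing a $(1-\eps)$ factor. Rewards above $\OPT/\eps$ can be handled separately: if the optimal policy queries such a pair, its success probability $q$ must be at most $\eps$-small in contribution, so we cap $r$ at $\OPT/\eps$ via the transformation $r' = \min(r, \OPT/\eps)$ with $q$ scaled to preserve $rq$. I would check that this keeps the value within $(1\pm\eps)$. This leaves $K=O(\eps^{-1}\log(1/\eps))$ reward classes. Within class $k$ with reward $R_k$, the policy's value is $\sum_k R_k\bigl(1-\prod_{j\in S_k}(1-q_j)\bigr)\prod_{k'<k}\prod_{j\in S_{k'}}(1-q_j)$. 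This depends only on the class-wise ``masses'' $m_k=-\sum_{j\in S_k}\log(1-q_j)$. So guess the vector of masses, rounded to multiples of $\eps^2$ and capped at $O(\log(1/\eps))$ (beyond which the survival probability is negligible). This gives $(1/\eps^2)^{O(1/\eps)}$ guesses, matching the stated runtime.

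For each guessed profile, we must find pairs with distinct edges, at most $\ell$ in total, such that class $k$ receives mass at least roughly $m_k$. This is a Santa Claus / assignment problem: edges are items, classes are agents, and edge $e$ contributes $-\log(1-q_e(a))$ to class $k$ for actions $a$ whose rounded reward lies in class $k$ (taking the best such $a$), subject to a cardinality budget $\ell$. With constantly many agents, I would solve the natural LP (assign each edge fractionally to classes, with coverage constraints per class and total count $\le \ell$). A basic solution has at most $K+1$ fractional edges. These fractional edges can be dropped, or the large-mass items handled by guessing, since only $O(1/\eps)$ items per class can have mass exceeding $\eps^2$. Dropping the small fractional ones loses $O(K\eps^2)=O(\eps)$ mass per class. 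Overshooting a class is harmless only if masses are treated as exact. I would therefore instead include \emph{at most} the guessed masses with upper and lower constraints, and check that the value function is Lipschitz in the masses with constant $O(\OPT)$, so total loss is $O(\eps)\OPT$.

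The main obstacle is this rounding/stability step. I must ensure that the value functional is monotone or Lipschitz under the allowed perturbations, since adding mass to a high-reward class reduces later classes' value, while the cardinality budget interacts with the dropped fractional items. Summing the losses from pruning small rewards, reward rounding, mass rounding, capping, and LP rounding should give the $(1-7\eps)$ guarantee. Finally, we output the best policy over all guesses, sorted by reward.
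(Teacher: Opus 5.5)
Your preprocessing is sound. That includes the capping you left to check: since $r'q'=rq$ and $q'\ge q$, capping can only lower the value of any fixed policy, while the capped pairs of the optimal policy succeed with probability at most $\eps$, so it loses only $O(\eps)\OPT$.

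The genuine gap is the step you flag yourself: turning a guessed mass profile into an \emph{integral} set of distinct edges within the hard budget $\ell$. A basic solution of your LP can leave $K+1$ edges fractional, each carrying constant mass (say $q_e(a)\approx 1/2$). Dropping them can wipe out a class, and rounding them up can exceed $\ell$. ``Guessing the large items'' is not an EPTAS step if it means guessing identities, which costs $n^{\mathrm{poly}(1/\eps)}$. Also, a class of mass up to $\Theta(\log(1/\eps))$ can hold $\Theta(\eps^{-2}\log(1/\eps))$ items of mass above $\eps^2$, not $O(1/\eps)$.

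Your remedy, two-sided mass constraints, puts the problem out of reach of the tool that solves it: the Santa Claus EPTAS of \Cref{thm:santa_claus} supports only lower bounds on loads. The worry behind the remedy is also unfounded. Inserting a pair $(r,q)$ at its sorted position changes the value by $S\,q\,(r-T)\ge 0$, where $S$ is the survival probability so far and $T\le r$ is the tail value. In your notation, with $V(m)=\sum_k R_k(1-e^{-m_k})e^{-M_{<k}}$,
\[
\frac{\partial V}{\partial m_k}=e^{-M_{\le k}}\bigl(R_k-V_{>k}\bigr)\ge 0,\qquad V_{>k}\le R_{k+1}<R_k ,
\]
where $V_{>k}$ is the value of the later classes given that they are reached. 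So overshooting never hurts. Moreover, $\frac{d}{dt}V(tm)\le V(tm)/t$ yields $V((1-\eps)m)\ge(1-\eps)V(m)$, which absorbs the $(1-\eps)$ load violation.

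With this observation your subproblem is exactly a Santa Claus instance with a global budget, and \Cref{thm:santa_claus} closes the gap. Machines are reward classes, and an edge's load is the mass of its best action in the class, capped at the target. This is the same monotonicity the paper exploits in \Cref{lem:ahead-schedule} and \Cref{lem:behind-schedule} through its loads $c_{ij}=\max_a[(r_j(a)-\text{BaseGuess}(B_i))q_j(a)]^+$.

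Even so, your parameters do not give the statement. An additive $\eps^2$ grid on masses is too coarse when rewards span $[\eps\OPT,\OPT/\eps]$, because $V$ is not $O(\OPT)$-Lipschitz: $\partial V/\partial m_k$ can be close to $R_k\approx\OPT/\eps$. For instance, take one edge of mass $\eps/2$ at reward $\OPT/\eps$, followed by one edge of mass $\eps^2/2$ in each lower reward class. The lower classes then carry about $\tfrac{\eps^2}{2}\sum_{k\ge 1}R_k\approx\OPT/2$. Yet on your grid they round to $0$, since a target of $\eps^2$ is unattainable even with $(1-\eps)$ slack, so the guessed profile certifies only about $\OPT/2$.

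Rounding masses multiplicatively, with a floor of order $\eps^3$, fixes the accuracy. But you have $K=\Theta(\eps^{-1}\log(1/\eps))$ classes rather than $O(1/\eps)$. So the number of profiles and the Santa Claus running time $2^{K\log(1/\eps)}$ are both $2^{\Theta(\eps^{-1}\log^2(1/\eps))}$, not $(1/\eps^2)^{O(1/\eps)}$; your own count already has this form. The repaired route is an EPTAS, enough for \Cref{thm:main_theorem}, but not with the stated running time.

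The paper gets only $O(1/\eps)$ machines by bucketing the \emph{positions} of the optimal policy around the at most $1/\eps$ jumps of the future-value function $R^*_i$. An additive $\eps^2\mathcal{E}$ grid suffices there because its loads are measured in value units rather than survival mass. (Minor: your range for guessing $\OPT$ should end at $n\max_{e,a}r_e(a)q_e(a)$, not $n\max_{e,a}r_e(a)$, or the number of candidates is unbounded.)
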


\paragraph{Outline.}
We prove \Cref{thm:eptas-threshold-policy} via three main steps:
\begin{enumerate}
\item \textit{Parameter Estimation and Bucketing:} Given an estimate $\mathcal{E}$ of the optimal value, partition the optimal policy into a constant number of buckets based on reward contribution patterns.
  
\item \textit{Santa Claus Reduction:} Define load contributions that capture how much each edge/action pair can contribute to each bucket's value requirement. Formulate an integer program that assigns edge/action pairs to buckets while satisfying capacity and load constraints.
\item \textit{Policy Reconstruction:} Convert the bucket assignment back to a policy and optimize actions via dynamic programming.
\end{enumerate}

\subsubsection{Step 1: Parameter Estimation and Bucketing}

We first assume that $1/\epsilon $ is an integer, and that we have an estimate $\mathcal{E}$ for which $(1-\epsilon) \cdot \text{OPT} \leq \mathcal{E} \leq \text{OPT}$. While the first assumption is clearly without loss, in order to see the second, first write \ref{LP:social_welfare} with $|U| =1$. In our current notation, we get the following LP:
\begin{align}\label{LP:bansal_single}
	\tag{LP-C-s}
	&\text{maximize} &  \sum_{e \in [n], a \in \scr{A}} r_{e}(a) q_{e}(a) z_{e}(a)\\
	&\text{subject to} & \sum_{e \in [n], a \in \scr{A}} q_{e}(a) z_{e}(a)  \leq 1 \\
        && \sum_{e \in [n], a \in \scr{A}} z_{e}(a) \leq \ell \\
        && \sum_{ a \in \scr{A}} z_{e}(a) \le 1 && \forall e \in [n] \\
	&&z_{e}(a) \ge 0 && \forall e \in [n], a \in \scr{A}
\end{align}
Suppose that $\MLPOPT$ denotes the optimal value of \ref{LP:bansal_single}. Clearly, $\OPT \le \MLPOPT$. On the other hand, after solving for an optimal solution $(z_{e}(a))_{e \in [n], a \in \scr{A}}$ to \ref{LP:bansal_single}, \Cref{thm:reduction_to_trs}
implies that we can use an $\alpha$-selectable P-RCRS (see \Cref{def:patience_rcrs})
to get policy with expected reward at least $\alpha \sum_{e \in [n], a \in \scr{A}} r_{e}(a) q_{e}(a) z_{e}(a)$. Thus, since \Cref{thm:trs} guarantees
the existence of a $\frac{1}{27} (19 - \frac{67}{\mathrm{e}^3})$-selectable P-RCRS where $\frac{1}{27} (19 - \frac{67}{\mathrm{e}^3}) > 1/2$, we know that 
\begin{equation} \label{eqn:estimate_trs}
\MLPOPT/2 \le \OPT \le \MLPOPT.
\end{equation}
We can therefore try all powers of $(1 + \epsilon)$ in the interval $[\MLPOPT/2, \MLPOPT]$ and at least one of them must be an $(1+\epsilon)$-approximation of $\OPT$. This justifies our second
assumption involving $\mathcal{E}$. 

Now, for any subset of ordered edges $\bm{e} =(e_1, \ldots , e_{\ell})$, let $R(\bm{e})$ denote the expected reward obtained by an algorithm that uses the \textit{optimal} strategy on $\bm{e}$. This quantity can be computed in polynomial time via dynamic programming by considering the optimal actions for each position. Specifically, for the optimal policy $(\bm{e}^*, \bm{a}^*)$, we can recursively define the \emph{future value function} as follows. Let $R^*_i$ be the maximum expected value achievable from positions $i, i+1, \ldots, \ell$ when following the optimal policy:
\begin{align}
R^*_{\ell+1} &= 0\\
R^*_i &= \max_{a \in \scr{A}} \left\{r_{e^*_i}(a) \cdot q_{e^*_i}(a) + R^*_{i+1} \cdot (1 - q_{e^*_i}(a))\right\}
\end{align}
where the maximum is taken over all possible actions $a$ for position $i$. The optimal expected value is $\text{OPT} = R^*_1$.

Intuitively, $R^*_i$ represents the expected value we can achieve starting from position $i$ in the optimal ordering, assuming we use optimal actions for all remaining positions. The sequence $R^*_1 \geq R^*_2 \geq \ldots \geq R^*_{\ell} \geq R^*_{\ell+1} = 0$ is non-increasing, as each position can only decrease the future value by potentially accepting a suboptimal offer.

Let position $i \in [\ell]$ be a \emph{jump} if $R^*_i - R^*_{i+1} \geq \epsilon \cdot R_1^*$.
Since $R^*_1 = \text{OPT} \geq \mathcal{E}$ and each jump contributes at least $\epsilon \cdot \mathcal{E}$ to the total decrease, we can have at most $\mathcal{E}/(\epsilon \cdot \mathcal{E}) = 1/\epsilon$ jumps.

\paragraph{Bucket Construction.}

Let $\tilde{t}_1 < \tilde{t}_2 < \ldots < \tilde{t}_K$ be the positions of reward jumps, where $K \leq 1/\epsilon$. We partition the positions $[\ell]$ into $M = 2K + 1$ buckets:

\begin{itemize}
\item Stable Bucket $B_1$: Positions $\{\tilde{t}_K + 1, \tilde{t}_K + 2, \ldots, \ell\}$ (after the last jump)
\item Jump Bucket $B_2$: Position $\{\tilde{t}_K\}$ (the last jump position)  
\item Stable Bucket $B_3$: Positions $\{\tilde{t}_{K-1} + 1, \tilde{t}_{K-1} + 2, \ldots, \tilde{t}_K - 1\}$
\item Jump Bucket $B_4$: Position $\{\tilde{t}_{K-1}\}$
\item And so on...
\end{itemize}

For each bucket $B_i$ we define $\text{BaseVal}(B_i)$ as the minimum future value $R^*_j$ that corresponds to any position appearing immediately after this bucket in the optimal policy $(\bm{e}^*, \bm{a}^*)$. Formally, 
$\text{BaseVal}(B_i) = R^*_{\max\{j : j \text{ is a position in bucket } B_i\} + 1}$
where we set $R^*_{\ell+1} = 0$ by convention. This means that $\text{BaseVal}(B_1) = 0$, $\text{BaseVal}(B_2) = R^*_{\tilde{t}_K + 1}$, $\text{BaseVal}(B_3) = R^*_{\tilde{t}_K}$, and so forth.

Similarly, we define $\text{DeltaVal}(B_i)$ as the total value contribution of bucket $B_i$, which equals the difference between the future value before and after processing this bucket:
$\text{DeltaVal}(B_i) = \text{BaseVal}(B_{i+1}) - \text{BaseVal}(B_i)$.

\paragraph{Parameter Guessing.}

Since we do not know the optimal policy, we guess approximations to these parameters:

For each bucket $B_i$, we guess:
\begin{itemize}
\item $\text{BaseGuess}(B_i) \in \{0, \epsilon^2 \mathcal{E}, 2\epsilon^2 \mathcal{E}, \ldots, \mathcal{E}\}$
\item $\text{DeltaGuess}(B_i) \in \{0, \epsilon^2 \mathcal{E}, 2\epsilon^2 \mathcal{E}, \ldots, \mathcal{E}\}$
\end{itemize}

such that:
\begin{align*}
\text{BaseVal}(B_i) - \epsilon^2 \mathcal{E} \leq \text{BaseGuess}(B_i) &\leq \text{BaseVal}(B_i) \\
\text{DeltaVal}(B_i) - \epsilon^2 \mathcal{E} \leq \text{DeltaGuess}(B_i) &\leq \text{DeltaVal}(B_i) 
\end{align*}

The total number of parameter combinations to try is $(1/\epsilon^2)^{O(1/\epsilon)}$.

\subsubsection{Step 2: Santa Claus Reduction}
At this step, we reduce our problem to an allocation problem, where we allocate the variables to buckets such that each bucket has a load of at least DeltaGuess, with a small error. To solve this, we use another problem called the Santa Claus problem, which was studied in \citet{segev2025efficient}.  Here is the definition:
\begin{definition} \label{def:santa_claus}
An instance of the \textit{Santa Claus} problem consists of:
\begin{itemize}
\item A set of $m$ unrelated machines, where each machine $i \in [m]$ has:
    \begin{itemize}
    \item An upper bound $\ell_i \in \mathbb{N}$ on the number of jobs it can process
    \item A lower bound $L_i \in \mathbb{R}_{+}$ on its required load
    \end{itemize}
\item A collection of $n$ jobs, where assigning job $j \in [n]$ to machine $i \in [m]$ incurs a scalar load $c_{ij} \in \mathbb{R}_{+}$ 

\item A global budget $\ell \in \mathbb{N}$ on the total number of assigned jobs

\end{itemize}
A feasible assignment is a function $\phi: [n] \to [m] \cup \{\emptyset\}$ such that:
\begin{enumerate}
\item $|\{j \in [n] : \phi(j) = i\}| \leq \ell_i$ for all $i \in [m]$ (capacity constraints)
\item $\sum_{j: \phi(j) = i} c_{ij} \geq L_i$ for all $i \in [m]$ (load constraints)
\item $\sum_{j \in [n]} \mathbf{1}_{\phi(j) \neq \emptyset} \leq \ell$ (budget constraint)
\end{enumerate}
\end{definition}
Assume without loss of generality that $L_i = 1$ (by scaling) and $c_{ij} \in [0,1]$ for all $i,j$.
\begin{align}
\text{(SC-IP)} \quad &\\
x_{ij} &\in \{0,1\} && \forall i \in [m], j \in [n] \tag{SC-1}\\
\sum_{i \in [m]} x_{ij} &\leq 1 && \forall j \in [n] \tag{SC-2}\\
\sum_{j \in [n]} x_{ij} &\leq {\ell}_i && \forall i \in [m] \tag{SC-3}\\
\sum_{j \in [n]} c_{ij} x_{ij} &\geq 1 && \forall i \in [m] \tag{SC-4}\\
\sum_{i \in [m]} \sum_{j \in [n]} x_{ij} &\leq \ell && \tag{SC-5} 
\end{align}
\begin{remark}
The EPTAS of ~\citet{segev2025efficient} was designed for the more general \emph{Multi-Dimensional Santa Claus} problem, where loads and lower bounds are $D$-dimensional vectors. In our application, we only require the single-dimensional case ($D = 1$), for which the same techniques apply and simplify considerably.
It also does not include the global budget constraint (SC-5). However, their techniques generalize to account for this, allowing us to ensure
that this constraint can be enforced with probability $1$. This is because \cite{segev2025efficient} 
introduce a ``strengthened'' LP which partitions the jobs $j$ based on their load $(c_{i,j})_{i \in [m]}$ -- these are called the load classes. They then guess the number of jobs from any load class assigned to any machine. By restricting to guesses in which these load class guesses satisfy the global budget constraint, their algorithm outputs a solution
for which (SC-5) holds.
\end{remark}

\begin{theorem}[Adapted from Theorem 2.1 of~\cite{segev2025efficient}]\label{thm:santa_claus}
For any $\epsilon > 0$, when (SC-IP) is feasible, there exists an algorithm running in time $O(2^{m \log(1/\epsilon)} \cdot \text{poly}(n))$ that computes $\{0,1\}$-valued random variables $(X_{i,j})_{i \in [m], j \in [n]}$ such that with probability at least $1/2$: 
% \andisheh{repeating for $1/\epsilon$ times?}
\begin{enumerate}
\item Constraints (SC-1), (SC-2), (SC-3), and (SC-5) are satisfied exactly.
\item Constraint (SC-4) is $\epsilon$-violated: $\sum_{j \in [n]} c_{ij}X_{ij} \geq 1 - \epsilon$ for all $i \in [m]$.
\end{enumerate}

\end{theorem}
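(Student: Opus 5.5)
The plan follows the strengthened-LP-plus-rounding template of \citet{segev2025efficient}, specialised to one dimension and with the global budget (SC-5) carried along as one more cardinality constraint. After scaling, $L_i = 1$ and $c_{ij} \in [0,1]$. Fix a threshold $\delta = \delta(\epsilon, m)$, chosen at the end. For each machine $i$, call job $j$ \emph{big for $i$} if $c_{ij} \ge \delta$ and \emph{small for $i$} otherwise. Round every big load down to a multiple of $\epsilon\delta$; this creates $O(1/(\epsilon\delta))$ load classes per machine.

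First I would show that a feasible integral assignment $\phi$ may be assumed to place at most $1/\delta$ big jobs on each machine. If a machine has more, its big jobs alone reach load $1$, so the excess can be unassigned; this only loosens (SC-3) and (SC-5). The algorithm then guesses, for each machine, how many of its big jobs fall into each load class. Only guesses whose total counts respect $\ell_i$ and the global budget $\ell$ are kept, and this filtering is exactly what will make (SC-5) hold with certainty. Under a correct guess, the big jobs that $\phi$ uses lose at most an $\epsilon$-fraction of their load to rounding.

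Next I would write a strengthened LP over the bipartite assignment polytope. It keeps (SC-2) and (SC-3), has $\sum_{i,j} x_{ij} = $ the guessed total (at most $\ell$), and for each machine and class fixes the number of big jobs of that class assigned to it to the guessed value. The load constraint becomes: small-job load plus guessed big-job load is at least $1$. Every constraint except the load rows is a cardinality constraint on a laminar or bipartite structure. I would round with dependent (pipage/GKPS-style) rounding on this bipartite structure. That rounding preserves all cardinality sums exactly, which gives (SC-1), (SC-2), (SC-3), (SC-5) and the class counts deterministically. It also preserves marginals and is negatively correlated on the edges at each machine. Big jobs therefore contribute their guessed load up to a $(1-\epsilon)$ factor. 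Small jobs each contribute at most $\delta$, so a Chernoff bound for negatively correlated variables gives the following. The probability that machine $i$'s small load falls more than $\epsilon/2$ below its LP value is at most $\exp(-\Omega(\epsilon^2/\delta))$.

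The remaining steps are the union bound and the runtime count, and the main obstacle is that these pull on $\delta$ in opposite directions. The union bound needs $m\exp(-\Omega(\epsilon^2/\delta)) \le 1/2$, i.e.\ $\delta = \Theta(\epsilon^2/\log m)$, which yields success probability at least $1/2$ and loads at least $1-\epsilon$. The enumeration, however, grows with $1/\delta$. My first attempt would follow Segev and Zhou: enumerate only the top $O(1/\epsilon)$ big values per machine and charge the moderately large jobs to the LP via a residual-capacity argument, so that the guesses number $2^{O(m\log(1/\epsilon))}$ while the LP handles everything below threshold $\epsilon$. If this charging fails, I would accept an $m$- or $\log m$-dependent exponent. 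That is harmless here because $m = 2K+1 = O(1/\epsilon)$, so the bound in the theorem is only needed up to functions of $1/\epsilon$. Once the guessing is in place, the rounding step is routine.
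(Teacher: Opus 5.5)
Your plan follows the same template the paper relies on. The paper gives no proof of its own, only the remark that Segev--Singla's strengthened LP with guessed per-machine load-class counts specializes to $D=1$, with (SC-5) obtained by discarding guesses that exceed the budget. Your filling-in is essentially sound, and on (SC-5) you are more explicit than the remark: in your scheme the guesses cover only big jobs, so you correctly add the global row and rely on the rounding to keep it.

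Three repairs are needed in the rounding. First, plain GKPS on the job--machine graph does not keep the class counts or the global total exact. Instead, round a flow in the network $s\to j\to(i,\kappa)\to i\to t'\to t$. Here $\kappa$ ranges over machine $i$'s big classes plus one small node, and the capacities are $1$ on $s\to j$, $\ell_i$ on $i\to t'$ and $\ell$ on $t'\to t$. Push along cycles or $s$/$t$-terminated paths of fractional edges with martingale step sizes. Then:
\begin{itemize}
\item the big-class edges $(i,k)\to i$ carry the integral guesses and are never touched;
\item the machine and budget edges end at the floor or ceiling of their LP values;
\item at each small node, any two incoming job edges that change in one step move in opposite directions, which gives the negative cylinder bounds behind your lower-tail Chernoff estimate.
\end{itemize}
Second, because big loads were rounded down, the LP load row must read $\ge 1-\epsilon$ rather than $\ge 1$ for the trimmed $\phi$ to be feasible. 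Third, the algorithm cannot recognize the correct guess. It should therefore round under every guess with a feasible LP and return any output that passes the (checkable) load test.

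The genuine shortfall is the running time. With $\delta=\Theta(\epsilon^2/\log m)$, your enumeration of per-machine class counts costs roughly $2^{O(m\log m\log(1/\epsilon)/\epsilon^{2})}$. In any case it is at least $2^{\Omega(m/\epsilon)}$, since each machine has at least $1/\epsilon$ classes and up to $1/\delta\ge 1/\epsilon$ big jobs, and this already exceeds $2^{m\log(1/\epsilon)}$. The \emph{charging} step meant to bring this down to $2^{O(m\log(1/\epsilon))}$ is only named, not argued. So your proposal establishes the theorem with running time $2^{O(m\log m\log(1/\epsilon)/\epsilon^{2})}\cdot\poly(n)$, not the stated $O(2^{m\log(1/\epsilon)}\cdot\poly(n))$.

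As you note, this weaker form is all that is used downstream. In \Cref{corr:eptas} and \Cref{thm:eptas-threshold-policy} one has $m=2K+1=O(1/\epsilon)$, and only an $f(1/\epsilon)\cdot\poly(n)$ bound matters. The paper itself does not derive the sharper exponent; it imports it by citation. You should state the result with the bound you actually prove.
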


To reduce to the Santa Claus problem with Global Budget, we need:
\begin{itemize}
    \item $n$ \textit{jobs} that correspond to the edges. We also have $2K+1$ \textit{machines} corresponding to the buckets. 
    \item For variable $j \in [n]$ assigned to bucket $B_i \in [2K+1]$ with baseline future value $\text{BaseGuess}(B_i)$, we set the \textit{assignment load} to be 
    $$c_{ij} = \max_{a } \left\{[(r_j(a) - \text{BaseGuess}(B_i)) \cdot q_j(a)]^+\right\}$$

    \item Jump machines can be assigned at most one job, and in total we can assign $\ell$ jobs to all the machines.
    \item Each machine $i \in [2K+1]$ has a lower bound of $\text{DeltaGuess}(B_i)$ on its total load.
\end{itemize}
We formulate the variable-to-bucket assignment as:
\begin{align}
\text{(IP-TH)} \quad &\\
&\xi_{ij} \in \{0,1\} &&\forall i \in [M], j \in [n] \label{eq:binary}\\
&\sum_{i=1}^M \xi_{ij} \leq 1 &&\forall j \in [n] \label{eq:assignment}\\
&\sum_{j=1}^n \xi_{ij} \leq 1 &&\forall i \in [M] : B_i \text{ is a jump bucket} \label{eq:capacity}\\
&\begin{multlined}[t]
\sum_{j=1}^n  \max_{a} \left\{[(r_j(a) - \text{BaseGuess}(B_i)) \cdot q_j(a)]^+\right\}\xi_{ij} \\
\geq \text{DeltaGuess}(B_i)
\end{multlined} &&\forall i \in [M] \label{eq:load}\\
&\sum_{i=1}^M \sum_{j=1}^n \xi_{ij} \leq \ell &&\label{eq:budget}
\end{align}
In order to use \Cref{thm:santa_claus}, we need to prove that the IP above admits a feasible solution.
\begin{lemma}[Proof in \Cref{pf:lem:feasible_IP_solution}]\label{lem:feasible_IP_solution}
    IP-TH admits a feasible solution. 
\end{lemma}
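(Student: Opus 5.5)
The plan is to exhibit an explicit feasible solution by reading it off the optimal policy $(\bm{e}^*,\bm{a}^*)$. We set $\xi_{ij}=1$ exactly when edge $j=e^*_p$ occupies a position $p$ lying in bucket $B_i$, and $\xi_{ij}=0$ otherwise. Edges of $\bm{e}^*$ are distinct, so each job goes to at most one machine, giving \eqref{eq:assignment}. Jump buckets consist of a single position, giving \eqref{eq:capacity}. Exactly $|\bm{e}^*|\le \ell$ jobs are assigned in total, giving \eqref{eq:budget}. The only real work is the load constraint \eqref{eq:load}.

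For the load constraint, fix a bucket $B_i$ with positions $s,s+1,\dots,t$, so that $\text{BaseVal}(B_i)=R^*_{t+1}$ and $\text{DeltaVal}(B_i)=R^*_s-R^*_{t+1}$. The first step is to telescope the DP recursion. For each position $p$, the optimal action gives
\[
R^*_p - R^*_{p+1} = \bigl(r_{e^*_p}(a^*_p)-R^*_{p+1}\bigr)\, q_{e^*_p}(a^*_p).
\]
The second step uses monotonicity: $R^*_{p+1}\ge R^*_{t+1}=\text{BaseVal}(B_i)\ge \text{BaseGuess}(B_i)$ for every $p\in B_i$. Hence
\[
R^*_p-R^*_{p+1} \le \bigl[(r_{e^*_p}(a^*_p)-\text{BaseGuess}(B_i))\,q_{e^*_p}(a^*_p)\bigr]^+ \le c_{i,e^*_p}.
\]
The third step sums over $p\in B_i$, which yields
\[
\sum_j c_{ij}\xi_{ij}\ \ge\ R^*_s-R^*_{t+1}=\text{DeltaVal}(B_i)\ \ge\ \text{DeltaGuess}(B_i).
\]
This establishes \eqref{eq:load}.

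Several details need to be checked along the way:
\begin{itemize}
\item Confirm the monotonicity $R^*_1\ge\cdots\ge R^*_{\ell+1}$ used in the second step.
\item Confirm that DeltaVal as defined via consecutive BaseVals coincides with $R^*_s-R^*_{t+1}$. This depends on the bucket indexing, which runs backwards from position $\ell$, so $B_{i+1}$ immediately precedes $B_i$ and $\text{BaseVal}(B_{i+1})=R^*_s$.
\item Handle empty stable buckets, which occur when two jumps are adjacent. There the load requirement is $0\ge 0$ trivially.
\end{itemize}

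I expect the main obstacle to be the indexing bookkeeping, together with justifying that the guesses lie below the true values so that the inequality direction is preserved. That direction is guaranteed by the stated guess conditions $\text{BaseGuess}\le\text{BaseVal}$ and $\text{DeltaGuess}\le\text{DeltaVal}$.
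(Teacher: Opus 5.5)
Your proposal is correct and matches the paper's proof. Both read the assignment off the optimal policy's bucket structure, bound each step $R^*_p - R^*_{p+1} = (r_{e^*_p}(a^*_p) - R^*_{p+1})\,q_{e^*_p}(a^*_p)$ above by $c_{i,e^*_p}$ using $R^*_{p+1} \ge \text{BaseVal}(B_i) \ge \text{BaseGuess}(B_i)$, and telescope to $\text{DeltaVal}(B_i) \ge \text{DeltaGuess}(B_i)$; the only difference is that you treat jump buckets as one-position instances of this argument instead of as the paper's separate case, which is slightly cleaner and loses nothing.
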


\begin{corollary}[of \Cref{thm:santa_claus}]
\label{corr:eptas}
% \CMc{Is this $\bm{x}$ or $\xi$?}
    There is an EPTAS for computing a cardinality-feasible variable-to-machine assignment $\bm{\xi} \in \{0,1\}^{(2K+1)\times n}$ such that every machine $i \in [2K+1]$ receives a total load of $\sum_{j \in [n]}c_{ij}\xi_{ij} \geq (1 - \epsilon) \text{DeltaGuess}(B_i)$.
\end{corollary}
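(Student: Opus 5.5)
This corollary follows by feeding (IP-TH) into \Cref{thm:santa_claus}, after putting it in the normalized form of (SC-IP). For each bucket $B_i$ with $\text{DeltaGuess}(B_i) > 0$, divide the loads by $\text{DeltaGuess}(B_i)$. This gives $c'_{ij} := \min\{c_{ij}/\text{DeltaGuess}(B_i), 1\}$ with lower bound $L_i = 1$. Truncating at $1$ does not change feasibility: a single job with $c'_{ij} = 1$ already meets the load, so any solution that was feasible before truncation stays feasible after, and any solution with truncated load at least $1-\epsilon$ also had original load at least $(1-\epsilon)\text{DeltaGuess}(B_i)$. Buckets with $\text{DeltaGuess}(B_i) = 0$ have no load constraint and can be kept as machines with lower bound $0$, or simply dropped. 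The capacities are $\ell_i = 1$ for jump buckets and $\ell_i = \ell$ for stable buckets, and the global budget is $\ell$. Constraints \eqref{eq:binary}--\eqref{eq:budget} then match (SC-1)--(SC-5) exactly.

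By \Cref{lem:feasible_IP_solution}, (IP-TH) is feasible, so the normalized (SC-IP) is feasible too. \Cref{thm:santa_claus} then gives, in time $O(2^{m\log(1/\epsilon)}\poly(n))$ with $m = M = 2K+1 \le 2/\epsilon + 1$, a random $\{0,1\}$ assignment that with probability at least $1/2$ satisfies (SC-1), (SC-2), (SC-3) and (SC-5) exactly and has normalized load at least $1-\epsilon$ on every machine. Multiplying back by $\text{DeltaGuess}(B_i)$ gives $\sum_j c_{ij}\xi_{ij} \ge (1-\epsilon)\text{DeltaGuess}(B_i)$.

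Two technical points remain.

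\begin{itemize}
\item \emph{Boosting the success probability.} Every output can be checked in polynomial time, since all constraints are explicit. Repeating the algorithm $O(\log(1/\delta))$ times and keeping the first valid output makes the failure probability arbitrarily small. Alternatively, we get a Las Vegas algorithm with expected constant number of repetitions.
\item \emph{Computing the loads.} Each $c_{ij}$ is a maximum over $|\scr{A}|$ actions, so the loads are computable in $\poly(n,|\scr{A}|)$ time.
\end{itemize}

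The total running time is then $2^{O(\log(1/\epsilon)/\epsilon)}\poly(n,|\scr{A}|) = (1/\epsilon^2)^{O(1/\epsilon)}\poly(n,|\scr{A}|)$, which is EPTAS form.

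The main thing to justify is that (SC-5) can be enforced through the adaptation of \cite{segev2025efficient} described in the preceding remark, i.e.\ by restricting the guessed load-class counts to those satisfying the budget. Here we take \Cref{thm:santa_claus} as given. The only genuine check is that the normalization keeps $c'_{ij} \in [0,1]$ without affecting either feasibility or the approximate load guarantee, which the truncation argument above covers.
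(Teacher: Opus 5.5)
Your proposal is correct and follows essentially the paper's route: the paper gives no separate proof and treats the corollary as an immediate consequence of \Cref{thm:santa_claus} applied to (IP-TH), after its stated scaling assumption $L_i = 1$, $c_{ij}\in[0,1]$, with feasibility supplied by \Cref{lem:feasible_IP_solution}. Your truncation argument, the handling of buckets with $\text{DeltaGuess}(B_i)=0$, and the repetition to boost the $1/2$ success probability are routine details the paper leaves unstated, and they are handled correctly.
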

\subsubsection{Step 3: Policy Reconstruction}
Given the assignment vector $\xi$ from Corollary \ref{corr:eptas}, we construct a policy $(\bm{e}, \bm{a})$ as follows:
We define the ordering by processing buckets in reverse order. Specifically:
 The last edges to be queried are those assigned to bucket $B_1$ (corresponding to the stable bucket after the last jump). Just before them, we query those assigned to bucket $B_2$ (the last jump bucket). Continuing this pattern through all buckets $B_3, B_4, \ldots, B_{2K+1}$.  Within each bucket, the internal ordering of edges is arbitrary.

Let $e$ be the resulting ordering of the at most $\ell$ selected edges.
 We compute the optimal actions and expected future values for our ordering $e$ using dynamic programming. Let $\tilde{R}_i$ denote the maximum expected value achievable from positions $i, i+1, \ldots, \ell$ when using optimal actions. We compute:
% \CMc{Expectations here.}
\begin{align*}
\tilde{R}_{\ell+1} &= 0\\
\tilde{R}_i &= \max_{a \in \scr{A}} \left\{r_{e_i}(a) \cdot q_{e_i}(a) + \tilde{R}_{i+1} \cdot (1 - q_{e_i}(a))\right\}
\end{align*}
for $i = \ell, \ell-1, \ldots, 1$. The optimal action for position $i$ is:
$${a}_i = \arg\max_{a \in \scr{A}} \left\{r_{e_i}(a) \cdot q_{e_i}(a) + \tilde{R}_{i+1} \cdot (1 - q_{e_i}(a))\right\}$$

The expected value of our reconstructed policy is $\text{val}_r(\bm{e}, \bm{a}) = \tilde{R}_1$.
 We say our policy is \textit{behind schedule} at position $i$ if $\tilde{R}_{i+1} < \text{BaseGuess}(B^{-1}_{e_i})$, where $B^{-1}_{e_i}$ denotes the bucket to which edge $e_i$ is assigned. Otherwise, the policy is \textit{ahead of schedule}.

Let $i_{\min}$ be the first position where we are ahead of schedule (this is well-defined since we are always ahead of schedule at the last position, as $\tilde{R}_{\ell+1} = 0 = \text{BaseGuess}(B_1)$). By the recursive definition, we can decompose the total expected value as:
$$\text{val}_{r,q}(e, a) = \tilde{R}_1 = \tilde{R}_{i_{\min}} + \sum_{i=1}^{i_{\min}-1} (\tilde{R}_i - \tilde{R}_{i+1})$$
\begin{lemma}[Proof in \Cref{pf:lem:ahead-schedule}]\label{lem:ahead-schedule}
The following inequality holds:
$$\tilde{R}_{i_{\min}} \geq \text{BaseVal}(B^{-1}_{e_{i_{\min})}}) + (1-\epsilon) \cdot \text{DeltaGuess}(B^{-1}_{e_{i_{\min}}}) - 2\epsilon \cdot \text{OPT}.$$
\end{lemma}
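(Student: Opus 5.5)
The plan is to lower bound $\tilde{R}_{i_{\min}}$ by the value of a suboptimal action choice on the positions assigned to the bucket $B := B^{-1}_{e_{i_{\min}}}$. At position $i_{\min}$ we are ahead of schedule, so $\tilde{R}_{i_{\min}+1} \ge \text{BaseGuess}(B)$. We will also use that $\tilde{R}_{i+1} \ge \text{BaseGuess}(B)$ at every later position $i$ belonging to the same bucket $B$.

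\textbf{Step 1 (monotonicity).} Within the bucket, $\tilde{R}$ is non-increasing in the position. The quantity after the bucket's last position, call it $\tilde{R}_{p+1}$, is what matters. Since $i_{\min}$ is the first ahead position, I would instead argue directly as follows. Let $i_{\min}, \ldots, p$ be the positions of $B$ from $i_{\min}$ onward. If some later position in $B$ were behind, that alone does not break the argument below, because we only need the ahead property at the \emph{last} position $p$ of the bucket. To get it, I would redefine the argument using that the buckets are processed in reverse and $\tilde{R}$ is non-increasing. Since $\tilde{R}_{i_{\min}+1} \ge \text{BaseGuess}(B)$ and $\tilde{R}_{p+1}\le \tilde{R}_{i_{\min}+1}$, the key inequality is instead applied with the lower bound $\text{BaseGuess}(B)$ replaced by a careful induction.

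\textbf{Step 2 (recursion with a fixed action).} For each position $i$ in $B$ from $i_{\min}$ on, choose the action $a$ attaining $c_{B,e_i}$. The DP maximality gives
$$\tilde{R}_i - \tilde{R}_{i+1} \ge q_{e_i}(a)\,(r_{e_i}(a) - \tilde{R}_{i+1}).$$
Whenever $\tilde{R}_{i+1} \ge \text{BaseGuess}(B)$ this is $\ge c_{B,e_i}$. If ever $\tilde{R}_{i+1}<\text{BaseGuess}(B)$ for $i>i_{\min}$, then by monotonicity $\tilde{R}_{i_{\min}}\ge \tilde{R}_{i_{\min}+1}\ge \text{BaseGuess}(B)$, which already handles the case. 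So we telescope and get $\tilde{R}_{i_{\min}} \ge \text{BaseGuess}(B) + \sum_{i\ge i_{\min},\, i\in B} c_{B,e_i}$.

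\textbf{Step 3 (the missing prefix).} The remaining concern is the positions of $B$ before $i_{\min}$, whose loads are not counted. The first of these is a jump bucket of size one, so there is no prefix. For a stable bucket, I would use that earlier positions were behind schedule. I would then use that every single $c_{B,j}$ in a stable bucket is at most $\epsilon\,\text{OPT}+\epsilon^2\mathcal{E}$, since no jumps occur there and each edge's marginal contribution over its base is small. This is the step I expect to be the main obstacle: showing that the uncounted part costs at most about $\epsilon\,\text{OPT}$, rather than the whole prefix load. If this fails, I would restructure so that the behind-schedule prefix is absorbed in the sum $\sum_{i<i_{\min}}(\tilde{R}_i-\tilde{R}_{i+1})$ in the subsequent analysis rather than here.

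\textbf{Step 4 (conclusion).} Combining the pieces uses Corollary~\ref{corr:eptas}, which gives total load $\ge(1-\epsilon)\text{DeltaGuess}(B)$. Together with $\text{BaseGuess}(B)\ge \text{BaseVal}(B)-\epsilon^2\mathcal{E}$ and $\epsilon^2\mathcal{E}\le \epsilon\,\text{OPT}$, this yields the stated bound with the $2\epsilon\,\text{OPT}$ slack.
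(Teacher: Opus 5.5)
Your Step~2 uses the key inequality in the wrong direction. For a fixed action $a$, the right-hand side of $\tilde R_i-\tilde R_{i+1}\ge q_{e_i}(a)\,(r_{e_i}(a)-\tilde R_{i+1})$ is \emph{non-increasing} in $\tilde R_{i+1}$. So being ahead of schedule, $\tilde R_{i+1}\ge \text{BaseGuess}(B)$, only shows that it is \emph{at most} $q_{e_i}(a)(r_{e_i}(a)-\text{BaseGuess}(B))$. The increment bound $\tilde R_i-\tilde R_{i+1}\ge c_{B,e_i}$ holds at \emph{behind}-schedule positions, which is exactly what Lemma~\ref{lem:behind-schedule} uses. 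The ahead property gives instead the non-telescoping bound $\tilde R_i\ge r_{e_i}(a)q_{e_i}(a)+\text{BaseGuess}(B)\,(1-q_{e_i}(a))=\text{BaseGuess}(B)+c_{B,e_i}$. That is a single load on top of the base, not a sum of loads.

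Your telescoped inequality $\tilde R_{i_{\min}}\ge \text{BaseGuess}(B)+\sum_{i\ge i_{\min},\,i\in B}c_{B,e_i}$ is false in general. Suppose the only two assigned edges are single-action edges with $q=1/2$, $r=1$ placed in $B_1$ (where $\text{BaseGuess}=0$). Then $i_{\min}=1$, both positions are ahead, and the loads sum to $1$, yet $\tilde R_1=3/4$. Your escape clause (a later position falling behind) only yields $\tilde R_{i_{\min}}\ge\text{BaseGuess}(B)$, which loses the $\text{DeltaGuess}$ term. For the same reason Step~3 targets the wrong part of the bucket. Positions of $B$ before $i_{\min}$ are behind, so their increments dominate their loads and can be charged in the behind-schedule sum. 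The loads genuinely unaccounted for are those of positions of $B$ \emph{after} $i_{\min}$, which are precisely the ones Step~2 claims to collect.

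The paper's proof uses only the single-load bound at $i_{\min}$. If $B$ is a jump bucket, $e_{i_{\min}}$ is the only edge assigned to it, so Corollary~\ref{corr:eptas} gives $c_{B,e_{i_{\min}}}\ge(1-\epsilon)\,\text{DeltaGuess}(B)$. Then $\text{BaseGuess}(B)\ge\text{BaseVal}(B)-\epsilon^2\mathcal{E}$ finishes the case; it is sound and matches your remark that a jump bucket has no prefix. If $B$ is stable, the paper discards the load and asserts $\text{DeltaGuess}(B)\le\epsilon\cdot\text{OPT}$ because $B$ contains no jump.

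Do not expect to close your Step~3 this way. With stable buckets defined as maximal runs between jumps, only each \emph{individual} drop $R^*_z-R^*_{z+1}$ is below $\epsilon R^*_1$; the total $\text{DeltaVal}(B)$ can be all of $\text{OPT}$. Take $n=\ell$ single-action edges with $q=\delta\ll\epsilon$, $r=1$, and $n\delta$ large. There are no jumps, $B_1=[\ell]$, $\text{BaseGuess}(B_1)=0$, and $\text{DeltaVal}(B_1)=\text{OPT}\approx 1$. An assignment of $m\approx(1-\epsilon)\,\text{DeltaGuess}(B_1)/\delta$ edges meets the load guarantee and has $i_{\min}=1$. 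But $\tilde R_1=1-(1-\delta)^m\approx 1-e^{-(1-\epsilon)}$, far below the claimed $\approx(1-3\epsilon)\,\text{OPT}$.

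So for stable buckets the inequality itself can fail, and neither your route nor the paper's closes this case without changing the bucketing. One fix is to cut each stable run into $O(1/\epsilon)$ sub-runs of total drop $O(\epsilon\,\text{OPT})$. Then $\text{DeltaGuess}(B)\le\text{DeltaVal}(B)=O(\epsilon\,\text{OPT})$ for every stable $B$, and the paper's short argument goes through.
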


\begin{lemma}[Proof in \Cref{pf:lem:behind-schedule}] \label{lem:behind-schedule}
The following inequality holds:
$$\sum_{i=1}^{i_{\min}-1} (\tilde{R}_i - \tilde{R}_{i+1}) \geq (1-\epsilon) \cdot \sum_{j > B^{-1}_{e_{i_{\min}}}} (\text{BaseVal}(B_{j+1}) - \text{BaseVal}(B_j)) - 3\epsilon \cdot \text{OPT}.$$
\end{lemma}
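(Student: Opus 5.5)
The plan is to compare the value increments of the reconstructed policy with the loads $c_{ij}$ of the Santa Claus assignment, using the fact that every position before $i_{\min}$ is behind schedule.

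\emph{Per-position bound.} Fix $i<i_{\min}$, and let $B_j := B^{-1}_{e_i}$ be the bucket to which $e_i$ is assigned. By the definition of $i_{\min}$, position $i$ is behind schedule, so $\tilde{R}_{i+1} < \text{BaseGuess}(B_j)$. Let $a^\star$ attain the maximum in $c_{j,e_i}=\max_a [(r_{e_i}(a)-\text{BaseGuess}(B_j))q_{e_i}(a)]^+$. Plugging $a^\star$ into the DP recursion for $\tilde{R}_i$ gives
$$\tilde{R}_i-\tilde{R}_{i+1}\ \ge\ q_{e_i}(a^\star)\bigl(r_{e_i}(a^\star)-\tilde{R}_{i+1}\bigr)\ \ge\ q_{e_i}(a^\star)\bigl(r_{e_i}(a^\star)-\text{BaseGuess}(B_j)\bigr).$$
If the bracket in $c_{j,e_i}$ is positive, this is exactly $c_{j,e_i}$. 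Otherwise $c_{j,e_i}=0$, and we need $\tilde{R}_i-\tilde{R}_{i+1}\ge 0$. This holds if we assume, without loss of generality, a null action with $q=0$; such an action can be added to $\scr{A}$ without changing any value. So in all cases $\tilde{R}_i-\tilde{R}_{i+1}\ge c_{B^{-1}_{e_i},e_i}\ge 0$.

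\emph{Summation.} The policy queries buckets in reverse index order: $B_{2K+1}$ first, down to $B_1$ last. Hence every edge assigned to a bucket $B_j$ with $j>B^{-1}_{e_{i_{\min}}}$ occupies a position strictly before $i_{\min}$. The remaining positions before $i_{\min}$ belong to the bucket $B^{-1}_{e_{i_{\min}}}$ itself, and by the per-position bound they contribute nonnegatively. Dropping those terms and summing the per-position bound gives
$$\sum_{i<i_{\min}}(\tilde{R}_i-\tilde{R}_{i+1})\ \ge\ \sum_{j>B^{-1}_{e_{i_{\min}}}}\ \sum_{e:\,\xi_{je}=1} c_{je}\ \ge\ (1-\epsilon)\sum_{j>B^{-1}_{e_{i_{\min}}}}\text{DeltaGuess}(B_j),$$
where the last step is \Cref{corr:eptas}.

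\emph{Replacing guesses by true values.} By the parameter guessing step, $\text{DeltaGuess}(B_j)\ge \text{DeltaVal}(B_j)-\epsilon^2\mathcal{E}$, and $\text{DeltaVal}(B_j)=\text{BaseVal}(B_{j+1})-\text{BaseVal}(B_j)$. There are at most $M=2K+1\le 2/\epsilon+1$ buckets and $\mathcal{E}\le\OPT$. So the total loss is at most $(2/\epsilon+1)\epsilon^2\OPT=(2\epsilon+\epsilon^2)\OPT\le 3\epsilon\,\OPT$, using $\epsilon\le1$. Together with the factor $(1-\epsilon)\le1$ on the error terms, this yields the claimed inequality.

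The main point needing care is the per-position bound when the positive part in $c_{ij}$ is active versus zero, that is, ensuring that increments are nonnegative; the null-action convention handles this. Everything else is bookkeeping over the bucket order.
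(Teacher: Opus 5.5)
Your proof is correct and follows the paper's argument. It uses the per-position bound $\tilde{R}_i-\tilde{R}_{i+1}\ge c_{B^{-1}_{e_i},e_i}$ from being behind schedule, sums over buckets with index above $B^{-1}_{e_{i_{\min}}}$ via \Cref{corr:eptas}, and absorbs the guessing loss $(2K+1)\epsilon^2\mathcal{E}\le 3\epsilon\,\OPT$. Your null-action convention also justifies two steps the paper leaves implicit: identifying $\max_a\{(r-\text{BaseGuess})q\}$ with the positive-part load $c_{jy}$, and discarding the (nonnegative) increments of same-bucket positions preceding $i_{\min}$.
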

We can now prove that the reconstructed policy $(\bm{e}, \bm{a})$ achieves an expected reward of at least $(1 - 7\epsilon) \cdot \text{OPT}$.
\begin{proof}[Proof of \Cref{thm:eptas-threshold-policy}]
Combining Lemmas~\ref{lem:ahead-schedule} and~\ref{lem:behind-schedule},
\begin{align*}
\text{val}_{\bm{r,q}}(\bm{e}, \bm{a}) &\geq \text{BaseVal}(B^{-1}_{e_{i_{\min}}}) + (1-\epsilon) \cdot \text{DeltaGuess}(B^{-1}_{e_{i_{\min}}}) - 2\epsilon \cdot \text{OPT}\\
&\quad + (1-\epsilon) \cdot \sum_{j > B^{-1}_{e_{i_{\min}}}} (\text{BaseVal}(B_{j+1}) - \text{BaseVal}(B_j)) - 3\epsilon \cdot \text{OPT}\\
&\geq (1 - 7\epsilon) \cdot \text{OPT},
\end{align*}
where the final inequality holds for $\eps$ sufficiently small.
\end{proof}

\section{Putting it all together: Proving \Cref{thm:main_theorem}} \label{sec:final_proof}
Suppose we are given a graph $G=(U,V,E)$
with actions $\scr{A}$, edge rewards $\bm{r} =(r_{e}(\ac))_{e \in E, \ac \in \scr{A}}$ and edge probabilities $\bm{q} =(q_{e}(\ac))_{e \in E, \ac \in \scr{A}}$, and recall that $\beta := 1-1/e$ if $\ell_u \in \{1, \infty\}$ for all $u \in U$,
and $\beta := {27} (19 - \frac{67}{e^3})$ otherwise.
For each $\eps > 0$, we design a policy whose expected reward is at least $(1- \eps) \cdot \beta \cdot \OPT(G)$,
where $\OPT(G)$ is the expected reward of the optimal policy.

For any $v \in V$, we can apply \Cref{thm:eptas-threshold-policy},
to the maximization problem \eqref{eqn:demand_oracle}, and solve it up to a multiplicative factor of $(1 - \eps)$. 
Thus, by \Cref{prop:reduction_to_dual_approximation}, we can compute a solution to \ref{LP:social_welfare},
say $(z_{v}(\bm{e},\bm{\ac}))_{v \in V, (\bm{e},\bm{\ac}) \in \scr{C}_v}$, for
which
\begin{equation} \label{eqn:approx_compute_restate}
     \sum_{v \in V} \sum_{(\bm{e},\bm{\ac}) \in \scr{C}_v } \srew(\bm{e}, \bm{\ac})  z_{v}(\bm{e}, \bm{\ac}) \ge (1 - \eps) \LPOPT(G),
\end{equation}
where $\LPOPT(G)$ is the optimal value of \ref{LP:social_welfare}.  Now, by \Cref{thm:trs},
there is a $\beta$-selectable P-RCRS $\psi_u$ for each $u \in U$. By passing $(z_{v}(\bm{e},\bm{\ac}))_{v \in V, (\bm{e},\bm{\ac}) \in \scr{C}_v}$ and these P-RCRS's $(\psi_u)_{u \in U}$ to \Cref{alg:social_welfare}, \Cref{thm:reduction_to_trs} implies
that we get a policy whose expected reward is at least
\begin{equation} \label{eqn:prcrs_restate}
    \beta \cdot \sum_{v \in V} \sum_{(\bm{e},\bm{\ac}) \in \scr{C}_v } \srew(\bm{e}, \bm{\ac})  z_{v}(\bm{e}, \bm{\ac}) \ge \beta (1 - \eps) \LPOPT(G) \ge (1 - \eps) \OPT(G), 
\end{equation}
where first inequality applies \eqref{eqn:approx_compute_restate}, and the second inequality applies \Cref{thm:LP_relaxation}.
Thus, \eqref{eqn:prcrs_restate} implies an approximation ratio of $(1- \eps) \cdot \beta$ as claimed in \Cref{thm:main_theorem}. Finally,
the runtime follows due to the runtime of \Cref{thm:eptas-threshold-policy}, \Cref{prop:reduction_to_dual_approximation}, and the fact that the remaining steps (i.e., the P-RCRS's of \Cref{thm:trs} and \Cref{alg:social_welfare}) are polynomial time, and in fact do not depend on $\eps$.

\section{Details of Reductions} 
\label{sec:reductions}
In this section, we show how both the sequential pricing problem and the free-order prophet matching problem can be reduced to our \emph{action-reward query-commit matching problem}.

\paragraph{Sequential pricing.}
In the sequential pricing problem introduced by \citet{DBLP:journals/corr/abs-2205-08667}, the input is a bipartite graph, say $G=(U,V,E)$, in which  $U$ are the \emph{jobs}, each of which has a known value $b_u \ge 0$ to the platform for being completed, and $V$ are the \textit{workers}. The platform may query an edge $e=(u,v)$ by offering a worker $v \in V$ a payment $\tau \ge 0$ to perform job $u$. The offer is then accepted independently with a known probability $p_{u,v}(\tau)$. As in our setting, each vertex is associated with a known patience constraint limiting the number of incident payment offers, and each $e=(u,v) \in E$ may be queried via at most one payment. The platform’s objective is to maximize either its expected \emph{revenue} or its expected \emph{social welfare}.

This problem naturally reduces to our \emph{action-reward query-commit matching problem} by appropriately defining the action space. Specifically, for each edge $e=(u,v)$, we introduce an action $a \in \mathcal{A}$ for every payment $\tau \ge 0$. Querying edge $e$ using the action corresponding to $\tau$ succeeds with probability $q_e(a)=p_{u,v}(\tau)$. The reward associated with an action depends on the optimization objective. For the revenue objective, the reward of the action  $a$ corresponding to payment $\tau$ is the platform’s revenue $r_e(a)= b_u - \tau$, where $b_u$ denotes the value of job $u$. For the social welfare objective, the reward is given by
$r_e(a)=b_u - \mathbb{E}\!\left[C_{u,v} \mid C_{u,v} \le \tau\right]$,
where $C_{u,v}$ is a random variable representing the cost incurred by worker $v$ for performing job $u$, drawn from a known distribution.

\paragraph{Free-order prophet matching.}
In the \emph{free-order prophet matching problem}, each edge $e$ of a graph $G$ has a weight $W_e$ drawn independently from a known distribution $\mathcal{D}_e$. An online algorithm processes the edges sequentially in an order of its choosing; upon processing an edge $e$, it observes the realization of $W_e$ and makes an irrevocable decision on whether to add $e$ to the current matching. The objective is to maximize the total weight of the resulting matching.

It is known that this problem admits an optimal threshold policy: before processing any edge $e$, the policy selects a threshold $\tau_e$ and includes $e$ in the matching if and only if $W_e \ge \tau_e$. Moreover, it suffices to consider thresholds $\tau_e$ drawn from the support of $\mathcal{D}_e$. To reduce this problem to our action-reward query-commit matching framework, we introduce, for each edge $e$, an action $a \in \mathcal{A}$ corresponding to each feasible threshold $\tau_e$. Querying edge $e$ using the action associated with threshold $\tau_e$ succeeds with probability
\[
q_e(a) = \mb{P}[W_e \ge \tau_e],
\]
and yields a reward
\[
r_e(a) = \mathbb{E}[W_e \mid W_e \ge \tau_e].
\]
We can also capture the revenue objective for this problem in a similar way.

This concludes our reduction of both the sequential pricing problem and the free-order prophet matching problem to the action-reward query-commit matching problem.

\bibliographystyle{amsalpha}
\bibliography{ref}

@inproceedings{DBLP:conf/soda/Derakhshan023,
  author       = {Mahsa Derakhshan and
                  Alireza Farhadi},
  editor       = {Nikhil Bansal and
                  Viswanath Nagarajan},
  title        = {Beating {(1} - 1/e)-Approximation for Weighted Stochastic Matching},
  booktitle    = {Proceedings of the 2023 {ACM-SIAM} Symposium on Discrete Algorithms,
                  {SODA} 2023, Florence, Italy, January 22-25, 2023},
  pages        = {1931--1961},
  publisher    = {{SIAM}},
  year         = {2023},
  url          = {https://doi.org/10.1137/1.9781611977554.ch74},
  doi          = {10.1137/1.9781611977554.CH74},
  bibsource    = {dblp computer science bibliography, https://dblp.org}
}

@inproceedings{huang2025edgeweightedmatchingdark,
      title={Edge-weighted Matching in the Dark}, 
      author={Zhiyi Huang and Enze Sun and Xiaowei Wu and Jiahao Zhao},
booktitle    = { IEEE 66th Annual Symposium on Foundations of Computer Science (FOCS 2025)},
      year={2025},
}

@article{DBLP:journals/corr/abs-2205-08667,
  author       = {Tristan Pollner and
                  Mohammad Roghani and
                  Amin Saberi and
                  David Wajc},
  title        = {Improved Online Contention Resolution for Matchings and Applications
                  to the Gig Economy},
  journal      = {Math. Oper. Res.},
  volume       = {49},
  number       = {3},
  pages        = {1582--1606},
  year         = {2024},
  url          = {https://doi.org/10.1287/moor.2023.1388},
  doi          = {10.1287/MOOR.2023.1388},
  bibsource    = {dblp computer science bibliography, https://dblp.org}
}

@inproceedings{macruryinduction2023,
  author       = {Calum MacRury and
                  Will Ma},
  editor       = {Bojan Mohar and
                  Igor Shinkar and
                  Ryan O'Donnell},
  title        = {Random-Order Contention Resolution via Continuous Induction: Tightness
                  for Bipartite Matching under Vertex Arrivals},
  booktitle    = {Proceedings of the 56th Annual {ACM} Symposium on Theory of Computing,
                  {STOC} 2024, Vancouver, BC, Canada, June 24-28, 2024},
  pages        = {1629--1640},
  publisher    = {{ACM}},
  year         = {2024},
  url          = {https://doi.org/10.1145/3618260.3649788},
  doi          = {10.1145/3618260.3649788},
  bibsource    = {dblp computer science bibliography, https://dblp.org}
}

@InProceedings{fu_2021,
  author =	{Fu, Hu and Tang, Zhihao Gavin and Wu, Hongxun and Wu, Jinzhao and Zhang, Qianfan},
  title =	{{Random Order Vertex Arrival Contention Resolution Schemes for Matching, with Applications}},
  booktitle =	{48th International Colloquium on Automata, Languages, and Programming (ICALP 2021)},
  pages =	{68:1--68:20},
  series =	{Leibniz International Proceedings in Informatics (LIPIcs)},
  ISBN =	{978-3-95977-195-5},
  ISSN =	{1868-8969},
  year =	{2021},
  volume =	{198},
  editor =	{Bansal, Nikhil and Merelli, Emanuela and Worrell, James},
  publisher =	{Schloss Dagstuhl -- Leibniz-Zentrum f{\"u}r Informatik},
  address =	{Dagstuhl, Germany},
  URL =		{https://drops.dagstuhl.de/opus/volltexte/2021/14137},
  URN =		{urn:nbn:de:0030-drops-141376},
  doi =		{10.4230/LIPIcs.ICALP.2021.68}
}

@InProceedings{purohit19a,
  title = 	 {Hiring Under Uncertainty},
  author =       {Purohit, Manish and Gollapudi, Sreenivas and Raghavan, Manish},
  booktitle = 	 {Proceedings of the 36th International Conference on Machine Learning},
  pages = 	 {5181--5189},
  year = 	 {2019},
  editor = 	 {Chaudhuri, Kamalika and Salakhutdinov, Ruslan},
  volume = 	 {97},
  series = 	 {Proceedings of Machine Learning Research},
  month = 	 {09--15 Jun},
  publisher =    {PMLR},
  url = 	 {https://proceedings.mlr.press/v97/purohit19a.html}
}

@article{brubach2025star,
author = {Brubach, Brian and Grammel, Nathaniel and Ma, Will and Srinivasan, Aravind},
title = {Online Matching Frameworks Under Stochastic Rewards, Product Ranking, and Unknown Patience},
year = {2025},
issue_date = {March-April 2025},
publisher = {INFORMS},
address = {Linthicum, MD, USA},
volume = {73},
number = {2},
issn = {0030-364X},
url = {https://doi.org/10.1287/opre.2021.0371},
doi = {10.1287/opre.2021.0371},
journal = {Oper. Res.},
month = mar,
pages = {995–1010},
numpages = {16}
}

@article{brubach2024offline,
  author    = {Brian Brubach and
               Nathaniel Grammel and
               Will Ma and
               Aravind Srinivasan},
  title     = {Improved Guarantees for Offline Stochastic Matching via new Ordered
               Contention Resolution Schemes},
  journal = {Mathematics of Operations Research 0(0)},
  year      = {2024},
  url       = {https://doi.org/10.1287/moor.2022.0256},
}

@inproceedings{macrury_contention_2023,
  author    = {Calum MacRury and
               Will Ma and
               Nathaniel Grammel},
  editor    = {Nikhil Bansal and
               Viswanath Nagarajan},
  title     = {On (Random-order) Online Contention Resolution Schemes for the Matching
               Polytope of (Bipartite) Graphs},
  booktitle = {Proceedings of the 2023 {ACM-SIAM} Symposium on Discrete Algorithms,
               {SODA} 2023, Florence, Italy, January 22-25, 2023},
  pages     = {1995--2014},
  publisher = {{SIAM}},
  year      = {2023},
  url       = {https://doi.org/10.1137/1.9781611977554.ch76},
  doi       = {10.1137/1.9781611977554.ch76},
  bibsource = {dblp computer science bibliography, https://dblp.org}
}

@inproceedings{karp1981maximum,
  title={Maximum matching in sparse random graphs},
  author={Karp, Richard M and Sipser, Michael},
  booktitle={22nd Annual Symposium on Foundations of Computer Science (sfcs 1981)},
  pages={364--375},
  year={1981},
  organization={IEEE}
}

@inbook{chen_2024,
author = {Ziyun Chen and Zhiyi Huang and Dongchen Li and Zhihao Gavin Tang},
title = {Prophet Secretary and Matching: the Significance of the Largest Item},
booktitle = {Proceedings of the 2025 Annual ACM-SIAM Symposium on Discrete Algorithms (SODA)},
chapter = {},
year = {2025},
pages = {1371-1401},
doi = {10.1137/1.9781611978322.42},
URL = {https://epubs.siam.org/doi/abs/10.1137/1.9781611978322.42},
eprint = {https://epubs.siam.org/doi/pdf/10.1137/1.9781611978322.42}
}

@book{jens2012,
author = {Korte, Bernhard and Vygen, Jens},
title = {Combinatorial Optimization: Theory and Algorithms},
year = {2012},
isbn = {3642244874},
publisher = {Springer Publishing Company, Incorporated},
edition = {5th}
}

@misc{segev2025efficient,
      title={Efficient Approximation Schemes for Stochastic Probing and Selection-Stopping Problems}, 
      author={Danny Segev and Sahil Singla},
      year={2025},
      eprint={2007.13121},
      archivePrefix={arXiv},
      primaryClass={cs.DS},
      url={https://arxiv.org/abs/2007.13121}, 
}

@inproceedings{segev2021,
author = {Segev, Danny and Singla, Sahil},
title = {Efficient Approximation Schemes for Stochastic Probing and Prophet Problems},
year = {2021},
isbn = {9781450385541},
publisher = {Association for Computing Machinery},
address = {New York, NY, USA},
url = {https://doi.org/10.1145/3465456.3467614},
doi = {10.1145/3465456.3467614},
booktitle = {Proceedings of the 22nd ACM Conference on Economics and Computation},
pages = {793–794},
numpages = {2},
location = {Budapest, Hungary},
series = {EC '21}
}

@article{Feige2006,
author = {Feige, Uriel},
title = {On Sums of Independent Random Variables with Unbounded Variance and Estimating the Average Degree in a Graph},
journal = {SIAM Journal on Computing},
volume = {35},
number = {4},
pages = {964-984},
year = {2006},
doi = {10.1137/S0097539704447304},

URL = { 
    
        https://doi.org/10.1137/S0097539704447304
    
    

},
eprint = { 
    
        https://doi.org/10.1137/S0097539704447304
    
    

}
}

@misc{paulin2017conjecturessamuelsfeige,
      title={On some conjectures of Samuels and Feige}, 
      author={Roland Paulin},
      year={2017},
      eprint={1703.05152},
      archivePrefix={arXiv},
      primaryClass={math.PR},
      url={https://arxiv.org/abs/1703.05152}, 
}

@article{Bennett,
author = {George Bennett},
title = {Probability Inequalities for the Sum of Independent Random Variables},
journal = {Journal of the American Statistical Association},
volume = {57},
number = {297},
pages = {33--45},
year = {1962},
publisher = {ASA Website},
doi = {10.1080/01621459.1962.10482149},


URL = { 
    
    
        https://www.tandfonline.com/doi/abs/10.1080/01621459.1962.10482149
    

},
eprint = { 
    
    
        https://www.tandfonline.com/doi/pdf/10.1080/01621459.1962.10482149
    

}

}

@article{Samuels1966,
 ISSN = {00034851},
 URL = {http://www.jstor.org/stable/2238704},
 author = {S. M. Samuels},
 journal = {The Annals of Mathematical Statistics},
 number = {1},
 pages = {248--259},
 publisher = {Institute of Mathematical Statistics},
 title = {On a Chebyshev-Type Inequality for Sums of Independent Random Variables},
 urldate = {2025-06-11},
 volume = {37},
 year = {1966}
}

@article{Samuels1969,
  title={The Markov inequality for sums of independent random variables},
  author={Samuels, S. M.},
  journal={The Annals of Mathematical Statistics},
  volume={40},
  number={6},
  pages={1980--1984},
  year={1969},
  publisher={JSTOR}
}

@InProceedings{Lee2018,
  author =	{Euiwoong Lee and Sahil Singla},
  title =	{{Optimal Online Contention Resolution Schemes via Ex-Ante Prophet Inequalities}},
  booktitle =	{26th Annual European Symposium on Algorithms (ESA 2018)},
  pages =	{57:1--57:14},
  series =	{Leibniz International Proceedings in Informatics (LIPIcs)},
  ISBN =	{978-3-95977-081-1},
  ISSN =	{1868-8969},
  year =	{2018},
  volume =	{112},
  editor =	{Yossi Azar and Hannah Bast and Grzegorz Herman},
  publisher =	{Schloss Dagstuhl--Leibniz-Zentrum fuer Informatik},
  address =	{Dagstuhl, Germany},
  URL =		{http://drops.dagstuhl.de/opus/volltexte/2018/9520},
  URN =		{urn:nbn:de:0030-drops-95208},
  doi =		{10.4230/LIPIcs.ESA.2018.57}
}

@article{borodin2025online,
  title={Online bipartite matching in the probe-commit model},
  author={Borodin, Allan and MacRury, Calum},
  journal={Mathematical Programming},
  pages={1--54},
  year={2025},
  publisher={Springer}
}

@inproceedings{agrawal2020,
author = {Agrawal, Shipra and Sethuraman, Jay and Zhang, Xingyu},
title = {On Optimal Ordering in the Optimal Stopping Problem},
year = {2020},
isbn = {9781450379755},
publisher = {Association for Computing Machinery},
address = {New York, NY, USA},
url = {https://doi.org/10.1145/3391403.3399484},
doi = {10.1145/3391403.3399484},
booktitle = {Proceedings of the 21st ACM Conference on Economics and Computation},
pages = {187–188},
numpages = {2},
location = {Virtual Event, Hungary},
series = {EC '20}
}

@inproceedings{Gamlath2019,
author = {Gamlath, Buddhima and Kale, Sagar and Svensson, Ola},
title = {Beating Greedy for Stochastic Bipartite Matching},
year = {2019},
publisher = {Society for Industrial and Applied Mathematics},
address = {USA},
booktitle = {Proceedings of the Thirtieth Annual ACM-SIAM Symposium on Discrete Algorithms},
pages = {2841–2854},
numpages = {14},
location = {San Diego, California},
series = {SODA ’19}
}

@InProceedings{costello2012matching,
author="Costello, Kevin P.
and Tetali, Prasad
and Tripathi, Pushkar",
editor="Czumaj, Artur
and Mehlhorn, Kurt
and Pitts, Andrew
and Wattenhofer, Roger",
title="Stochastic Matching with Commitment",
booktitle="Automata, Languages, and Programming",
year="2012",
publisher="Springer Berlin Heidelberg",
address="Berlin, Heidelberg",
pages="822--833",
isbn="978-3-642-31594-7"
}

@article{GandhiGKSP06,
author = {Gandhi, Rajiv and Khuller, Samir and Parthasarathy, Srinivasan and Srinivasan, Aravind},
title = {Dependent Rounding and Its Applications to Approximation Algorithms},
year = {2006},
issue_date = {May 2006},
publisher = {Association for Computing Machinery},
address = {New York, NY, USA},
volume = {53},
number = {3},
issn = {0004-5411},
url = {https://doi.org/10.1145/1147954.1147956},
doi = {10.1145/1147954.1147956},
journal = {J. ACM},
month = may,
pages = {324–360},
numpages = {37}
}

@Article{BavejaBCNSX18,
author="Baveja, Alok
and Chavan, Amit
and Nikiforov, Andrei
and Srinivasan, Aravind
and Xu, Pan",
title="Improved Bounds in Stochastic Matching and Optimization",
journal="Algorithmica",
year="2018",
month="Nov",
day="01",
volume="80",
number="11",
pages="3225--3252",
issn="1432-0541",
doi="10.1007/s00453-017-0383-4",
url="https://doi.org/10.1007/s00453-017-0383-4"
}

@book{Peari1992,
  title={Convex Functions, Partial Orderings, and Statistical Applications},
  author={Peajcariaac, J.E. and Tong, Y.L.},
  isbn={9780080925226},
  lccn={91034153},
  series={Mathematics in Science and Engineering},
  url={https://books.google.com/books?id=rCAOFpic7AkC},
  year={1992},
  publisher={Elsevier Science}
}

@article{BrubachSSX20,
  author    = {Brian Brubach and
               Karthik Abinav Sankararaman and
               Aravind Srinivasan and
               Pan Xu},
  title     = {Attenuate Locally, Win Globally: Attenuation-Based Frameworks for
               Online Stochastic Matching with Timeouts},
  journal   = {Algorithmica},
  volume    = {82},
  number    = {1},
  pages     = {64--87},
  year      = {2020},
  url       = {https://doi.org/10.1007/s00453-019-00603-7},
  doi       = {10.1007/s00453-019-00603-7},
  bibsource = {dblp computer science bibliography, https://dblp.org}
}

@article{Jansen03,
  author       = {Klaus Jansen},
  title        = {Approximate strong separation with application in fractional graph
                  coloring and preemptive scheduling},
  journal      = {Theor. Comput. Sci.},
  volume       = {302},
  number       = {1-3},
  pages        = {239--256},
  year         = {2003},
  url          = {https://doi.org/10.1016/S0304-3975(02)00829-0},
  doi          = {10.1016/S0304-3975(02)00829-0},
  bibsource    = {dblp computer science bibliography, https://dblp.org}
}

@article{BansalGLMNR12,
  author    = {Nikhil Bansal and
               Anupam Gupta and
               Jian Li and
               Juli{\'{a}}n Mestre and
               Viswanath Nagarajan and
               Atri Rudra},
  title     = {When {LP} Is the Cure for Your Matching Woes: Improved Bounds for
               Stochastic Matchings},
  journal   = {Algorithmica},
  volume    = {63},
  number    = {4},
  pages     = {733--762},
  year      = {2012},
  url       = {https://doi.org/10.1007/s00453-011-9511-8},
  doi       = {10.1007/s00453-011-9511-8},
  bibsource = {dblp computer science bibliography, https://dblp.org}
}

@article{Adamczyk11,
  author    = {Marek Adamczyk},
  title     = {Improved analysis of the greedy algorithm for stochastic matching},
  journal   = {Inf. Process. Lett.},
  volume    = {111},
  number    = {15},
  pages     = {731--737},
  year      = {2011},
}

@inproceedings{Chen,
 author = {Chen, Ning and Immorlica, Nicole and Karlin, Anna R. and Mahdian, Mohammad and Rudra, Atri},
 title = {Approximating Matches Made in Heaven},
 booktitle = {Proceedings of the 36th International Colloquium on Automata, Languages and Programming: Part I},
 series = {ICALP '09},
 year = {2009},
 isbn = {978-3-642-02926-4},
 location = {Rhodes, Greece},
 pages = {266--278},
 numpages = {13},
}

@inproceedings{Adamczyk15,
  author    = {Marek Adamczyk and
               Fabrizio Grandoni and
               Joydeep Mukherjee},
  editor    = {Nikhil Bansal and
               Irene Finocchi},
  title     = {Improved Approximation Algorithms for Stochastic Matching},
  booktitle = {Algorithms - {ESA} 2015 - 23rd Annual European Symposium, Patras,
               Greece, September 14-16, 2015, Proceedings},
  series    = {Lecture Notes in Computer Science},
  volume    = {9294},
  pages     = {1--12},
  publisher = {Springer},
  year      = {2015},
}
\appendix

\section{Proof of \Cref{thm:LP_relaxation} from \Cref{sec:config_LP}} \label{pf:thm:LP_relaxation}

    We introduce a relaxed action-reward problem where: (1) vertices in $U$ must be matched at most once in expectation, rather than at most once with probability 1, and (2) each vertex $u \in U$ has an expected patience constraint of at most $\ell_u$ queries, rather than a hard limit of $\ell_u$ queries.

Let $\overline{\text{OPT}}(G)$ be the optimal value of this relaxed problem. Since this is a relaxation of the original constraints, we have $\text{OPT}(G) \leq \overline{\text{OPT}}(G)$.
Therefore, to complete the proof, we will show that $\overline{\text{OPT}}(G) \leq \text{LP}(G)$.

Let $\textsc{OptAlg}$ be an optimal policy for the relaxed problem. We construct a \textit{vertex-iterative} policy $\textsc{Alg}$ for the relaxed problem that achieves the same expected reward as $\textsc{OptAlg}$. In other words, $\textsc{Alg}$ will process vertices in $V$ one by one in an arbitrary order, dealing completely with all edges incident to a vertex before moving to the next.

For each vertex $v \in V$, $\textsc{Alg}$ runs a simulation of $\textsc{OptAlg}$ where for any edge $f \notin \partial(v)$ and action $a \in \mathcal{A}$, it draws a simulated Bernoulli bit $\til{Q}_f(a) \sim \text{Ber}(q_f(a))$ and gives it to $\textsc{OptAlg}$ as the real query outcome.
Based on this simulation, $\textsc{Alg}$ determines the adaptive ordering of edges and specific actions in $\partial(v)$ that $\textsc{OptAlg}$ would use, without actually querying edges incident to other vertices. It then follows this induced adaptive policy when processing $v$. After processing $v$, $\textsc{Alg}$ discards all simulated information and moves to the next vertex. 

Let $p_v(\bm{e},\bm{a})$ be the probability that $\textsc{OptAlg}$ uses the sequence-action pair $(\bm{e},\bm{a}) \in \scr{C}_v$ when processing vertex $v$. Since $\textsc{Alg}$ faithfully simulates the behavior of $\textsc{OptAlg}$ with respect to the marginal distributions of the outcomes, it induces the same distribution over sequence-action pairs for each vertex $v \in V$. Therefore, the distributions over matched and queried edges are identical between $\textsc{Alg}$ and $\textsc{OptAlg}$.

We now verify that $\textsc{Alg}$ satisfies the relaxed constraints, and is an optimal policy
for the relaxed problem:

\begin{itemize}
    \item \textit{Matching constraint:} For each $u \in U$, the probability that $u$ is matched is the sum of the probabilities that each incident edge $(u,v)$ is selected. Since $\textsc{Alg}$ mirrors $\textsc{OptAlg}$, this expected total is at most 1.
    
    \item \textit{Query budget constraint:} For each vertex $u \in U$, the expected number of queries to edges in $\partial(u)$ equals that of $\textsc{OptAlg}$ and is at most $\ell_u$ by assumption.

    \item \textit{Objective value:} For any edge $e = (u,v)$ and action $a$, the probability it is queried with action $a$ depends only on the sequence-action pair chosen for vertex $v$. Since these distributions are identical between the algorithms, the expected reward of $\textsc{Alg}$ equals that of $\textsc{OptAlg}$.
    By definition, $\textsc{OptAlg}$ is an optimal policy for the relaxed problem, so its expected objective value is $\overline{\text{OPT}}(G)$. Therefore, the expected objective value of $\textsc{Alg}$ is also $\overline{\text{OPT}}(G)$.
\end{itemize}

Finally, we create a solution to \ref{LP:social_welfare}. Let $z_v(\bm{e},\bm{a}) := p_v(\bm{e},\bm{a})$ for all $v \in V$ and $(\bm{e},\bm{a}) \in \scr{C}_v$. We verify that this solution satisfies all LP constraints:

\begin{itemize}
\item Constraint \eqref{eqn:sw_distribution} requires $\sum_{(\bm{e},\bm{a}) \in \scr{C}_v} z_v(\bm{e},\bm{a}) \leq 1$ for each $v \in V$. Since $\sum_{(\bm{e},\bm{a}) \in \scr{C}_v} z_v(\bm{e},\bm{a}) = \sum_{(\bm{e},\bm{a}) \in \scr{C}_v} p_v(\bm{e},\bm{a}) \le 1$ (as $\textsc{Alg}$ must choose some policy for each vertex, or query none of its edges), this constraint is satisfied.

\item Constraint (\ref{eqn:sw_matching}) requires $\sum_{v \in V} \sum_{(\bm{e},\bm{a}) \in \scr{C}_v: e_i=(u,v)} q_{u,v}(a_i) \cdot \prod_{j<i} (1-q_{e_j}(a_j)) \cdot z_v(\bm{e},\bm{a}) \leq 1$ for all $u \in U$. The left side equals the expected
number of edges assigned to $u$ by $\textsc{Alg}$, which is at most $1$ in the relaxed problem.

\item Constraint (\ref{eqn:sw_offline_patience}) requires $\sum_{v \in V} \sum_{(\bm{e},\bm{a}) \in \scr{C}_v: e_i=(u,v)} \prod_{j<i} (1-q_{e_j}(a_j)) \cdot z_v(\bm{e},\bm{a}) \leq \ell_u$ for all $u \in U$. The left side equals the expected number of queries to edges incident to $u$ by $\textsc{Alg}$, which is at most $\ell_u$ by the relaxed patience constraint.
\end{itemize}
Given that our solution satisfies all constraints of \ref{LP:social_welfare} and its objective value equals $\overline{\text{OPT}}(G)$, we have $\overline{\text{OPT}}(G) \leq \text{LP}(G)$. Combining with our earlier inequality, we conclude that $\text{OPT}(G) \leq \text{LP}(G)$.
\qed

\section{Details of \Cref{sec:LP_round}} \label{app:LP_round}

\subsection{Proof of \Cref{lem:edge_variable}} \label{pf:lem:edge_variable}

First observe that for any edge $e \in E$,
$\sum_{a \in \scr{A}} Z_{e}(a) \le 1$, as \Cref{alg:social_welfare_relaxed} queries $e$ via
at most one action. Thus, since $\mb{E}[Z_{e}(a)] = \til{z}_e(a)$ by definition (see \eqref{eqn:edge_variable}),
$$
    \sum_{a \in \scr{A}} \til{z}_{e}(a) = \sum_{a \in \scr{A}} \mb{E}[Z_{e}(a)]  \le 1,
$$
which verifies the first property.

Now, if we fix $u \in U$, then we write the expected number of $v \in V$
with $\sum_{a \in \scr{A}} Z_{u,v}(a) =1$ as:
$$
\sum_{v \in V} \sum_{\ac \in \scr{A}} \til{z}_{u,v}(\ac) = \sum_{v \in V} \sum_{a \in \scr{A}} \mb{P}[Z_{u,v}(a) =1] = \sum_{v \in V} \sum_{\substack{i \in [\ell_v],  (\bm{e}, \bm{\ac}) \in \scr{C}_v: \\ e_i = (u,v)}} 
	 \prod_{j < i} (1 - q_{e_j}(\ac_j)) \cdot z_v( \bm{e}, \bm{\ac}).
$$
More, we can relate the expected number of edges $(u,v) \in \partial(u)$ which are included in $\scr{N}$:
$$
\sum_{v \in V} \sum_{\ac \in \scr{A}} q_{u,v}(\ac) \til{z}_{u,v}(\ac) = \sum_{v \in V} \mb{P}[(u,v) \in \scr{N}] = \sum_{v \in V} \sum_{\substack{i \in [\ell_v],  (\bm{e}, \bm{\ac}) \in \scr{C}_v: \\ e_i = (u,v)}} 
	q_{u,v}(\ac_i) \cdot \prod_{j < i} (1 - q_{e_j}(\ac_j)) \cdot z_v( \bm{e}, \bm{\ac}).
$$
By applying constraints \eqref{eqn:sw_offline_patience} and \eqref{eqn:sw_matching} of \ref{LP:social_welfare}, the remaining properties hold, and so the lemma follows.
\qed

\subsection{Proof of \Cref{lem:fixed_query_vertex}} \label{pf:lem:fixed_query_vertex}
As mentioned following \Cref{alg:social_welfare}, for any $u \in U$, $\psi_u$ is passed the input $$(\scr{A}, \partial(u), (q_{u,v}(a))_{v \in V, a \in \scr{A}}, (\til{z}_{u,v}(a))_{v\in V, a \in \scr{A}}),$$ which satisfies the required inequalities in \Cref{def:patience_rcrs}, due to \Cref{lem:edge_variable}.
We now verify that random variables used in line \ref{eqn:coupling}. are drawn via $(q_{u,v}(a))_{v \in V, a \in \scr{A}}$ and $(\til{z}_{u,v}(a))_{v\in V, a \in \scr{A}}$ in the required way as specified in \Cref{def:patience_rcrs}.

Clearly, the random variables $(Q_{u,v}(a))_{a \in \scr{A}, v \in V}$ are independent and have the correct marginals. On the other hand, notice that since
each $v \in V$ draws $(\bm{e}, \bm{\ac}) \in \scr{C}_v$ independently, $((\til{Z}_{u,v}(a))_{a  \in \scr{A}})_{v \in V}$ are independent across $v \in V$. Moreover, $\sum_{a \in \scr{A}} \til{Z}_{u,v}(a) \le 1$
for each $v \in V$, as $(u,v)$ is queried by at most once action.
We next argue that $(\til{Z}_{e}(a))_{e \in E, a \in \scr{A}}$ have the correct marginals,
i.e., for each $e \in E$ and $a \in \scr{A}$, $\mb{P}[\til{Z}_e(a) = 1] = \til{z}_e(a)$,
which will ensure \Cref{alg:social_welfare} is well-defined, and complete the proof.

Let us fix first fix $v \in V$, and define $\scr{P}_{v} \in \scr{C}_{v}$ to be the random policy drawn
by $v$ in \Cref{alg:social_welfare}. More, for each $e\in \partial(v)$ and $a \in \scr{A}$, set $F_{e}(a) :=1$ if and only if $e$ is queried via $a$ by \Cref{alg:social_welfare}.
Define the vector $\bm{F}^{(v)} := (F_{e}(a))_{a \in \scr{A}, e \in \partial(v)}$ for convenience.

Now, consider a fixed $u^* \in U$ and $a^* \in \scr{A}$, as well as $(\bm{e}, \bm{a}) \in \scr{C}_{v}$ (a possible instantiation of $\scr{P}_v$).
We refer to $(\bm{e}, \bm{a})$ as \textit{permissible} (for $(u^*,v)$ and $a^*$),
provided  $e_i = (u^*,v)$ and $a_i = a^*$ for some  $1 \le i \le |\bm{e}|$.
First observe that if $(\bm{e}, \bm{a})$ is \textit{not} permissible, then
\begin{equation}
        \mb{P}[\til{Z}_{u^*,v}(a^*) = 1 \mid \scr{P}_v = (\bm{e}, \bm{a})] = 0
\end{equation}
Thus, shall compute $\mb{P}[\til{Z}_{u^*,v}(a^*) = 1 \mid \scr{P}_v = (\bm{e}, \bm{a})]$ 
for each $(\bm{e}, \bm{a})$ which is permissible. Let us assume that $e_i = (u^*,v)$ and 
$a_i = a^*$ for some $1 \le i \le |\bm{e}|$. 
Finally, we consider an arbitrary vector $\bm{f}^{(v)} = (f_{e}(a))_{a \in \scr{A}, e \in \partial(v)}$, where $f_{e}(a) \in \{0,1\}$. 

Suppose now that we condition on $\scr{P}_v = (\bm{e}, \bm{a})$, and $\bm{F}^{(v)} = \bm{f}^{(v)}$. Observe then 
that $(u^*,v)$ is queried via $a^*$ if and only if $Q_{e_j}(a_j) = 0$ for each $j < i$ with $f_{e_j}(a_j) =1$ (i.e., all ``real'' queries fail) and $\til{Q}_{e_{j'}}(a_{j'}) = 0$ for all $j' < i$ with $f_{e_{j'}}(a_{j'}) =0$ (i.e., all ``fake'' queries fail).
Thus,  we can write $\mb{E}[\til{Z}_{u^*,v}(a^*) \mid \scr{P}_v = (\bm{e}, \bm{a}), \bm{F}^{(v)} = \bm{f}^{(v)}]$ as:
\begin{align}
    &\mb{E}\left [ \prod_{j < i: f_{e_j}(a_j) =1} (1-Q_{e_j}(a_j)) \prod_{j' < i: f_{e_{j'}}(a_{j'}) =0} (1-\til{Q}_{e_{j'}}(a_{j'}))  \mid  \scr{P}_v = (\bm{e}, \bm{a}), \bm{F}^{(v)} = \bm{f}^{(v)} \right]. \label{eqn:big_conditional_prob}          
\end{align}
% since conditional on  $\scr{P}_v = (\bm{e}, \bm{a})$ and  $\bm{F}^{(v)} = \bm{f}^{(v)}$, 
On the other hand, we can further simplify \eqref{eqn:big_conditional_prob} as:
\begin{align*}
    & \prod_{\substack{j < i: \\ f_{e_j}(a_j) =1}} \mb{E}[1-Q_{e_{j'}}(a_j) \mid \scr{P}_v = (\bm{e}, \bm{a}), \bm{F}^{(v)} = \bm{f}^{(v)}] \prod_{\substack{j' < i: \\ f_{e_{j'}}(a_{j'}) =0}} \mb{E}[1-\til{Q}_{e_{j'}}(a_{j'}) \mid \scr{P}_v = (\bm{e}, \bm{a}), \bm{F}^{(v)} = \bm{f}^{(v)}] \\
    &= \prod_{j < i}(1 - q_{e_j}(a_j)).
\end{align*}
Here we've used that \Cref{alg:social_welfare} decides whether or not to query any $e \in \partial(v)$ via $a$, prior to revealing
$Q_{e}(a)$ or $\til{Q}_e(a)$. The last equality uses this fact once more,
as well as that $Q_e(a)$ and $\til{Q}_e(a)$ are distributed identically. After averaging over the instantiations of $\bm{F}^{(v)}$, we get that
$$
\mb{P}[\til{Z}_{u^*,v}(a^*) = 1 \mid \scr{P}_v = (\bm{e}, \bm{a})] = \prod_{j < i}(1 - q_{e_j}(a_j)).
$$
Finally, after averaging over all permissible $(\bm{e}, \bm{a})$ for $(u^*,v)$ and $a^*$,
we get that
$$
\mb{P}[\til{Z}_{u^*,v}(a^*) = 1] =  \sum_{\substack{i \in [\ell_v], (\bm{e},\bm{\ac}) \in \scr{C}_v: \\ e_i = (u^*,v), \ac_i =\ac^*}} \prod_{j < i} (1 - q_{e_j}(\ac_j))\cdot z_{v}( \bm{e},\bm{\ac}) =\til{z}_{u^*,v}(a^*), 
$$
where the final equality follows by \eqref{eqn:edge_variable}.
\qed

\subsection{Proof of \Cref{lem:trs_to_prcrs}} \label{pf:lem:trs_to_prcrs}
Suppose that $(\scr{A}, n,\ell, (p_i(a))_{a \in \scr{A}, i \in [n]}, (x_i(a))_{a \in \scr{A}, i \in [n]})$ is a P-RCRS input for an arbitrary action set $\scr{A}$. We first construct a P-RCRS input 
$(n, \ell, (p_i)_{i=1}^n, (x_i)_{i=1}^n)$ on the trivial action set, where for each $i \in [n]$
\begin{equation}
    \text{$x_i := \sum_{a \in \scr{A}} x_i(a)$ and $p_i := \frac{\sum_{a \in \scr{A}} p_{i}(a) x_{i}(a)}{\sum_{a \in \scr{A}} x_i(a)}$.}
\end{equation}
First observe that $(n, \ell, (x_i)_{i=1}^n, (p_i)_{i=1}^n)$ is indeed a valid P-RCRS input,
as
\begin{equation}
    \text{$\sum_{i=1}^n x_i = \sum_{i=1}^n \sum_{a \in \scr{A}} x_i(a) \le \ell$, $\sum_{i=1}^n p_i x_i = \sum_{i=1}^n \sum_{a \in \scr{A}} p_{i}(a) x_{i}(a) \le 1$, and $x_i = \sum_{a \in \scr{A}} x_i(a) \le 1$},  
\end{equation}
where the inequalities follow since $(\scr{A}, n,\ell, (p_i(a))_{a \in \scr{A}, i \in [n]}, (x_i(a))_{a \in \scr{A}, i \in [n]})$ is a P-RCRS input. 

Now suppose
that $\psi$ is a P-RCRS which is $\alpha$-selectable on $(n, \ell, (p_i)_{i=1}^n, (x_i)_{i=1}^n)$. We shall
use it to design a P-RCRS $\psi'$ for $(\scr{A}, n,\ell, (p_i(a))_{a \in \scr{A}, i \in [n]}, (x_i(a))_{a \in \scr{A}, i \in [n]})$. Let us first assume
that $(P_i(a))_{a \in \scr{A}, i \in [n]}$ and $(X_i(a))_{a \in \scr{A}, i \in [n]}$
are the states and suggestions for the latter input, as draw in \Cref{def:patience_rcrs}.

We now define the random variables $(P_i,X_i)_{i=1}^n$ with the following properties:
\begin{enumerate}
    \item $(P_i,X_i)_{i \in [n]}$ are independent, \label{eqn:independent_coupling}
    \item  $X_i = \sum_{a \in \scr{A}} X_{i}(a)$ and $P_i X_i = \sum_{a \in \scr{A}} P_{i}(a) X_{i}(a)$ for all $i \in [n]$. \label{eqn:coupling_of_variables}
\end{enumerate}
We shall use these as the states and suggestions for the execution of $\psi$ on $(n, \ell, (x_i)_{i=1}^n, (p_i)_{i=1}^n)$.
To see that these properties are attainable, draw $\til{P}_i \sim \Ber(p_i)$ independently for each $i \in [n]$. Then, we define $P_i$ in the following way:
\begin{align*}
P_i= \begin{cases}
1 & \text{if $\sum_{a \in \scr{A}} P_{i}(a) X_{i}(a) =1$ and $\sum_{a \in \scr{A}} X_{i}(a) =1$} \\
0 & \text{if $\sum_{a \in \scr{A}} P_{i}(a) X_{i}(a) = 0$ and $\sum_{a \in \scr{A}} X_{i}(a) =1$}, \\
\til{P}_i & \text{otherwise.}
\end{cases}
\end{align*}
It is easy to see that this construction satisfies the above properties. More, 
if $\pi$ is the arrival order of $[n]$, then $(P_i,X_i)_{i \in [n]}$
can be constructed in an online fashion with respect to $\pi$ as a P-RCRS executes. Specifically,
when $i \in [n]$ arrives and $(X_i(a))_{a \in \scr{A}}$ are revealed, we set $X_i = \sum_{a \in \scr{A}} X_{i}(a)$.
More, if $\sum_{a \in \scr{A}} X_{i}(a) = 0$, then we can clearly drawn $\til{P}_i$ and set $P_i = \til{P}_i$.
On the other hand, if $\sum_{a \in \scr{A}} X_{i}(a) = 1$
and we query $i$ via some action $a$, then we can set $P_i \in \{0,1\}$, depending on if $Q_{i}(a) = 0$
or $Q_{i}(a) =1$. 
If we do \textit{not} query $i$ via any action, then since we will never output
$(i,a)$ for any action $a \in \scr{A}$, we can still reveal $(Q_{i}(a))_{a \in \scr{A}}$ and assign $P_i$ in this way.

Our P-RCRS $\psi'$ for $(\scr{A}, n,\ell, (p_i(a))_{a \in \scr{A}, i \in [n]}, (x_i(a))_{a \in \scr{A}, i \in [n]})$ executes $\psi$ on $(n, \ell, \bm{x}, \bm{p})$ in the same order $\pi$, using the coupled random variables $(P_i,X_i)_{i=1}^n$ 
to make its decisions. Specifically, when processing $i \in [n]$, it learns
$(X_i(a))_{a \in \scr{A}}$. If $X_i(a) =1$ for some $a \in \scr{A}$, then $X_i = 1$,
and it asks $\psi$ whether to make a query. If `yes', then it queries $i$ via $a$,
learns $P_i(a)$, and outputs $(i,a)$ if $P_i(a) =1$. If `no', then it still updates
$P_i$ as previously described, and moves to the next element with respect to $\pi$.
If $\sum_{a \in \scr{A}} X_i(a) = 0$, then $X_i = 0$, and it sets $P_i = \til{P}_i$,
and moves to the next element.

Now, $\psi$ is $\alpha$-selectable by assumption, so for each $i \in [n]$,
$$
\Pr[\text{$i$ is queried by $\psi$} \mid X_i =1] \ge \alpha
$$
On the other hand, the decisions of $\psi'$ are completely determined
by $(P_i,X_i)_{i=1}^n$ and do not actually depend on the particular instantiations of
$(P_i(a))_{a \in \scr{A}, i \in [n]}$ and $(X_i(a))_{a \in \scr{A}, i \in [n]}$. Thus, for each $\ac  \in \scr{A}$,
$$
\Pr[\text{$i$ is queried via $a$ by $\psi'$} \mid X_i(a) = 1] = \Pr[\text{$i$ is queried by $\psi$} \mid  X_i =1].
$$
By combining the above equations, we have show that $\psi'$ is $\alpha$-selectable on
the required input. The lemma is thus proven.
\qed

\subsection{Proof of \Cref{lem:p_values}} \label{pf:lem:p_values}
We provide a sketch of the argument indicating
how the main details proceed as in Lemma $2$ of \cite{brubach2024offline}.
We focus on outlining why we may assume without loss that $p_1 =1$ and $p_j \in \{0,1\}$ for all $j \in [n] \setminus \{1\}$ when lower bounding \eqref{eqn:original_output_probability}. We omit the explanation for why we may assume $\sum_{i \in [n]} x_i = \ell$, as this can be attained
at the very end, where we add additional elements $j$ with $p_j= 0$ and $x_j =1$, and apply a simple coupling argument.

Begin with the arbitrary input $\scr{I}$, and an arbitrary element of $i \in [n] \setminus \{1\}$ with $0 < p_i < 1$, which we denote by $i=n$ for simplicity. We construct a new input $\til{\scr{I}} = (n+1, \ell, (\til{x}_i)_{i=1}^{n+1}, (\til{p}_i)_{i=1}^{n+1})$ with $\til{p}_1 = \til{p}_n =1$,
and an additional element $n+1$ with $\til{p}_{n+1} =0$. Specifically, 
\begin{enumerate}
        \item $(\til{p}_1, \til{x}_1) = (1 , p_1 x_1)$.
    \item  $(\til{p}_i,\til{x}_i) = (p_i, x_i)$ for $i =2, \ldots ,n-1$.
    \item $(\til{p}_n, \til{x}_n) = (1, p_n x_n)$ and $(\til{p}_{n+1}, \til{x}_{n+1}) = (0, (1- p_n) x_n)$  
\end{enumerate}
Observe first that since $\scr{I}$ is a feasible input, $\sum_{i=1}^{n+1} \til{p}_i \til{x}_i = p_1 x_1 + p_n x_n + \sum_{i=2}^{n-1} p_i x_i \le 1$, and $\sum_{i=1}^{n+1} \til{x}_i = p_1 x_1 + \sum_{i=2}^{n-1} x_i + p_n x_n + (1- p_n) x_n \le \sum_{i=1}^n x_i \le \ell$,
so $\scr{\til{I}}$ is a feasible input.
Let us thus define the random variables $(\til{B}_i)_{i \in [n+1]}$, $(\til{X}_i)_{i \in [n+1]}$,
and $(\til{Y}_{i})_{i \in [n+1]}$ for $\scr{\til{I}}$ in analogy to $(B_i)_{i \in [n]}$, $(X_i)_{i \in [n]}$,
and $(Y_i)_{i \in [n]}$ for $\scr{I}$ (here $B_i \sim \Ber(b(p_i x_i))$ is the attenuation
bit for element $i$ used by \Cref{alg:P-RCRS}). Our goal is to show that for all $y \in [0,1]$,
\begin{equation} \label{eqn:p_values_main_goal}
    \mb{P}[\text{$1$ of $\scr{I}$ is queried} \mid Y_1 = y, X_1 =1] \ge 
    \mb{P}[\text{$1$ of $\scr{\til{I}}$ is queried} \mid \til{Y}_1 = y, \til{X}_1 =1]
\end{equation}
If we can establish \eqref{eqn:p_values_main_goal}, then we can iterate the argument a finite number of times until we're left with an input $\scr{\til{I}}$ with $\til{p}_1=1$,
and where each other element $i$ has $\til{p}_i \in \{0,1\}$. Thus, for the remainder of the proof,
we focus on explaining how \eqref{eqn:p_values_main_goal} relates to Lemma 2 of \cite{brubach2024offline}.

Denote $\safe_{1}(\scr{I})$ as the event that
$1$ is safe during the execution on input $\scr{I}$,
and recall that $\mb{P}[\text{$1$ of $\scr{I}$ is queried} \mid Y_1 = y, X_1 =1]   = b(p_1 x_1) \mb{P}[\safe_{1}(\scr{I}) \mid Y_1 =y]$. Now, $p_1 x_1 = \til{x}_1$, and so $b(p_1 x_1) = b(\til{x}_1)$. Thus, in order to establish \eqref{eqn:p_values_main_goal}, 
it suffices to prove that 
\begin{equation*} 
    \mb{P}[ \safe_{1}(\scr{I}) \mid Y_1 = y, X_1 =1] \ge \mb{P}[\safe_{1}(\til{\scr{I}}) \mid \til{Y}_1 = y, \til{X}_1 =1]. 
\end{equation*}
This is exactly the statement proven in Lemma $2$ of \cite{brubach2024offline},
and so we defer the remaining details of the argument.\qed

\subsection{Proof of Lemma \ref{lem:splitting_argument}} \label{pf:lem:splitting_argument}
Given the input $\scr{I} = (n,\ell, (x_i)_{i=}^n, (p_i)_{i=1}^n)$, we first construct a new input 
$\scr{\til{I}} =(n +1, \ell, (\til{x}_i)_{i=1}^{n+1}, (\til{p}_i)_{i=1}^{n+1})$, with an additional
element $n +1$, and which modifies $\scr{I}$ in the following way:
\begin{enumerate}
    \item $\til{p}_{n} = 1$ and $\til{x}_{n} := x_n/2$,
    \item $\til{p}_{n+1} = 1$ and $\til{x}_{n+1} = x_n/2$,
    \item $(\til{x}_i, \til{p}_i) = (x_i,p_i)$ for all $i \in [n -1]$.
\end{enumerate}
Clearly, $\scr{\til{I}}$ is a feasible input,
which satisfies the conditions of \Cref{lem:p_values}. I.e., $\til{p}_j \in \{0,1\}$ for all $j \in [n+1]$, $\til{p_1} = 1$ and $\sum_{j \in [n+1]} \til{x}_j = \ell$. For each $i \in [n+1]$,
we draw $\til{X}_i \sim \Ber(\til{x}_i)$ independently. 
Our goal is to argue that
\begin{equation} \label{eqn:splitting_goal}
     \mb{P}[\text{$1$ from $\scr{I}$ is queried} \mid X_1 =1] - \mb{P}[\text{$1$ from $\scr{\til{I}}$ is queried} \mid \til{X}_1 =1] \ge 0,
\end{equation}
as this will complete the proof of \Cref{lem:splitting_argument}.
Now, recall that \eqref{eqn:simplified_output_probability} ensures
\begin{equation} \label{eqn:simplified_output_probability_restated} 
    \mb{P}[\text{$1$ from $\scr{I}$ is queried} \mid X_1 =1] = b(x_1) \int_{0}^{1}\mb{P}[L_0 \le \ell -1 \mid Y_1 =y] \prod_{j \in N_1} (1 - y b(x_j) x_j) dy.
\end{equation}
Thus, if we define $\til{N}_0$, $\til{N}_1$, $\til{L}_0$, and $\til{Y}_1$ 
in the analogous way as done for $N_0, N_1$, $L_0$, and $Y_1$
of $\scr{I}$, then observe that $\til{x}_1 = x_1$, $\til{N_0} = N_0$, $\til{N}_1 = N_1 \cup \{n+1\}$. More, we may assume that $\til{X}_1 = X_1$ and $\til{Y}_1 = Y_1$,
and it is clear that $\mb{P}[\til{L}_{0} \le \ell -1 \mid Y_1 = y] = \mb{P}[L_{0} \le \ell -1 \mid \til{Y}_1 = y]$.
Thus, using the same argument for deriving \eqref{eqn:simplified_output_probability},
we can write
$\mb{P}[\text{$1$ from $\scr{\til{I}}$ is queried} \mid \til{X}_1 =1]$ as
\begin{align}
&b(\til{x}_1) \int_{0}^{1}\mb{P}[\til{L}_0 \le \ell -1 \mid \til{Y}_1 =y] \prod_{j \in \til{N}_1} (1 - y b(\til{x}_j) \til{x}_j)  dy. \notag \\
&= b(x_1) \int_{0}^{1} (1 - y x_n b(x_n/2)/2)^2 \prod_{j \in N_1 \setminus \{n\}} (1 - y b(x_j) x_j) \mb{P}[ L_0 \le \ell -1 \mid Y_1 =y] dy, \label{eqn:splitting_simplification}
\end{align}
where the second line follows from the aforementioned relations between $\scr{I}$
and $\scr{\til{I}}$. Now, applying \eqref{eqn:simplified_output_probability_restated} and
\eqref{eqn:splitting_simplification}, we can rewrite the left-hand side of \eqref{eqn:splitting_goal} as
\begin{equation}
    b(x_1) \int_{0}^{1} \left( (1 - y x_n b(x_n)) - (1 - y x_n b(x_n/2)/2)^2\right)                \prod_{j \in N_1 \setminus \{n\}} (1 - y b(x_j) x_j) \mb{P}[ L_0 \le \ell -1 \mid Y_1 =y] dy
\end{equation}
The final property of \Cref{property:first_order} ensures that there exists $y_{x_n} \in [0,1]$
such that the function $y \rightarrow (1 - y x_n b(x_n)) - (1 - y x_n b(x_n/2)/2)^2$
is non-negative for $y \in [0, y_{x_n}]$, and non-positive for $y \in (y_{x_n},1]$.
Thus, since both 
$y \rightarrow \prod_{j \in N_1 \setminus \{n\}} (1 - y b(x_j) x_j)$ 
and $y \rightarrow \mb{P}[ L_0 \le \ell -1 \mid Y_1 =y]$ are non-negative and non-increasing
functions of $y$, we may conclude that
\begin{align*}
& \int_{0}^{1} \left( (1 - y x_n b(x_n)) - (1 - y x_n b(x_n/2)/2)^2\right)                \mb{P}[ L_0 \le \ell -1 \mid Y_1 =y] \prod_{j \in N_1 \setminus \{n\}} (1 - y b(x_j) x_j) dy \\
& \ge \mb{P}[ L_0 \le \ell -1 \mid Y_1 =y_{x_n}] \prod_{j \in N_1 \setminus \{n\}} (1 - y_{x_n} b(x_j) x_j) \int_{0}^{1} \left( (1 - y x_n b(x_n)) - (1 - y x_n b(x_n/2)/2)^2\right) dy.
\end{align*}
This lower bound is non-negative, as$\int_{0}^{1} \left( (1 - y x_n b(x_n)) - (1 - y x_n b(x_n/2)/2)^2\right) dy \ge 0$, by the final property of \Cref{property:first_order}. Thus, since $b(x_1) \ge 0$, \eqref{eqn:splitting_goal} holds, and so the proof is complete.
\qed

\subsection{Proof of \Cref{property:first_order}} \label{pf:property_first_order}
The first property follows immediately from the definition of $b$.
In order to see the the final two property, consider the function
$$f(y):= (1 - y z b(z)) - (1 - y z b(z/2)/2)^2 = - y^2 z^2 b(z/2)^2/4 + y (z b(z/2) - z b(z)).$$
Now,
\begin{equation} \label{eqn:integral_check}
\int_{0}^1 f(y) dy =   (z b(z/2) - z b(z))/2 - z^2 b(z/2)^2/12,
\end{equation}
and for our choice of $b$, we verify that \eqref{eqn:integral_check} is non-negative
for all $z \in [0,1]$.
To see the final property, clearly $f$ is a quadratic function in $y$, and it is easy to see that its roots are $0$ and $4 (b(z/2) - b(z))/z b(z/2)^2$.
Now, since $b$ is decreasing, $b(z/2) - b(z) \ge 0$, and so the latter root
is non-negative. Finally, the leading term is $- z^2 b(z/2)^2/4 $, which is negative.
Thus, $f(y) \ge 0$ for $y \in [0, 4 (b(z/2) - b(z))/z b(z/2)^2]$,
and $f(y) \le 0$ for $y \in [0, 4 (b(z/2) - b(z))/z b(z/2)^2]$.
\qed

\section{Details of \Cref{sec:solve_LP}}

\subsection{Proof of \Cref{prop:reduction_to_dual_approximation}} \label{pf:prop:reduction_to_dual_approximation}
Given $\eps > 0$, consider the \textit{relaxed dual}:

\begin{align*}\label{LP:sw_dual_relaxed}
	\tag{r-D-LP-c}
	&\text{minimize} 
	& \sum_{u \in U} \alpha_{u} + \sum_{u \in U} \ell_u \gamma_u + \sum_{v \in V} \beta_{v}  \\
	&\text{s.t.} 
	& \beta_{v} \ge (1-\eps) \srew_{\bm{r},\bm{q}}(\bm{e}, \bm{\ac}) 
	- \sum_{i=1}^{|\bm{e}|} ( q_{e_i}(\ac_i) \alpha_{u_i} + \gamma_{u_i}) 
	\cdot \prod_{j < i} (1 - q_{e_j}(\ac_j))  && \forall v \in V, (\bm{e}, \bm{\ac}) \in \scr{C}_v:  \\
	&&&& \begin{aligned}
		% &\bm{e} = (e_1, \ldots , e_k), &
		% & \bm{a} = (a_1, \ldots , a_k), \\
		& \text{$e_i$} = (u_i,v) & & \forall i \in [|\bm{e}|]
	\end{aligned} \\
	&& \alpha_{u}, \gamma_u \ge 0 && \forall u \in U\\
	&& \beta_{v} \ge 0 && \forall v \in V
\end{align*}

(Here the only difference from \ref{LP:sw_dual} is that in the constraint
corresponding to $v \in V$ and $(\bm{e}, \bm{\ac})$,
the term $\srew_{\bm{r}, \bm{q}}(\bm{e}, \bm{\ac})$ is multiplied by $(1-\eps)$).
Now, a \textit{strong approximate separation oracle} is presented an arbitrary selection of dual variables
$((\alpha_u)_{u \in U}, (\gamma_u)_{u \in U}, (\beta_v)_{v \in V})$, and does one
of the following:
\begin{enumerate}
    \item Asserts that $((\alpha_u)_{u \in U}, (\gamma_u)_{u \in U}, (\beta_v)_{v \in V})$ is a feasible solution to \ref{LP:sw_dual_relaxed}. \label{item:approx_feasible}
    \item Outputs a constraint of \ref{LP:sw_dual} violated by $((\alpha_u)_{u \in U}, (\gamma_u)_{u \in U}, (\beta_v)_{v \in V})$. \label{item:violated}
\end{enumerate}
Observe that if $\eps =0$, then this is the usual definition of a strong separation oracle defined in Chapter 4 of \cite{jens2012}. It is shown in Theorem 5.1 of \cite{Jansen03} that if one can provide
a strong approximate separation oracle, then the ellipsoid algorithm can be used to efficiently
compute a solution to \ref{LP:social_welfare} with value at least $(1- \eps) \LPOPT(G)$.
More, its runtime will be its usual runtime, multiplied by the runtime
of the strong approximate separation oracle. 

For the remainder of the section, 
we explain how to establish
\ref{item:approx_feasible}. or \ref{item:violated}. for $((\alpha_u)_{u \in U}, (\gamma_u)_{u \in U}, (\beta_v)_{v \in V})$ under the assumptions of \Cref{prop:reduction_to_dual_approximation}.

Observe first that we can efficiently check whether $((\alpha_u)_{u \in U}, (\gamma_u)_{u \in U}, (\beta_v)_{v \in V})$ are non-negative (and indicate if this is false as required by \ref{item:violated}.). Thus, we assume that $((\alpha_u)_{u \in U}, (\gamma_u)_{u \in U}, (\beta_v)_{v \in V})$ 
are non-negative in what follows. Now, recall the edge rewards $\hat{\bm{r}} = (r_{u,v})_{u \in U, v \in V}$ from \eqref{eqn:new_reward_function_v},
whose definition ensures that for all $v \in V$ and $(\bm{e}, \bm{\ac}) \in \scr{C}_v$ where $e_{i}=(u_{i},v)$ for $i=1, \ldots ,|\bm{e}|$,
\begin{equation} \label{eqn:new_reward_restatement}
   \srew_{\hat{\bm{r}}, \bm{q}}(\bm{e}, \bm{\ac}) = \srew_{\bm{r}, \bm{q}}(\bm{e}, \bm{\ac}) - \sum_{i=1}^{|\bm{e}|} ( q_{e_i}(\ac_i) \alpha_{u_i} + \gamma_{u_i}) \cdot \prod_{j < i} (1 - q_{e_j}(\ac_j)).
\end{equation}
As in the assumption of \Cref{prop:reduction_to_dual_approximation}, suppose that for each $v \in V$, we can efficiently compute $(\bm{e}^{(v)}, \bm{\ac}^{(v)}) \in \scr{C}_v$
such that
\begin{equation} \label{eqn:approximation_oracle_appendix}
    \srew_{\bm{\hat{r}}, \bm{q}}(\bm{e}^{(v)}, \bm{\ac}^{(v)}) \ge (1- \eps) \max_{(\bm{e}, \bm{\ac}) \in \scr{C}_v} \srew_{\bm{\hat{r}}, \bm{q}}(\bm{e}, \bm{\ac}). 
\end{equation}
There are two cases to consider. First, suppose that $\srew_{\bm{\hat{r}}, \bm{q}}(\bm{e}^{(v)}, \bm{\ac}^{(v)}) > \beta_v$
for some $v \in V$. In this case, by \eqref{eqn:new_reward_restatement}, the constraint of \ref{LP:sw_dual}
corresponding to $v$ and $(\bm{e}^{(v)}, \bm{\ac}^{(v)}) \in \scr{C}_v$ is violated. Thus,
our strong approximate separation oracle may output this constraint and satisfy \ref{item:violated}.
On the other hand, suppose that $\srew_{\bm{\hat{r}}, \bm{q}}(\bm{e}^{(v)}, \bm{\ac}^{(v)}) \le \beta_v$
for all $v \in V$. Then, by \eqref{eqn:approximation_oracle_appendix}, we know that
for all $v \in V$ and $(\bm{e}, \bm{\ac}) \in \scr{C}_v$, where $e_{i}=(u_{i},v)$ for $i=1, \ldots ,|\bm{e}|$:
\begin{align}
    \beta_v &\ge (1- \eps) \srew_{\bm{\hat{r}}, \bm{q}}(\bm{e}, \bm{\ac}) \notag \\
            & = (1 - \eps)(\srew_{\bm{r}, \bm{q}}(\bm{e}, \bm{\ac}) - \sum_{i=1}^{|\bm{e}|} ( q_{e_i}(\ac_i) \alpha_{u_i} + \gamma_{u_i}) \cdot \prod_{j < i} (1 - q_{e_j}(\ac_j))) \notag \\
            &\ge (1 - \eps) \srew_{\bm{r}, \bm{q}}(\bm{e}, \bm{\ac}) -  \sum_{i=1}^{|\bm{e}|} ( q_{e_i}(\ac_i) \alpha_{u_i} + \gamma_{u_i}) \cdot \prod_{j < i} (1 - q_{e_j}(\ac_j))). \label{eqn:approximate_feasibility_app}
\end{align}
Here the first equality follows by \eqref{eqn:new_reward_restatement}, and the final inequality follows
since $((\alpha_u)_{u \in U}, (\gamma_u)_{u \in U}, (\beta_v)_{v \in V})$ are non-negative.
As such, by \eqref{eqn:approximate_feasibility_app}, in this case our approximate separation oracle may assert
that $((\alpha_u)_{u \in U}, (\gamma_u)_{u \in U}, (\beta_v)_{v \in V})$ is a feasible solution to \ref{LP:sw_dual_relaxed},
and thus satisfy \ref{item:approx_feasible}.
\qed

\subsection{Proof of \Cref{lem:feasible_IP_solution}} \label{pf:lem:feasible_IP_solution}
    To establish the feasibility of IP-TH, we construct an assignment that satisfies all the constraints. For each bucket $B_j$ that is a jump bucket, there is at most one edge $e^*_t$ assigned to it. Set $\xi^*_{j{e^*_t}} = 1$ and $\xi^*_{ji} = 0$ for all other $i$. Then for each stable bucket $B_j$, if it consists of edges $e^*_t,..., {e^*_{t+h}}$, we set $\xi^*_{je^*_t},...,\xi^*_{je^*_{t+h}} = 1$ and $\xi^*_{ji} = 0$ for all other $i$. These are all binary (\textit{constraint \eqref{eq:binary}}), each job is assigned to at most one machine (\textit{constraint \eqref{eq:assignment}}, each jump machine gets at most $1$ job assigned to it (\textit{constraint \eqref{eq:capacity}}), and at most $\ell$ jobs are assigned to machines (\textit{constraint{\eqref{eq:budget}}}). For \textit{constraint \eqref{eq:load}} we consider 2 cases:

    \textit{Case 1}: Jump Buckets: for a jump bucket $B_j$, let the edge assigned to it to be $e^*_p$. Let $a_{opt}$ be the optimal action for $e^*_p$ that maximizes the expected future value. By definition, we have: 
    $$ R^*_{e^*_p} - R^*_{e^*_{p+1}} = r_{e^*_p}(a_{opt}) \cdot q_{e^*_p}(a_{opt}) - R^*_{e^*_{p+1}} \cdot q_{e^*_p}(a_{opt}) $$
    $$ = (r_{e^*_p}(a_{opt}) - R^*_{e^*_{p+1}}) \cdot q_{e^*_p}(a_{opt}) $$
    $$ =(r_{e^*_p}(a_{opt}) - \text{BaseVal}(B_j)) \cdot q_{e^*_p}(a_{opt})  = \max_{a \in \scr{A}} \left\{(r_{e^*_p}(a) - \text{BaseVal}(B_j)) \cdot q_{e^*_p}(a)]\right\}$$
    Then we have:
    $$\sum_{j=1}^n  \max_{a\in \scr{A}} \left\{(r_{e^*_p}(a) - \text{BaseGuess}(B_i)) \cdot q_j(a)]\right\}\xi_{ij}  $$
    $$\ge \max_{a\in \scr{A}} \left\{(r_{e^*_p}(a) - \text{BaseGuess}(B_j))  \cdot q_{e_p}(a)]\right\} \geq \max_{a\in \scr{A}} \left\{(r_{e^*_p}(a) - \text{BaseVal}(B_j)) \cdot q_{e^*_p}(a)]\right\} $$
    
    $$ = R^*_{e^*_p} - R^*_{e^*_{p+1}}  \geq \text{DeltaVal}(B_j) \geq \text{DeltaGuess}(B_j)  $$
    \textit{Case 2: Stable Buckets}: Let $B_i$ be a stable bucket containing consecutive positions $\{p, p+1, \ldots, p+y\}$ with corresponding buyers $\{e^*_p, \ldots, {e^*_{p+y}}\}$. For each position $z \in \{p, \ldots, p+y\}$, denote the optimal action by $a^*_z$.

The aggregate load contribution is:
\begin{align*}
&\sum_{z=p}^{p+y} \max_{a\in \scr{A}} \{(r_{e^*_p}(a) - \text{BaseGuess}(B_i)) \cdot q_{e^*_z}(a)\} \\
&\geq \sum_{z=p}^{p+y} (r_{e^*_p}({a}^*_z) - \text{BaseGuess}(B_i)) \cdot q_{e^*_z}({a}^*_z)
\end{align*}
For each position $z$ in the stable bucket, the optimal action $a^*_z$ satisfies:
$$R^*_z - R^*_{z+1} = (r_{e^*_p}({a}^*_z) - R^*_{z+1}) \cdot q_{e^*_z}({a}^*_z)$$
Since all positions $q \in \{p, \ldots, p+y\}$ belong to the same stable bucket $B_i$, we have $R^*_{q+1} \geq \text{BaseVal}(B_i)$ for all such positions. 
Therefore, for each position $z$ in the stable bucket:
\begin{align*}
R^*_z - R^*_{z+1} &= (r_{e^*_p}({a}^*_z) - R^*_{z+1}) \cdot q_{e^*_z}({a}^*_z) \\
&\leq (r_{e^*_p}({a}^*_z) - \text{BaseVal}(B_i)) \cdot q_{e^*_z}({a}^*_z) \\
&\leq (r_{e^*_p}({a}^*_z) - \text{BaseGuess}(B_i)) \cdot q_{e^*_z}({a}^*_z)
\end{align*}
Summing over all positions in the stable bucket:

\begin{align*}
&\sum_{z=p}^{p+y} (r_{e^*_p}({a}^*_z) - \text{BaseGuess}(B_i)) \cdot q_{e^*_z}({a}^*_z) \\
&\geq \sum_{z=p}^{p+y} (R^*_z - R^*_{z+1}) \\
&= R^*_p - R^*_{p+y+1} \\
&= \text{DeltaVal}(B_i) \\
&\geq \text{DeltaGuess}(B_i)
\end{align*}
This completes the proof that \textit{constraint \eqref{eq:load}} is satisfied for all stable buckets, and thus IP-TH admits a feasible solution.
\qed

\subsection{Proof of \Cref{lem:ahead-schedule}} \label{pf:lem:ahead-schedule}
By the recursive definition:
$$\tilde{R}_{i_{\min}} = \max_{a\in \scr{A}} \left\{ r_{e_{i_{\min}}}(a) \cdot q_{e_{i_{\min}}}(a) + \tilde{R}_{i_{\min}+1} \cdot (1 - q_{e_{i_{\min}}}(a)) \right\}$$
Since we are ahead of schedule at position $i_{\min}$, we have $\tilde{R}_{i_{\min}+1} \geq \text{BaseGuess}(B^{-1}_{e_{i_{\min}}})$. Therefore:
\begin{align*}
\tilde{R}_{i_{\min}} &\geq \max_{a\in \scr{A}} \left\{ r_{e_{i_{\min}}}(a) \cdot q_{e_{i_{\min}}}(a) + \text{BaseGuess}(B^{-1}_{e_{i_{\min}}}) \cdot (1 - q_{e_{i_{\min}}}(a)) \right\}\\
&= \text{BaseGuess}(B^{-1}_{e_{i_{\min}}}) + \max_{a\in \scr{A}} \left\{ (r_{e_{i_{\min}}}(a) - \text{BaseGuess}(B^{-1}_{e_{i_{\min}}})) \cdot q_{e_{i_{\min}}}(a) \right\}
\end{align*}
The term $\max_{a\in \scr{A}} \left\{ (r_{e_{i_{\min}}}(a) - \text{BaseGuess}(B^{-1}_{e_{i_{\min}}})) \cdot q_{e_{i_{\min}}}(a) \right\}$ is exactly the load contribution $c_{i_{\min}, e_{i_{\min}}}$ of buyer $e_{i_{\min}}$ to bucket $B^{-1}_{e_{i_{\min}}}$ in our Santa Claus formulation.

When $B^{-1}_{e_{i_{\min}}}$ is a jump bucket, our capacity constraints ensure that edge $e_{i_{\min}}$ is the only one assigned, so by Corollary \ref{corr:eptas}:
$$c_{i_{\min}, e_{i_{\min}}} \geq (1-\epsilon) \cdot \text{DeltaGuess}(B^{-1}_{e_{i_{\min}}})$$
When $B^{-1}_{e_{i_{\min}}}$ is a stable bucket, we have $\text{DeltaGuess}(B^{-1}_{e_{i_{\min}}}) \leq \epsilon \cdot \text{OPT}$ since this bucket does not contain any reward jump of size $\geq \epsilon \cdot \text{OPT}$. In either case:
\begin{align*}
\tilde{R}_{i_{\min}} &\geq \text{BaseGuess}(B^{-1}_{e_{i_{\min}}}) + (1-\epsilon) \cdot \text{DeltaGuess}(B^{-1}_{e_{i_{\min}}}) - \epsilon \cdot \text{OPT}\\
&\geq \text{BaseVal}(B^{-1}_{e_{i_{\min}}}) + (1-\epsilon) \cdot \text{DeltaGuess}(B^{-1}_{e_{i_{\min}}}) - 2\epsilon \cdot \text{OPT}
\end{align*}
\qed

\subsection{Proof of \Cref{lem:behind-schedule}} \label{pf:lem:behind-schedule}
For each $i < i_{\min}$, by the recursive definition:
$$\tilde{R}_i - \tilde{R}_{i+1} = \max_{a\in \scr{A}} \left\{ (r_{e_{i}}(a) - \tilde{R}_{i+1}) \cdot q_{e_{i}}(a) \right\}$$

Since we are behind schedule at position $i < i_{\min}$, we have $\tilde{R}_{i+1} < \text{BaseGuess}(B^{-1}_{e_{i}})$, so:
$$\tilde{R}_i - \tilde{R}_{i+1} \geq \max_{a\in \scr{A}} \left\{ (r_{e_{i}}(a) - \text{BaseGuess}(B^{-1}_{e_{i}})) \cdot q_{e_{i}}(a) \right\}$$

Summing over all edges assigned to bucket $j$:
\begin{align*}
\sum_{i: B^{-1}_{e_{i}} = j} (\tilde{R}_i - \tilde{R}_{i+1}) &\geq \sum_{i: B^{-1}_{e_{i}} = j} \max_{a\in \scr{A}} \left\{ (r_{e_{i}}(a) - \text{BaseGuess}(B_j)) \cdot q_{e_{i}}(a) \right\}\\
&= \sum_{y: \xi_{jy} = 1} c_{jy}\\
&\geq (1-\epsilon) \cdot \text{DeltaGuess}(B_j)
\end{align*}
Thus:
\begin{align*}
\sum_{i < i_{\min}} (\tilde{R}_i - \tilde{R}_{i+1})
& \geq \sum_{j > B^{-1}_{e_{i_{\min}}}}  \sum_{i: B^{-1}_{e_{i}} = j} (\tilde{R}_i - \tilde{R}_{i+1})\\
&\geq (1-\epsilon) \cdot \sum_{j > B^{-1}_{e_{i_{\min}}}} \text{DeltaGuess}(B_j)\\
&\geq (1-\epsilon) \cdot \sum_{j > B^{-1}_{e_{i_{\min}}}} (\text{BaseVal}(B_{j+1}) - \text{BaseVal}(B_j)) - \epsilon^2 \cdot \mathcal{E} \cdot (2K+1)\\
&\geq (1-\epsilon) \cdot \sum_{j > B^{-1}_{e_{i_{\min}}}} (\text{BaseVal}(B_{j+1}) - \text{BaseVal}(B_j)) - 3\epsilon \cdot \text{OPT},
\end{align*}
where we've used that $K \leq 1/\epsilon$ and $\mathcal{E} \leq \text{OPT}$.

\end{document}